
\documentclass[letterpaper,final,twocolumn]{IEEEtran}  



\usepackage{amsmath}
\usepackage{amssymb,amsthm,graphicx,verbatim,enumitem,cite}
\usepackage{color}

\usepackage{algorithm,algorithmic}
\usepackage{subfigure}
\usepackage[colorlinks = true, linkcolor = blue, citecolor = 
blue]{hyperref}

\newcommand{\mycomment}[1]{\hfill \COMMENT{\texttt{#1}}}


\newtheorem{theorem}{Theorem}[section]
\newtheorem{proposition}[theorem]{Proposition}
\newtheorem{lemma}[theorem]{Lemma}
\newtheorem{corollary}[theorem]{Corollary}
\newtheorem{remark}[theorem]{Remark}

\newcommand{\real}{\ensuremath{\mathbb{R}}}
\newcommand{\realpositive}{\ensuremath{\mathbb{R}_{>0}}}
\newcommand{\realnonnegative}{\ensuremath{\mathbb{R}_{\ge 0}}}
\newcommand{\integerspositive}{{\mathbb{N}}}
\newcommand{\integersnonnegative}{{\mathbb{N}}_0}

\newcommand{\longthmtitle}[1]{\mbox{}\textit{(#1).}}

\newcommand{\intersect}{\ensuremath{\operatorname{\cap}}}

\DeclareMathOperator*{\argmin}{argmin}

\newcommand{\Bc}{\mathcal{B}}

\newcommand{\Sc}{\mathcal{S}}
\newcommand{\Vc}{\mathcal{V}}
\newcommand{\Xc}{\mathcal{X}}

\newcommand{\Enorm}[1]{\|#1\|_{2}}
\newcommand{\Infnorm}[1]{\|#1\|_{\infty}}
\newcommand{\uline}[1]{\underline{#1}}
\newcommand{\volume}[1]{\operatorname{vol}(#1)}
\newcommand{\trace}[1]{\operatorname{tr}(#1)}

\newcommand{\triggerCh}{h_{\operatorname{ch}}}

\newcommand{\triggerChbar}{\bar{h}_{\operatorname{ch}}}
\newcommand{\rhofun}[2]{\rho_{#1}(#2)}
\newcommand{\bth}{\bar{\theta}}

\newcommand{\setdef}[2]{\left\{#1 \, :\, #2\right\}}

\newcommand{\identity}[1]{\mathrm{I}_{#1}}
\newcommand{\zeros}[1]{\mathrm{0}_{#1}}
\newcommand{\realpart}[1]{\mathrm{Re}({#1})}



\newcommand{\oprocendsymbol}{\hbox{$\bullet$}}
\newcommand{\oprocend}{\relax\ifmmode\else\unskip\hfill\fi\oprocendsymbol}

\parskip = .75ex

\graphicspath{{figures/}}

\begin{document}


\title{Event-Triggered Stabilization of Linear Systems Under Bounded
  Bit Rates\thanks{A preliminary version of this work appeared at the
    2014 IEEE Conference on Decision and Control.}}

\author{Pavankumar Tallapragada \qquad Jorge Cort{\'e}s
  \thanks{Pavankumar Tallapragada and Jorge Cort{\'e}s are with the
    Department of Mechanical and Aerospace Engineering, University of
    California, San Diego {\tt\small
      \{ptallapragada,cortes\}@ucsd.edu}}%
}

\maketitle

\begin{abstract}
  This paper addresses the problem of exponential practical
  stabilization of linear time-invariant systems with disturbances
  using event-triggered control and bounded communication bit rate.
  We consider both the case of instantaneous communication with finite
  precision data at each transmission and the case of
  non-instantaneous communication with bounded communication rate.
  Given a prescribed rate of convergence, the proposed event-triggered
  control implementations opportunistically determine the transmission
  instants and the finite precision data to be transmitted on each
  transmission. We show that our design exponentially practically
  stabilizes the origin while guaranteeing a uniform positive lower
  bound on the inter-transmission and inter-reception times, ensuring
  that the number of bits transmitted on each transmission is upper
  bounded uniformly in time, and allowing for the possibility of
  transmitting fewer bits at any given time if more bits than
  prescribed were transmitted earlier.  We also characterize the
  necessary and sufficient average data rate for exponential practical
  stabilization. Several simulations illustrate the results.
\end{abstract}

\vspace*{-1.5ex}
\section{Introduction}

The digital nature of communication in networked control systems
naturally induces sampling and quantization of signals.  The
increasing ubiquity of these systems, particularly in
resource-constrained domains where communication channels have low,
time-varying, and possibly unreliable channel capacity, has brought to
the forefront the need for integrated and systematic design
methodologies that go beyond adhoc approaches.  This paper is a
contribution to the modern body of research that seeks to
fundamentally address the problem of control under constrained
resources. Specifically, we seek to combine the strengths of
event-triggered control and information theory to efficiently
stabilize linear time-invariant systems under communication
constraints.

\emph{Literature Review:} The need for 
integration 
between computing, communication, and control in the study of
cyberphysical systems cannot be
overemphasized~\cite{KDK-PRK:12,JS-XK-GK-NK-PA-VG-BG-JB-SW:12}.  The
present work builds on two areas of research that address the
stabilization of control systems under limited information from
different and complementary perspectives.  In the
information-theoretic approach to control under communication
constraints, the focus is on determining sufficient and necessary
conditions on the \emph{bit data rates} (i.e., the number of bits
transmitted over possibly multiple transmissions during an arbitrary
time interval) that guarantee stabilization under varying assumptions
on the communication channels.  The
works~\cite{GNN-FF-SZ-RJE:07,MF-PM:14} provide comprehensive accounts
of this by now vast literature, and we highlight next a few references
most relevant to the discussion here.  Early data rate results
appeared in~\cite{GNN-RJE:00,GNN-RJE:04,ST-SM:04}, which employ the
idea of countering the information generated (the growth in the
uncertainty of the system state) with a sufficiently high data rate of
the encoded feedback.  This approach has been successful in providing
tight necessary and sufficient conditions on the data rate of the
encoded feedback for asymptotic stabilization in the discrete-time
setting. Similar ideas have been used to provide data rate theorems
also for stochastic rate channels~\cite{NM-MD-NE:06} and extended to
vector systems and time-varying feedback
channels~\cite{PM-MF-SD-GNN:09} and Markov feedback
channels~\cite{PM-LC-MF:13}. In the continuous-time setting, the
problem has been mainly studied under either periodic sampling or
aperiodic sampling with known upper and lower bounds on the sampling
period for single input systems~\cite{LL-JB:04,LL-JB:07}, nonlinear
feedforward systems~\cite{CDP:05}, and switched linear
systems~\cite{DL:14}. In this context, it is not known if and how a
best sampling period may be designed or if state-based aperiodic
sampling can provide any advantage in efficiency and performance. With
a few exceptions, see e.g.,~\cite{DL:14}, the works above do not
characterize the convergence rates or explore the problem of
guaranteeing a desired performance.

Event-triggered control, instead, trades computation and
decision-making for less communication, sensing, or actuation effort,
while guaranteeing a desired level of performance. This literature,
see e.g.~\cite{PT:07,XW-MDL:11,WPMHH-KHJ-PT:12} and references
therein, exploits the tolerance to measurement errors to design
goal-driven state-based aperiodic sampling for the efficient use of
the system resources. The main focus of this body of work is on
minimizing the number of updates while guaranteeing the feasibility of
the resulting real-time implementation.  When interpreted in terms of
communication, this results in a paradigm where one seeks to minimize
the number of transmissions while largely ignoring the quantization
aspect and allowing the data at each transmission to be of infinite
precision.  Among the few exceptions, we mention event-triggered
schemes with static logarithmic quantization~\cite{PT-NC:12,EG-PJA:13}
and dynamic quantization~\cite{DL-JL:10, LL-XW-MDL:12, LL-XW-MDL:12a,
  LL-BH-MDL:12, YS-XW:14}.  In~\cite{PT-NC:12}, events are defined as
the system state crossing static quantization cells and communication
is assumed to be instantaneous and there are no
disturbances. \cite{EG-PJA:13} considers the problem with modeling
errors and communication delays. Both these papers do not explicitly
study the notion of \emph{communication bit rate} (i.e., the number of
bits per transmission).
In~\cite{DL-JL:10,LL-XW-MDL:12,LL-XW-MDL:12a,LL-BH-MDL:12}, the events
are defined as the infinity norm of the encoding error crossing a
fixed or piecewise-constant threshold. \cite{DL-JL:10} considers
instantaneous communication and external disturbances, although the
use of a fixed threshold in the event-triggering condition results in
practical stability even under no disturbance. In addition, if the
channel imposes a bound on the communication bit rate, then it also
affects the ultimate bound on the state. \cite{LL-XW-MDL:12} addresses
the problem for nonlinear systems and with communication delays,
while~\cite{LL-XW-MDL:12a, LL-BH-MDL:12, YS-XW:14} extend these
results to the case with external disturbance.  All these works
guarantee a positive lower bound on the inter-transmission times,
while~\cite{DL-JL:10, LL-XW-MDL:12, LL-XW-MDL:12a, LL-BH-MDL:12,
  YS-XW:14} also provide a uniform bound on the communication bit
rate. However, these references do not address the inverse problem of
triggering and quantization given a limit on the communication bit
rate imposed by the channel.  While guarantees on the uniform
boundedness of the communication bit rate are useful, they do not
characterize either necessary or sufficient conditions on the required
data rates, i.e., the number of bits averaged over a finite or
infinite time horizon.  In fact, this is a shortcoming of the
event-triggered control literature as a whole, where the availability
of such analytical results would help in the design of networked
control systems. Finally, the common underlying approach in the
event-triggered literature is based on the notion of
input-to-state-stability with respect to measurement errors for both
event-triggering and quantization. This is in contrast with the
information-theoretic data rate approach to quantization and encoding
adopted here.

\emph{Statement of Contributions:} This paper designs event-triggered
controllers for linear-time invariant systems under bounded
communication bit rate.  We focus on the control goal of exponential
practical stabilization, in the presence of disturbance and with a
prescribed rate of convergence. The first contribution is the
identification of a necessary condition on the average data rate
required for all solutions of a linear-time invariant system to
exponentially converge with a prescribed convergence rate. Our second
set of contributions pertain to the design of event-triggered
controllers that guarantee exponential convergence with a desired
performance by adjusting the communication rate in accordance with
state information in an opportunistic fashion.  We consider
increasingly realistic scenarios, ranging from instantaneous
transmissions with arbitrary, but finite communication rate, through
instantaneous transmissions with uniformly bounded communication rate,
to finally non-instantaneous transmissions with arbitrary bounded
communication rate imposed by the channel. In all cases, our design
guarantees the existence of a uniform positive lower bound on
inter-transmission and inter-reception times, and ensures that the
number of bits transmitted at each transmission is upper bounded.  An
overarching contribution of the paper is the introduction of the
information-theoretic data rate approach to quantization and encoding
to complement event-triggering for data rate limited feedback control.
From an event-triggered control perspective, our key contribution is
going beyond the paradigm of infinite precision at each transmission
and adopting the information-theoretic approach to quantization,
encoding, and triggering.  This allows us to characterize necessary
and sufficient data rates averaged over time, and quantify the
capability to transmit fewer bits if more bits than prescribed were
transmitted earlier.  From an information-theoretic perspective, our
key contribution is the efficient use of the communication resources
by exploiting state-based opportunistic sampling. This allows us to
tune the operation of the control system to the desired level of
performance and guarantee a desired convergence rate.  In order to
communicate the main ideas effectively, we use a simple encoding
scheme and assume that the encoder and the decoder know the
communication delays in the case of non-instantaneous
communication. These are aspects that may be improved upon within the
framework of the paper. Finally, we believe the approach laid out here
opens up numerous avenues for further research at the intersection of
information theory, control, and stabilization.

\emph{Organization:} Section~\ref{sec:prob_stat} formally states the
asymptotic stabilization problem under event-triggered control and
finite communication bit rate. Section~\ref{sec:necess_cond}
identifies a necessary condition on the average data rate required for
all solutions to asymptotically converge with a prescribed convergence
rate. Sections~\ref{sec:ET_bnd_dat_rate}
and~\ref{sec:ET_bnd_dat_rate-non-inst} present our event-triggered
control design with bounded communication rate under instantaneous and
non-instantaneous communication, respectively.  Section~\ref{sec:sim}
presents simulation results. Finally, Section~\ref{sec:conc} gathers
our conclusions and ideas for future work. Proofs of certain auxiliary
lemmas are presented in the appendix for smoother readability.

\emph{Notation:} We let $\real$, $\realnonnegative$,
$\integerspositive$, and $\integersnonnegative$ denote the set of
real, nonnegative real, positive integer, and nonnegative integer
numbers, respectively.  We let $\identity{n}$ and $\zeros{n} \in
\real^{n \times n}$ denote the identity and zero matrix, respectively,
of dimension $n$. For a matrix $A$, let $\sigma_A$ denote the spectrum
of the matrix $A$ and $\realpart{\sigma_A}$ denote the set of real
parts of the eigenvalues of $A$. For a symmetric matrix $A \in
\real^{n\times n}$, we let $\lambda_m(A)$ and $\lambda_M(A)$ denote
its smallest and largest eigenvalues, respectively. For a symmetric
positive definite matrix $P \in \real^{n\times n}$ and all $ x \in
\real^n$,
\begin{align}\label{eq:quadratic-form-ineq}
  \sqrt{\lambda_m(P)} \Enorm{x} \leq \sqrt{x^T P x} \leq
  \sqrt{\lambda_M(P)} \Enorm{x} .
\end{align}
Given $A_1, A_2 \in \real^{n \times n}$, $A_1 \prec A_2$ denotes that
$A_1 - A_2$ is negative definite. Similarly, the symbols $\preceq$,
$\succ$ and $\succeq$ stand for negative semi-definiteness, positive
definiteness and positive semi-definiteness, respectively. We denote
by $\Enorm{.}$ and $\Infnorm{.}$ the Euclidean and infinity norm of a
vector, respectively, or the corresponding induced norm of a
matrix. For $A \in \real^{n \times m}$, we let $A^+$ denote the
pseudoinverse. For any matrix norm $\|.\|$, note that $\|e^{A \tau}\| \leq
e^{\|A\| \tau}$. For a function $f: \real \mapsto \real^{n}$
and any $t \in \real$, we let $f(t^-)$ denote the limit from the left,
$\displaystyle \lim_{s \uparrow t} f(s)$.

\section{Problem Statement}\label{sec:prob_stat}

Consider a 
linear time-invariant
control system,
\begin{align}\label{eqn:plant_dyn}
  \dot{x}(t) = A x(t) + B u(t) + v(t),
\end{align}
where $x \in \real^n$ denotes the state of the plant, $u \in \real^m$
is the control input and $v \in \real^n$ is an unknown
disturbance. Here, $A \in \real^{n \times n}$ and $B \in \real^{n
  \times m}$ are the system matrices.  We assume that the pair $(A,B)$
is stabilizable, i.e., there exists a control gain matrix $K \in
\real^{m \times n}$ such that the matrix $\bar{A} = A+BK$ is Hurwitz,
and that the disturbance is uniformly bounded by a known constant,
i.e.,
\begin{equation}\label{eqn:disturb_bound}
  \Enorm{v(t)} \leq \nu, \ \forall t \in [0, \infty) .
\end{equation}
Under these assumptions, $u(t) = K x(t)$ renders the origin
of~\eqref{eqn:plant_dyn} globally exponentially practically stable.

The plant is equipped with a sensor and an actuator, which are not
co-located. We assume that the sensor can measure the state exactly,
and that the actuator can exert the input to the plant with infinite
precision. However, the sensor has the ability to transmit state
information to the controller at the actuator only at discrete time
instants of its choice and using only a finite number of bits. In this
sense, we refer to the sensor as the encoder and the actuator as the
decoder. We let $\{ t_k \}_{k \in \integerspositive}$ be the sequence
of \emph{transmission (or encoding) times} at which the sensor decides
to sample, encode, and transmit the plant state.  We denote by $n p_k$
the number of bits used to encode the plant state at the transmission
time~$t_k$.  The process of encoding, transmission by the sensor,
reception of a complete packet of encoded data at the controller, and
decoding may take non-zero time.  We let $\{r_k\}_{k \in
  \integerspositive}$ be the sequence of \textit{reception (or update)
  times} at which the decoder receives a complete packet of data,
decodes it, and updates the controller state. Therefore, $r_k \ge
t_k$. The $k^\text{th}$ communication time $\Delta_k \triangleq r_k -
t_k$ is then a function of $t_k$ and the packet size (of $n p_k$ bits)
represented by $p_k$,
\begin{equation*}
  \Delta_k = r_k - t_k \triangleq \Delta( t_k, p_k ) .
\end{equation*}
In general, the time $\Delta_k$ could include communication time,
computation time and other delays. We refer to the case $\Delta \equiv
0$ by instantaneous communication.  To keep things simple, we assume
the encoder and the decoder have synchronized clocks and synchronously
update their states at update times $\{r_k\}_{k \in
  \integerspositive}$. The latter assumption is justified in
situations where $t \mapsto \Delta(t,p)$ is independent of $t$ or
where the encoder and decoder send short synchronization signals to
indicate the start of encoding and the end of decoding,
respectively. 

We use dynamic quantization for finite-bit transmissions from the
encoder to the decoder. In dynamic quantization, there are two
distinct phases: the zoom-out stage, during which no control is
applied while the quantization domain is expanded until it captures
the system state at time $r_0 = t_0 \in \realnonnegative$; and the
zoom-in stage, during which the encoded feedback is used to
asymptotically stabilize the system. A detailed description of the
zoom-out stage can be found in the literature,
e.g.,~\cite{DL:03}. Here, we focus exclusively on the zoom-in stage,
i.e., for $t \geq t_0$ for which we use a hybrid dynamic controller.
We assume that both the encoder and the decoder have perfect knowledge
of the plant system matrices.  The state of the encoder/decoder is
composed of the controller state $\hat{x} \in \real^n$ and an upper
bound $d_e \in \realnonnegative$ on the norm of the encoding error
$x_e \triangleq x - \hat{x}$.  Thus, the actual input to the plant is
given by $u(t) = K \hat{x}(t)$. During inter-update times, the state
of the dynamic controller evolves as
\begin{subequations}\label{eqn:x_hat}
  \begin{align}\label{eqn:xhat_evolve}
    \dot{\hat{x}}(t) = A \hat{x}(t) + B u(t) = \bar{A} \hat{x}(t),
    \quad t \in [r_k, r_{k+1}) .
  \end{align}
  Let the encoding and decoding functions at $k^{\text{th}}$ iteration
  be represented by $q_{E, k}: \mathbb{R}^n \times \mathbb{R}^n
  \mapsto G_k$ and $q_{D, k}: G_k \times \mathbb{R}^n \mapsto
  \mathbb{R}^n$, respectively, where $G_k$ is a finite set of
  $2^{n p_k}$ symbols. At $t_k$, the encoder encodes the plant state as
  $z_{E,k} \triangleq q_{E,k}(x(t_k), \hat{x}(t_k^-))$, where
  $\hat{x}(t_k^-)$ is the controller state just prior to the encoding
  time $t_k$, and sends it to the controller. This signal is decoded
  as $z_{D,k} \triangleq q_{D,k}(z_{E,k}, \hat{x}(t_k^-))$ by the
  decoder at time $r_k$. Then at the update time $r_k$, the sensor 
  and the controller update $\hat{x}$ using the jump map,
  \begin{align}\label{eqn:x_hat_jmp_nominal}
    \hat{x}(r_k) & = e^{\bar{A} \Delta_k} \hat{x}(t_k^-) + e^{A
      \Delta_k} (z_{D,k} - \hat{x}(t_k^-)) \notag
    \\
    & \triangleq q_k(x(t_k), \hat{x}(t_k^-)) .
  \end{align}
\end{subequations}
We use the shorthand notation $q_k: \mathbb{R}^n \times \mathbb{R}^n
\mapsto \mathbb{R}^n$ to represent the quantization that occurs as a
result of the finite-bit coding. We allow the quantization domain, the
number of bits and the resulting quantizer, $q_k$, at each
transmission instant $t_k \in \realnonnegative$ to be variable. Note
that the evaluation of the map $q_k$ is inherently from the encoder's
perspective because it depends on the plant state $x(t_k)$, which is
unknown to the decoder. Also, while the encoder could store
$\hat{x}(t_k^-)$, the decoder has to infer its value if
$\Delta_k>0$. We detail the specifics of the decoder's procedure to
implement~\eqref{eqn:x_hat_jmp_nominal} when communication is not
instantaneous later.

The evolution of the plant state $x$ and the encoding error~$x_e$ on
the time interval $[r_k, r_{k+1})$ can be written as
\begin{subequations}\label{eq:dynamics-x-xe}
  \begin{align}
    \dot{x}(t) &= \bar{A} x(t) - BK x_e(t) + v(t), 
    \label{eqn:plant_dyn_enc_err}
    \\
    \dot{x}_e(t) &= A x_e(t) + v(t). \label{eqn:enc_err_dyn}
  \end{align}
\end{subequations}
Note that while the controller state $\hat{x}$ is known to both the
encoder and the decoder, the plant state (equivalently, the encoding
error $x_e$) is known only to the encoder. However, at $t_0$, if a
bound on $\Infnorm{x_e(t_0)}$ is available, then both the encoder and
the decoder can compute a bound $d_e(t)$ on $\Infnorm{x_e(t)}$ for any
$t \in \realnonnegative$, as we explain later.

Finally, in order to formalize the control goal, we select an
arbitrary symmetric positive definite matrix $Q \in \real^{n \times
  n}$. Because $\bar{A}$ is Hurwitz, there exists a symmetric positive
definite matrix $P$ that satisfies the Lyapunov equation
\begin{align}\label{eq:Lyap-eq}
  P \bar{A} + \bar{A}^T P = - Q.
\end{align}
Consider then the associated candidate Lyapunov function $x \mapsto
V(x) = x^T P x$. Given a desired ``control performance''
\begin{equation}\label{eqn:Vd_t}
  V_d(t) = ( V_d(t_0) - V_0) e^{ -\beta (t - t_0)} + V_0
\end{equation}
with $V_0 \geq 0$ (the steady state value of $V_d$) and $\beta > 0$
(rate of convergence) constants, the \emph{control objective} is as
follows: recursively determine the sequence of transmission times
$\{t_k\}_{k \in \integerspositive} \subset \realpositive$ and encoded
messages $\hat{x}(t_k)$ so that $V(x(t)) \leq V_d(t)$ holds for all $t
\geq t_0$, while also ensuring that the inter-transmission times $\{
t_k - t_{k-1} \}_{k \in \integerspositive}$ are uniformly lower
bounded by a positive quantity and that the number of bits transmitted
at any instant is uniformly upper bounded.  We structure our solution
to this problem in several stages. Section~\ref{sec:necess_cond}
presents a necessary condition on the average data rate required to
meet the control objective under the assumption of zero
disturbance. In Section~\ref{sec:ET_bnd_dat_rate} we address the
problem under instantaneous communication. Finally, we address the
problem in all its generality in
Section~\ref{sec:ET_bnd_dat_rate-non-inst}.

\section{Lower Bound on the Necessary Data
  Rate}\label{sec:necess_cond}

Here we seek to determine the amount of information, in terms of the
number of bits transmitted, necessary to meet the control goal stated
in Section~\ref{sec:prob_stat} for arbitrary initial conditions when
no disturbances are present and communication is instantaneous.  In
the presence of unknown disturbances and/or non-instantaneous
communication, the necessary data rate is at least as much as in the
case treated here, so the necessary condition also holds in those
cases. For convenience, let $\Bc(t,t_0)$ denote the number of bits
transmitted in the time interval $[t_0, t]$. We are also interested in
characterizing the data rate (i.e., the average number of bits
transmitted) asymptotically,
\begin{equation*}
  R_{\text{as}} \triangleq \lim_{t \rightarrow \infty} \frac{\Bc(t,t_0)}{t 
    - t_0}.
\end{equation*}

Since encoding is not exact, the decoder at the controller has
knowledge of the plant state only up to some set $\Sc(t) \subset
\real^n$, i.e., $x(t) \in \Sc(t)$. We refer to $\Sc(t)$ as the
\emph{state uncertainty set} at time $t$. Equivalently, the decoder
has knowledge of the encoding error $x_e(t)$ only up to some set $E(t)
\subset \real^n$, i.e., $x_e(t) \in E(t)$. Because $\hat{x}$ is known
to both the encoder and the decoder, $\Sc(t)$ is simply obtained as a
coordinate shift of the set $E(t)$,
\begin{equation*}
  \Sc(t) = \{ \xi \in \real^n : \xi = \hat{x}(t) + \xi_e, \ \xi _e \in
  E(t) \} .
\end{equation*}
Since $x_e(t_k) \in E(t_k)$ for each $k \in \integersnonnegative$,
then equation~\eqref{eqn:enc_err_dyn}, with $v(t) \equiv 0$, implies
that, for $ t \in [t_k, t_{k+1})$,
\begin{align}\label{eq:E}
  E(t) = \{ \xi \in \real^n : \xi = e^{A(t-t_k)} \xi_0, \ \xi_0 \in
  E(t_k) \},
\end{align}
where $E(t_0)$ is known to the encoder at the end of the zoom-out
stage of the dynamic quantization. If $A$ is not Hurwitz, then this
set grows with time unless some new information is communicated to the
controller.  To meet the specified control goal, the idea is to keep
the encoding error set $E(t)$ sufficiently small at all times by
having the sensor transmit information to the controller at the time
instants $t_k$.

\begin{remark}\longthmtitle{Reduction in the Bound on the Encoding 
    Error with Communication}\label{rem:E_tk_minmax_bnd}
  {\rm Suppose the sensor encodes the state $x(t_k)$ at $t_k$ using
    $n p_k$ bits by partitioning the set $E(t_k^-)$ (or equivalently
    $\Sc(t_k^-)$ ) into $2^{n p_k}$ subsets in a predetermined manner. The
    string of $n p_k$ bits informs the decoder the specific subset that
    $x(t_k)$ lies in. Further, suppose that $\hat{x}(t_k)$ is chosen
    as a nominal point of $\Sc(t_k)$ according to some predetermined
    rule. Then, note that there is some $x_e(t_k) \in E(t_k^-)$ such
    that, after performing the quantization,
    \begin{align*}
      \volume{E(t_k)} \geq \frac{\volume{E(t_k^-)}}{2^{n p_k}} ,
    \end{align*}
    where $\volume{S}$ denotes the volume of the set~$S$. The equality
    is achieved when the quantization (partitioning of the
    quantization domain) is uniform.} \oprocend
\end{remark}

The following result precisely characterizes the number of bits that
\textit{must} be transmitted to make it possible for the set $\Sc(t)$
(which has the same volume as $E(t)$) to be contained in
$\mathcal{V}_d(t) = \{ \xi \in \mathbb{R}^n : V(\xi) \leq V_d(t) \}$
as a means to ensure for \textit{every} solution satisfying $V(x(t_0))
\leq V_d(t_0)$ at time $t_0$ to also satisfy $x(t) \in
\mathcal{V}_d(t)$ for all $t \ge t_0$. Note that $\Vc_d(t)$ is a
sub-level set of the quadratic function $V(x) = x^T P x$. Thus,
$\Vc_d(t)$ is an $n$-dimensional ellipsoid, which by expressing as a
linear transformation of an $n$-sphere of radius $\sqrt{V_d(t)}$ gives
its volume to be
\begin{equation}\label{eqn:vol_Vc_d}
  \volume{\Vc_d(t)} = c_P ( V_d(t) )^{\frac{n}{2}}
\end{equation}
with
\begin{equation*}
  c_P \triangleq \sqrt{\det(P^{-1})} \frac{ \pi^{n/2} }{
    \Gamma(\frac{n}{2} + 1) } ,
\end{equation*}
where $\Gamma$ is the gamma function.  We are now ready to state the
result.

\begin{proposition}\longthmtitle{Necessary Number of Bits Transmitted
    and Asymptotic Data Rate}\label{prop:necs_rate} 
  Consider the system~\eqref{eqn:plant_dyn}, with $\min
  \realpart{\sigma_{A + \beta \identity{n}}} \geq 0$, $v(t) \equiv 0$
  and $V_0 = 0$, and under the feedback law $u(t) = K \hat{x}(t)$,
  where $t\mapsto \hat{x}(t)$ evolves according to~\eqref{eqn:x_hat}.
  A necessary condition for all solutions satisfying $V(x(t_0)) \leq
  V_d(t_0)$ at time $t_0$ to satisfy $V(x(t)) \leq V_d(t)$ for $t \ge
  t_0$ is
  \begin{multline}\label{eqn:Bc_t}
    \Bc(t,t_0) \geq \Big( \trace{A} + \frac{n \beta}{2} \Big) \log_2
    (e) (t-t_0)
    \\
    \quad + \log_2 \left( \frac{\volume{E(t_0)}}{c_P ( V_d(t_0)
        )^{\frac{n}{2}}} \right) .
  \end{multline}
  Consequently, $ R_{\text{as}} \geq \big( \trace{A} + \frac{n
    \beta}{2} \big) \log_2 (e) $.
\end{proposition}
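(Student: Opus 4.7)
The plan is to track how the volume of the encoding error set $E(t)$ evolves, and to compare it to the volume of the target sub-level set $\Vc_d(t)$ at every time $t \ge t_0$. The three ingredients are: (a) a lower bound on $\volume{E(t)}$ coming from the system dynamics and the finite precision of each transmission, (b) an upper bound on $\volume{E(t)}$ coming from the control goal, and (c) the explicit form of $V_d(t)$ when $V_0=0$.

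First I would argue that meeting the control goal for \emph{every} admissible solution forces the containment $\Sc(t)\subseteq\Vc_d(t)$ at all $t\ge t_0$. Indeed, the decoder cannot distinguish among states in $\Sc(t)$ and therefore applies the same input along any trajectory consistent with $\Sc(t)$; if there were a point $\xi\in\Sc(t)\setminus\Vc_d(t)$, the trajectory passing through~$\xi$ would violate $V(x(t))\le V_d(t)$. Since $\Sc(t)$ is a coordinate shift of $E(t)$, this yields $\volume{E(t)}\le\volume{\Vc_d(t)}=c_P\,V_d(t)^{n/2}$. Next, between transmissions $E(t)$ is the image of $E(t_k)$ under the linear flow $e^{A(t-t_k)}$ by~\eqref{eq:E}, so Liouville's formula gives $\volume{E(t)}=e^{\trace{A}(t-t_k)}\volume{E(t_k)}$. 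At each transmission instant $t_k$, Remark~\ref{rem:E_tk_minmax_bnd} gives the best-case contraction $\volume{E(t_k)}\ge 2^{-n p_k}\volume{E(t_k^-)}$. Chaining these relations across all transmissions in $[t_0,t]$ yields
\begin{align*}
  \volume{E(t)} \ge e^{\trace{A}(t-t_0)}\,2^{-\Bc(t,t_0)}\volume{E(t_0)}.
\end{align*}

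Combining this with the upper bound from Step~(a), substituting $V_d(t)=V_d(t_0)e^{-\beta(t-t_0)}$ (which holds since $V_0=0$), and taking $\log_2$ on both sides gives~\eqref{eqn:Bc_t} after rearrangement, with the $\frac{n\beta}{2}\log_2(e)$ term coming from the exponential decay of $V_d(t)^{n/2}$. The asymptotic bound then follows by dividing~\eqref{eqn:Bc_t} by $(t-t_0)$ and letting $t\to\infty$, whereby the additive constant drops out. I would expect the main obstacle to be the necessity argument in Step~(a): converting the pointwise control requirement into a set-containment statement relies on an information-theoretic adversarial argument (the worst compatible initial condition), which must be stated carefully given that both the encoder and the decoder choose their maps based on $E(t_k^-)$. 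The hypothesis $\min\realpart{\sigma_{A+\beta\identity{n}}}\ge 0$ plays no role in deriving the inequality itself but ensures the lower bound is non-trivial by preventing any eigendirection of $A$ from contracting faster than the prescribed rate $\beta$, so that communication truly is needed in every direction.
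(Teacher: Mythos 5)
Your proposal is correct and follows essentially the same route as the paper's proof: the necessity of the containment $\Sc(t)\subseteq\Vc_d(t)$, the volume growth $e^{\trace{A}(t-t_k)}$ between transmissions, the per-transmission contraction bound $\volume{E(t_k)}\ge 2^{-np_k}\volume{E(t_k^-)}$ from Remark~\ref{rem:E_tk_minmax_bnd}, and the comparison with $\volume{\Vc_d(t)}=c_P V_d(t_0)^{n/2}e^{-n\beta(t-t_0)/2}$. Your closing remarks on the adversarial worst-case initial condition and the role of the spectral hypothesis match the paper's surrounding discussion as well.
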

\begin{proof}
  The main idea behind the proof is that in order for all solutions
  with initial conditions such that $V(x(t_0)) \leq V_d(t_0)$ to
  satisfy $V(x(t)) \leq V_d(t)$ for $t \ge t_0$ then it is necessary
  that the state uncertainty set $\Sc(t) \subseteq \Vc_d(t)$ at each
  time $t \geq t_0$. In particular, this implies that the volume of
  the set $\Sc(t)$ (or equivalently that of the coordinate-shifted
  $E(t)$) must be no greater than that of $\Vc_d(t)$, i.e., it is
  necessary that $\volume{E(t)} \leq \volume{\Vc_d(t)}$ for all $t
  \geq t_0$.

  Given a sequence of transmission times $\{t_k\}_{k \in
    \integerspositive}$, we deduce from~\eqref{eq:E} that for $t \in
  [t_k,t_{k+1})$,
  \begin{align*}
    \frac{\volume{E(t)}}{\volume{E(t_k)}} = \det \big( e^{A (t - t_k)}
    \big) = e^{\trace{A}(t-t_k)}.
  \end{align*}
  Further, if $\Bc(t,t_0)$ number of bits are transmitted in the time
  interval $[t_0, t]$, then as a consequence of
  Remark~\ref{rem:E_tk_minmax_bnd} it follows that there exists some
  $x(t_0)$ such that
  \begin{align}\label{eq:auxx}
    \volume{E(t)} \geq \frac{e^{\trace{A}(t-t_0)}
      \volume{E(t_0)}}{2^{\Bc(t,t_0)}}.
  \end{align}  
  Next, using $V_d(t) = V_d(t_0) e^{-\beta(t-t_0)}$ and
  \eqref{eqn:vol_Vc_d}, we deduce that
  \begin{align*}
    \volume{\mathcal{V}_d(t)} = c_P ( V_d(t_0) )^{\frac{n}{2}} e^{-
      \frac{ n \beta}{2} (t-t_0)}.
  \end{align*}
  Combining the observations in the beginning of the proof,
  and~\eqref{eq:auxx}, we require
  \begin{align*}
    2^{\Bc(t,t_0)} & \geq \frac{e^{\trace{A}(t-t_0)}
      \volume{E(t_0)}}{\volume{\mathcal{V}_d(t)}}
    \\
    & = \frac{e^{(\trace{A} + \frac{n \beta}{2} )(t-t_0)}
      \volume{E(t_0)}}{ c_P ( V_d(t_0) )^{\frac{n}{2}}} ,
  \end{align*}
  from which the result follows.
\end{proof}

If $\min \realpart{\sigma_{A + \beta \identity{n}}} < 0$ then the
eigen-subspace corresponding to the eigenvalues whose real part is
less than $\beta$ may be ignored without loss of generality. Thus, the
result is consistent with the well-known data-rate
theorem~\cite{GNN-FF-SZ-RJE:07, MF-PM:14}, which is obtained by
choosing $\beta = 0$.

There are a few observations of note regarding
Proposition~\ref{prop:necs_rate}. First, the condition is dependent on
the control goal but not on the control input itself. Since the result
only relies on comparing the volumes of the sets $\Sc(t)$ and
$\mathcal{V}_d(t)$, rather than on ensuring the stricter condition
$\Sc(t) \subseteq \mathcal{V}_d(t)$ for $t \geq t_0$, it remains to be
seen how a necessary or even a sufficient data rate condition would
depend on the control gain $K$ and the sequence of communication times
$\{t_k\}_{k \in \integersnonnegative}$. In general, a time-triggered
implementation with the given control goal and communication
constraints could be very conservative. This motivates our forthcoming
investigation of event-triggered designs. Furthermore, note that
Proposition~\ref{prop:necs_rate} is a necessary condition to meet the
control goal \emph{for every possible solution}. It is true that if
the decoder at the controller were deciding the transmission time
instants, then the condition $\Sc(t) \subset \mathcal{V}_d(t)$, $t
\geq t_0$, would have to be enforced (given that it has no access to
the actual plant state). However, when the encoder at the sensor is
deciding the transmission time instants, as in our case, then it is
sufficient to ensure $x(t) \in \mathcal{V}_d(t)$, $t \ge t_0$. This is
yet another significant motivation to investigate event-triggered
designs under bounded data rate constraints.

\section{Event-Triggered Control with Bounded Bit Rates and
  Instantaneous Transmission}\label{sec:ET_bnd_dat_rate}

In this section, we seek to design event-triggered laws for deciding
the transmission times and the number of bits used per transmission
based on feedback. We achieve this by letting the encoder at the
sensor, which has access to the exact plant state, make these
decisions in an opportunistic fashion. Here, we consider the
simplified scenario of instantaneous communication and tackle the more
general case of non-instantaneous communication in the next section.

\subsection{Requirements on the Encoding
  Scheme} \label{sec:coding-scheme}

Here, we specify the basic requirements of the encoding scheme
essential for our purposes. Consider the system defined
by~\eqref{eq:dynamics-x-xe} where the controller state evolves
according to~\eqref{eqn:x_hat}. Assume that, at the beginning $t_0\in
\realnonnegative$ of the zoom in stage, the encoder and decoder have a
common knowledge of a constant $d_e(t_0)$ such that
$\Infnorm{x_e(t_0)} \leq d_e(t_0)$. Given this common knowledge, the
encoder and the decoder \textit{inductively} construct a signal
$d_e(.)$ such that $\Infnorm{x_e(t)} \leq d_e(t)$ is satisfied for all
$t \geq t_0$ as follows. First, note that as a consequence of
\eqref{eqn:enc_err_dyn}, we have that
\begin{equation*}
  x_e(t) = e^{A(t-t_k)} x_e(t_k) + \int_{t_k}^{t} e^{A(t-s)} v(s) 
  \mathrm{d}s ,
\end{equation*}
which in turn implies
\begin{align*}
  \Infnorm{x_e(t)} &\leq \Infnorm{e^{A(t-t_k)} x_e(t_k)} +
  \int_{t_k}^{t} \Enorm{e^{A(t-s)} v(s) } \mathrm{d} s
  \\
  &\leq \Infnorm{e^{A(t-t_k)}} \Infnorm{x_e(t_k)} + \int_{t_k}^{t}
  e^{\Enorm{A}(t-s)} \nu \mathrm{d} s ,
\end{align*}
where $\nu$ is the uniform bound on the disturbance $v$, 
cf.~\eqref{eqn:disturb_bound}. Now, assuming that the encoder 
and the decoder know $d_e(t_k) \geq 0$ at time $t_k$ such that
$\Infnorm{x_e(t_k)} \leq d_e(t_k)$, then both can compute 
\begin{subequations}\label{eqn:de_def}
  \begin{equation}\label{eqn:de_evolve}
    d_e(t) \triangleq \Infnorm{e^{A(t-t_k)}} d_e(t_k) + 
    \frac{\nu}{\Enorm{A}} [ e^{\Enorm{A} (t-t_k)} - 1 ],
  \end{equation}
  for $t \in [t_k, t_{k+1})$. The above discussion guarantees that
  $\Infnorm{x_e(t)} \leq d_e(t)$ for $t \in [t_k, t_{k+1})$. Next, at
  time $t_{k+1}$, if $n p_{k+1}$ is the number of bits used to
  quantize and transmit information, then the encoder and the decoder
  update the value of $d_e(t_{k+1})$ by the jump,
  \begin{align}\label{eqn:de_update}
    d_e(t_{k+1}) = \frac{1}{2^{p_{k+1}}} d_e(t_{k+1}^-).
  \end{align}
\end{subequations}
Assuming the quantization at time $t_k$ is such that
$\Infnorm{x_e(t_k)} \leq d_e(t_k)$ given $\Infnorm{x_e(t_k)} \leq
d_e(t_k^-)$, then it is straightforward to verify by induction that
the so constructed signal $d_e$ ensures $\Infnorm{x_e(t)} \leq d_e(t)$
for all $t \geq t_0$.

As an example, we next specify (up to the number of bits) an encoding
scheme that satisfies the above requirements.  Given $d_e(t_{k})$ such
that $\Infnorm{x_e(t_{k})} \leq d_e(t_{k})$, for $k \in
\integersnonnegative$, the plant state satisfies
\begin{equation*}
  x(t) \in S(\hat{x}(t), d_e(t)) = \{ \xi \in \mathbb{R}^n :
  \Infnorm{\xi - \hat{x}(t)} \leq d_e(t) \} ,
\end{equation*}
for all $t \in [t_k, t_{k+1})$. At time $t_{k+1}$, the sensor/encoder
encodes the plant state and transmits using $n p_{k+1}$ bits. In this
encoding scheme, the set $S(\hat{x}(t_{k+1}^-), d_e(t_{k+1}^-))$ is
divided uniformly into $2^{n p_{k+1}}$ hypercubes and
$\hat{x}(t_{k+1})$ is chosen as the centroid of the hypercube
containing the plant state $x(t_{k+1})$. This results in
$d_e(t_{k+1})$ being updated as in~\eqref{eqn:de_update}. Formally, we
can express the quantization at time $t_k$ as
\begin{equation}\label{eqn:quant_def}
  q_k(x(t_k), \hat{x}(t_k^-)) \in \argmin_{\xi \in \Xc_k} \{ \Infnorm{x(t_k) - 
    \xi} \} ,
\end{equation}
where $\Xc_k$ is the set of centroids of the $2^{n p_k}$ hypercubes
that the set $S(\hat{x}(t_{k}^-), d_e(t_k^-))$ is divided into. We
assume that if $x(t_k)$ lies on the boundary of two or more
hypercubes, then the encoder and decoder choose the value of
$q_k(x(t_k), \hat{x}(t_k^-))$ according to a common deterministic
rule. As a result, given $\hat{x}(t_0)$ and $d_e(t_0)$ at time $t_0$,
$\hat{x}(t)$ and $d_e(t)$ are known to both the encoder and the
decoder at all times $t \geq t_0$.

In the remainder of the paper, we make no reference to this specific
encoding scheme. Instead it is sufficient for us to use the properties
of the encoding scheme specified by~\eqref{eqn:de_def}.

\subsection{Analysis of the Performance
  Ratio}\label{sec:perform-ratio}

We define the \emph{performance ratio} function, measuring
the ratio of the quadratic Lyapunov function $V$ and the desired
performance~$V_d$,
\begin{equation}\label{eqn:bt_def}
  b(t) \triangleq \frac{V(x(t))}{V_d(t)} .
\end{equation}
We use this function to determine the transmission times in an 
opportunistic fashion. First, however, we find it useful to
encapsulate some general properties of the performance ratio, $b(t)$,
and of its evolution as we use these properties through out the paper.

In the sequel, we make the following assumptions.
\begin{subequations}
  \begin{align}
    W & \triangleq \frac{\lambda_m(Q)}{\lambda_M(P)} - a \beta > 0 ,
    \label{eq:W}
    \\
    \sqrt{V_0} & \geq \frac{2 \Enorm{P} \nu}{ \sigma (a-1) \beta
      \sqrt{\lambda_m(P)} } , \label{eqn:V0_condition}
  \end{align}
\end{subequations}
where $a > 1$ and $\sigma \in (0, 1)$ are arbitrary constants.
Assumption~\eqref{eq:W} is sufficient to guarantee with
continuous-time and unquantized feedback a convergence rate faster
than $\beta$, in the absence of external disturbance.
Assumption~\eqref{eqn:V0_condition} prescribes an upper bound on the
norm of the tolerable disturbance given~$V_0$ (the steady state value
of $V_d$), or conversely prescribes $V_0$ given $\nu$.  This
interpretation becomes clearer later in the proofs of our results.

The following result provides an upper bound on the value of $b$ that
is convenient for our purposes. The proof can be found in the
appendix.

\begin{lemma}\longthmtitle{Upper Bound on Performance
    Ratio}\label{lem:bound-b}
  Given $t_k \in \realpositive$ such that $b(t_k) \leq 1$, then
  \begin{align*}
    b(\tau + t_k) \leq \tilde{b}(\tau, b(t_k), \epsilon(t_k)),
  \end{align*}
  for $\tau \ge 0$, where
  \begin{align}
    \epsilon(t) & \triangleq \frac{d_e(t)}{c \sqrt{V_d(t)}}, \quad
    \tilde{b}(\tau, b_0, \epsilon_0) \triangleq \frac{ f_1(\tau, b_0,
      \epsilon_0) }{ f_2(\tau) } , \label{eqn:eps_btild_def}
    \\
    f_1(\tau, b_0, \epsilon_0) &\triangleq b_0 + \frac{W \epsilon_0 }{
      w + \theta } ( e^{(w + \theta) \tau} - 1 ) + \frac{c_1 - c_2}{w}
    ( e^{w \tau} - 1 ) \notag
    \\
    & \quad + \frac{c_2}{ w + \Enorm{A} } ( e^{(w + \Enorm{A}) \tau} -
    1 ) , \notag
    \\
    f_2(\tau) &\triangleq e^{w \tau}, \notag
  \end{align}
  with $ w \triangleq \tfrac{\lambda_m(Q)}{\lambda_M(P)} - \beta >0 $,
  $\theta \triangleq \Enorm{A} + \tfrac{\beta}{2}$ and
  \begin{equation*}
    c \triangleq \frac{W \sqrt{ \lambda_m(P)} }{ 2 \sqrt{n}
      \Enorm{PBK} }, \ \ 
    c_1 \triangleq \frac{2 \Enorm{P}}{\sqrt{\lambda_m(P)}} 
    \frac{\nu}{\sqrt{V_0}}, \ \ 
    c_2 \triangleq \frac{W}{c \Enorm{A}} \frac{\nu}{\sqrt{V_0}}.
  \end{equation*} \qed
\end{lemma}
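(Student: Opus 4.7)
The plan is to derive a first-order scalar linear differential inequality for the performance ratio $b$, combine it with a closed-form bound on how $\epsilon$ can grow between transmissions, and then invoke the comparison principle to identify $\tilde{b}$ as the envelope.

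First I would differentiate $V$ along~\eqref{eqn:plant_dyn_enc_err} and use the Lyapunov equation~\eqref{eq:Lyap-eq} to obtain $\dot V = -x^T Q x - 2x^T PBK x_e + 2 x^T P v$. Standard coercivity estimates on $P$ and $Q$, together with $\Enorm{x_e}\le \sqrt{n}\,d_e$ (since $\Infnorm{x_e}\le d_e$) and $\Enorm{v}\le \nu$, give
\[
\dot V \le -\frac{\lambda_m(Q)}{\lambda_M(P)}V + \frac{2(\Enorm{PBK}\sqrt{n}\,d_e + \Enorm{P}\nu)}{\sqrt{\lambda_m(P)}}\sqrt{V}.
\]
I then pass to $b=V/V_d$: since $\dot V_d=-\beta(V_d-V_0)$ and $V_0\ge 0$, one has $\dot b \le \dot V/V_d + \beta b$. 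The standing hypothesis $b(t_k)\le 1$ lets me bound $\Enorm{x}\le \sqrt{V_d/\lambda_m(P)}$, removing the $\sqrt b$ factor that would otherwise appear in the cross terms. Using the definition of $c$ to identify $2\Enorm{PBK}\sqrt{n}\,d_e/\sqrt{\lambda_m(P)\,V_d} = W\epsilon$, and $V_d \ge V_0$ to bound $2\Enorm{P}\nu/\sqrt{\lambda_m(P)\,V_d} \le c_1$, I obtain the clean inequality $\dot b + w\,b \le W\epsilon(t) + c_1$.

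Next I would bound $\epsilon$ on the inter-transmission interval. From~\eqref{eqn:de_evolve} and $\|e^{A\tau}\|\le e^{\Enorm{A}\tau}$, $d_e(\tau+t_k)\le e^{\Enorm{A}\tau}d_e(t_k)+(\nu/\Enorm{A})(e^{\Enorm{A}\tau}-1)$. Lower-bounding the denominator by $\sqrt{V_d(t_k)}\,e^{-\beta\tau/2}$ on the $d_e(t_k)$ summand (using $V_d(\tau+t_k)\ge V_d(t_k)e^{-\beta\tau}$) and by $\sqrt{V_0}$ on the $\nu$ summand, and recognizing $c_2/W = \nu/(c\,\Enorm{A}\,\sqrt{V_0})$, yields
\[
\epsilon(\tau+t_k)\le e^{\theta\tau}\epsilon(t_k) + \frac{c_2}{W}\bigl(e^{\Enorm{A}\tau}-1\bigr).
\]
Substituting into the previous inequality gives
\[
\dot b + w b \le W\epsilon_0 e^{\theta\tau} + (c_1-c_2) + c_2 e^{\Enorm{A}\tau},
\]
with $\epsilon_0=\epsilon(t_k)$. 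This is exactly the first-order linear ODE solved by $\tilde b = f_1/f_2$; differentiating the definition of $f_1$ and dividing by $f_2=e^{w\tau}$ confirms $\dot{\tilde b}+w\tilde b$ equals the right-hand side above. Since $b(t_k)=b_0=\tilde b(0,b_0,\epsilon_0)$, the scalar comparison principle applied to this linear ODI delivers $b(\tau+t_k)\le \tilde b(\tau,b(t_k),\epsilon(t_k))$.

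The main subtle point is removing the $\sqrt b$ factor in the first step via $b\le 1$. This holds at $\tau=0$ by hypothesis and is propagated by a standard continuation argument over any interval where the envelope $\tilde b$ itself remains below $1$---precisely the regime in which this lemma is invoked in the event-triggered design developed subsequently.
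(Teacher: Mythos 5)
Your proposal is correct and follows essentially the same route as the paper's proof: the same Lyapunov-derivative estimate, the same use of $V_d(\tau+t_k)\ge V_d(t_k)e^{-\beta\tau}$ and $V_d\ge V_0$ to produce the $e^{\theta\tau}\epsilon(t_k)$ and $c_1,c_2$ terms, the same linear differential inequality $\dot b + wb \le W\epsilon_0 e^{\theta\tau}+(c_1-c_2)+c_2e^{\Enorm{A}\tau}$, and the same appeal to the Comparison Lemma (the only cosmetic difference being that you bound $\epsilon(t)$ on the interval before substituting, whereas the paper bounds $\Enorm{x_e(t)}$ directly in terms of $\epsilon(t_k)$). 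Your explicit continuation remark about dropping the $\sqrt{b}$ factor on the set where $b\le 1$ is a slightly more careful treatment of a point the paper handles by restricting to the ``region of interest.''
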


Motivated by Lemma~\ref{lem:bound-b}, we formally define the function
\begin{align}\label{eqn:Gamma1_def}
  \Gamma_1(b_0, \epsilon_0) \triangleq \min \{ \tau \geq 0 :
  \tilde{b}(\tau, b_0, \epsilon_0) = 1, \ \frac{\mathrm{d}
    \tilde{b}}{\mathrm{d} \tau} \geq 0 \} .
\end{align}
Thus, $\Gamma_1(b_0, \epsilon_0)$ is a lower bound on the time it
takes $b$ to evolve to $1$ starting from $b(t_k) = b_0$ with
$\epsilon(t_k) = \epsilon_0$.  The following result captures some
useful properties of this function, the proof of which can be found in
the appendix.

\begin{lemma}\longthmtitle{Properties of the Function
    $\Gamma_1$}\label{lem:Gamma1_prop}
  The following holds true,
  \begin{enumerate}
  \item $\Gamma_1(1, 1) > 0$.
  \item If $b_1 \geq b_0$ and $\epsilon_1 \geq \epsilon_0$, then
    $\Gamma_1(b_0, \epsilon_0) \geq \Gamma_1(b_1, \epsilon_1)$. In
    particular, if $b_0 \in [0, 1]$, then $\Gamma_1(b_0, \epsilon_0)
    \geq \Gamma_1(1, \epsilon_0)$.
  \item For $T>0$, if $b_0 \in [0, 1]$ and
    \begin{equation}\label{eqn:rho_def}
      \epsilon_0 \leq \rhofun{T}{b_0} \triangleq \frac{ (w + \theta) 
        ( 1 - b_0 )  }{ W ( e^{ (w+\theta) T } - 1 ) } + 1 ,
    \end{equation}
    then $\Gamma_1(b_0, \epsilon_0) \geq \min \{ \Gamma_1(1, 1), T
    \}$.
  \end{enumerate} \qed
\end{lemma}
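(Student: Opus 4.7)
My plan is to handle the three claims sequentially, using direct computations with the explicit formulas for $\tilde{b}$ and $f_1$, and in particular the identity
\[
f_1(\tau,b_0,\epsilon_0) - f_1(\tau,1,1) = (b_0-1) + \frac{W(\epsilon_0-1)}{w+\theta}\bigl(e^{(w+\theta)\tau}-1\bigr),
\]
which isolates the dependence on $(b_0,\epsilon_0)$ because the $c_1,c_2$ terms in $f_1$ do not involve these variables.

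For (i), I would evaluate $\tilde{b}(0,1,1)=f_1(0,1,1)/f_2(0)=1$ and the initial slope
\[
\frac{\mathrm{d}\tilde{b}}{\mathrm{d}\tau}(0,1,1) = f_1'(0,1,1) - w\,f_2(0) = W + c_1 - w.
\]
Using $W-w = -(a-1)\beta$ from the definitions of $w$ and $W$, and $c_1\le \sigma(a-1)\beta$ as a direct rearrangement of assumption~\eqref{eqn:V0_condition}, this slope is at most $-(1-\sigma)(a-1)\beta<0$. By continuity, $\tilde{b}(\cdot,1,1)$ then dips strictly below $1$ on a right neighbourhood of $0$ and can return to $1$ with non-negative derivative only at a strictly positive time, giving $\Gamma_1(1,1)>0$.

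For (ii), the argument is pointwise monotonicity: in the closed form of $f_1$, both $b_0$ and $\epsilon_0$ enter linearly with non-negative coefficients (since $W>0$, $w+\theta>0$, and $e^{(w+\theta)\tau}-1\ge 0$ for $\tau\ge 0$), while $f_2$ is independent of them, so $\tilde{b}(\tau,\cdot,\cdot)$ is non-decreasing in each argument for every $\tau$. Setting $\tau_0=\Gamma_1(b_0,\epsilon_0)$ gives $\tilde{b}(\tau_0,b_1,\epsilon_1)\ge\tilde{b}(\tau_0,b_0,\epsilon_0)=1$, and combined with $\tilde{b}(0,b_1,\epsilon_1)=b_1\le 1$ in the relevant range, the intermediate value theorem yields a first up-crossing of $1$ for $\tilde{b}(\cdot,b_1,\epsilon_1)$ in $(0,\tau_0]$, necessarily with derivative $\ge 0$, so $\Gamma_1(b_1,\epsilon_1)\le\tau_0$. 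The second assertion is the specialization $b_1=1$.

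For (iii), I would exploit the displayed identity as follows. If $\epsilon_0\le 1$, its right-hand side is a sum of two non-positive terms for all $\tau\ge 0$; if $\epsilon_0>1$, it is strictly increasing in $\tau$, starts at $b_0-1\le 0$ at $\tau=0$, and at $\tau=T$ is non-positive exactly because $\epsilon_0\le \rhofun{T}{b_0}$. In either case $\tilde{b}(\tau,b_0,\epsilon_0)\le \tilde{b}(\tau,1,1)$ for $\tau\in[0,T]$. By (i), $\tilde{b}(\cdot,1,1)<1$ on $(0,\Gamma_1(1,1))$, hence $\tilde{b}(\tau,b_0,\epsilon_0)<1$ on $(0,\min\{\Gamma_1(1,1),T\})$. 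The endpoint $\tau=0$ is excluded either because $b_0<1$, or, if $b_0=1$, because $\rhofun{T}{1}=1$ forces $\epsilon_0\le 1$ and the initial slope $W\epsilon_0+c_1-w$ is strictly negative as in (i). This delivers $\Gamma_1(b_0,\epsilon_0)\ge \min\{\Gamma_1(1,1),T\}$. The hardest part is this last item: isolating the identity is the one non-routine step, and correctly handling the sign dichotomy in $\epsilon_0-1$ together with the $\tau=0$ boundary case when $b_0=1$ requires some care, whereas (i) and (ii) are then routine computations with the closed form of $f_1$ and a monotonicity observation.
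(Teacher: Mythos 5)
Your proposal is correct and follows essentially the same route as the paper: part (i) via the value and slope of $\tilde{b}(\cdot,1,1)$ at $\tau=0$ together with assumption~\eqref{eqn:V0_condition}, part (ii) via monotonicity of $\tilde{b}$ in its last two arguments, and part (iii) via the identity for $\tilde{b}(\tau,b_0,\epsilon_0)-\tilde{b}(\tau,1,1)$ combined with the bound $\epsilon_0\le\rhofun{T}{b_0}$ on $[0,T]$. Your extra care with the intermediate-value argument in (ii), the sign dichotomy in $\epsilon_0-1$, and the $\tau=0$ boundary case when $b_0=1$ only makes explicit what the paper's terser proof leaves implicit.
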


\subsection{Event-Triggered Design with Arbitrary Finite 
  Communication Rate} \label{sec:instantaneous-bound}

Here, we solve the problem stated in Section~\ref{sec:prob_stat} in a
way that guarantees that the number of bits at each transmission is
finite, although not necessarily uniformly upper bounded across all
transmissions.  We build on these developments in
Section~\ref{sec:instantaneous-bound-uniform} to address the problem
when there exists an explicit uniform bound across all transmissions.

\begin{theorem}\longthmtitle{Control with Arbitrary Finite
    Communication Rate} \label{thm:inst_tx_fin_bit}
  Consider the system~\eqref{eqn:plant_dyn} under the feedback law $u
  = K \hat{x}$, with $t\mapsto \hat{x}(t)$ evolving according
  to~\eqref{eqn:x_hat} and the sequence $\{t_k\}_{k \in
    \integersnonnegative}$ determined recursively by
  \begin{equation}\label{eqn:tk_recursive}
    t_{k+1} = \min \setdef{ t \geq t_k}{ b(t) \geq 1, \
      \dot{b}(t) \geq 0} .
  \end{equation}
  Assume the encoding scheme is such that~\eqref{eqn:de_def} holds for
  all $t \ge t_0$. Further assume that $V(x(t_0)) \leq V_d(t_0)$ and
  that~\eqref{eq:W}-\eqref{eqn:V0_condition} hold. If the number of
  bits $np_k$ transmitted at time $t_k$ satisfies
  \begin{equation}\label{eqn:pk_lbound}
    p_k \geq \uline{p_k} \triangleq \Bigg\lceil \log_2 \left(
      \frac{d_e(t_k^-)}{c \sqrt{V_d(t_k)}} \right) \Bigg\rceil ,
  \end{equation}
  where recall $ c = \frac{W \sqrt{\lambda_m(P)}}{2 \sqrt{n}
    \Enorm{PBK}}$. Then the following holds:
  \begin{enumerate}
  \item the inter-transmission times $\{ T_k \}_{k \in
      \integerspositive} \triangleq \{ t_{k+1} - t_k \}_{k \in
      \integerspositive}$ have a uniform positive lower bound,
  \item the origin is exponentially practically stable for the
    closed-loop system, with $V(x(t)) \leq V_d(t)$ for all $t \geq
    t_0$.
  \end{enumerate}
\end{theorem}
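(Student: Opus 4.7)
The plan is to run an induction on the transmission index $k$ showing simultaneously that (a) $b(t_k) \leq 1$, (b) immediately after the $k$th transmission the normalized encoding error satisfies $\epsilon(t_k) \leq 1$, and (c) the next triggering happens no sooner than a fixed positive time. The base case (a) is given by the hypothesis $V(x(t_0)) \le V_d(t_0)$. For (b), the key observation is that the bit‑budget bound \eqref{eqn:pk_lbound} together with the jump \eqref{eqn:de_update} gives
\begin{equation*}
  d_e(t_k) \;=\; \frac{d_e(t_k^-)}{2^{p_k}} \;\leq\; c\sqrt{V_d(t_k)},
\end{equation*}
so that $\epsilon(t_k) = d_e(t_k)/(c\sqrt{V_d(t_k)}) \leq 1$, as desired.

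Next I would appeal to Lemma~\ref{lem:bound-b}: since $b(t_k) \leq 1$, the evolution of the performance ratio on $[t_k, t_{k+1})$ is dominated by $\tilde b(\,\cdot\,, b(t_k), \epsilon(t_k))$. By definition of $\Gamma_1$ in \eqref{eqn:Gamma1_def}, the earliest $\tau$ at which $\tilde b$ can reach $1$ with nonnegative derivative is $\Gamma_1(b(t_k), \epsilon(t_k))$. Combining the monotonicity property (ii) of Lemma~\ref{lem:Gamma1_prop} with (b) and (a) gives
\begin{equation*}
  \Gamma_1(b(t_k), \epsilon(t_k)) \;\geq\; \Gamma_1(1,1),
\end{equation*}
and property (i) ensures $\Gamma_1(1,1) > 0$. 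Because the triggering rule \eqref{eqn:tk_recursive} forces $b$ to stay below $1$ (or reach $1$ only with nonpositive derivative) until simultaneously $b \geq 1$ and $\dot b \geq 0$, the domination by $\tilde b$ yields $t_{k+1} - t_k \geq \Gamma_1(1,1)$, proving (c) and hence claim (i) of the theorem with explicit lower bound $\Gamma_1(1,1)$.

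To close the induction for (a) I would use continuity. Between transmissions $b$ is smooth, $b(t_k) \leq 1$, and the trigger fires the first time $b$ touches $1$ from below with nonnegative slope; thus $b(t_{k+1}) = 1$ by continuity, preserving (a) at step $k+1$. Jumping to index $k+1$, the new $d_e(t_{k+1}^-)$ is computed from \eqref{eqn:de_evolve} and the bit rule is reapplied, reproducing (b) and the argument iterates. Consequently, for every $t \in [t_k, t_{k+1})$ one has $b(t) \le 1$, i.e., $V(x(t)) \leq V_d(t)$, which combined with the explicit form \eqref{eqn:Vd_t} of $V_d$ and the quadratic‑form bound \eqref{eq:quadratic-form-ineq} yields the claimed exponential practical stability, proving (ii).

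The main obstacle I anticipate is a careful treatment of the trigger at the boundary: one must verify that the upper bound $\tilde b$ is nondecreasing through the value $1$ in a neighborhood relevant to the definition of $\Gamma_1$, so that a premature re‑triggering at $b(t_k) = 1$ with zero slope (which would give $T_k = 0$) cannot occur. This is exactly why $\Gamma_1$ is defined using the additional $\dot{\tilde b} \geq 0$ clause rather than merely the first crossing, and why assumption \eqref{eq:W} (ensuring $W, w > 0$) is needed so that $\tilde b$ is strictly increasing whenever it crosses $1$ from above $b_0 \le 1$ with $\epsilon_0 \le 1$; a short calculation using the explicit form of $f_1, f_2$ in Lemma~\ref{lem:bound-b} should confirm this and legitimize the use of $\Gamma_1(1,1)$ as a hard positive lower bound.
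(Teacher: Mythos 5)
Your proposal is correct and follows essentially the same route as the paper's proof: deduce $\epsilon(t_k)\le 1$ from~\eqref{eqn:pk_lbound} and~\eqref{eqn:de_update}, invoke Lemma~\ref{lem:bound-b} and the monotonicity and positivity properties of $\Gamma_1$ in Lemma~\ref{lem:Gamma1_prop} to obtain $T_k \ge \Gamma_1(b(t_k),\epsilon(t_k)) \ge \Gamma_1(1,1)>0$, and conclude $b(t)\le 1$ from the trigger~\eqref{eqn:tk_recursive}. Your added discussion of the boundary case $b(t_k)=1$ with zero slope is exactly what the $\frac{\mathrm{d}\tilde b}{\mathrm{d}\tau}\ge 0$ clause in~\eqref{eqn:Gamma1_def} and assumption~\eqref{eqn:V0_condition} (via Lemma~\ref{lem:Gamma1_prop}(i)) are designed to handle, so no gap remains.
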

\begin{proof}
  From~\eqref{eqn:tk_recursive} note that $b(t_k) = 1$. Also,
  from~\eqref{eqn:pk_lbound}, \eqref{eqn:de_update} and the definition
  of $\epsilon(t)$ in~\eqref{eqn:eps_btild_def} we see that
  $\epsilon(t_k) \leq 1$. Thus, as a consequence of
  Lemma~\ref{lem:Gamma1_prop}, we see that for any $k \in
  \integerspositive$, $T_k \geq \Gamma_1(b(t_k), \epsilon(t_k)) \geq
  \Gamma_1(1,1) > 0$. The second claim of the theorem follows from the
  fact that~\eqref{eqn:tk_recursive} ensures $b(t) \leq 1$ for all $t
  \geq t_0$.
\end{proof}

The idea behind the trigger~\eqref{eqn:tk_recursive} is to let the
system evolve until the performance criterion is about to be violated
(performance ratio $b$ about to exceed $1$) and only then transmit to
close the feedback loop. The quantity $n \uline{p_k}$ in
Theorem~\ref{thm:inst_tx_fin_bit} can be interpreted as the
``minimum''\footnote{We use ``minimum'' here in the context of the
  encoding scheme of Section~\ref{sec:coding-scheme} and other design
  choices and approximations made in the paper.} number of bits that
would ensure $\dot{b} < 0$ just after transmissions. Incidentally,
this condition also ensures $\epsilon(t_k) \leq 1$, which in turn
guarantees that, after transmission, $b < 1$ for at least the next
$\Gamma_1(1,1)$ units of time (Lemma~\ref{lem:Gamma1_prop}). The
recursive nature of the inequalities~\eqref{eqn:pk_lbound} can be
leveraged to better understand the relationship across different times
among the bounds on the number of bits sufficient for stability. In
order to provide an intuitive interpretation, we assume in the
following result that there is no disturbance in the system ($\nu = 0$
and $V_0 = 0$). The result gives insight into the total number of
bits sufficient for stability as a function of time.

\begin{corollary}\longthmtitle{Upper Bound on the Data  Rate Sufficient
    for Stability}\label{cor:sufficient-bit-rate}
  Under the assumptions of Theorem~\ref{thm:inst_tx_fin_bit} and no
  disturbances, the following holds for any $k \in \integerspositive$,
  \begin{align*}
    & n ( \uline{p_k} + \sum_{i=1}^{k-1} p_i )
    \\
    &\leq n \Big( \Infnorm{A} + \frac{\beta}{2} \Big) \log_2(e) (t_k -
    t_0) \!+\! n \log_2 \bigg( \frac{ d_e(t_0)} { c \sqrt{V_d(t_0)} }
    \bigg) \!+\! n .
  \end{align*}
\end{corollary}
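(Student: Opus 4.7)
The plan is to unroll the recursion defining $d_e$ to obtain a closed-form upper bound on $d_e(t_k^-)$, then substitute into the bound \eqref{eqn:pk_lbound} and exploit the closed form for $V_d(t_k)$ under $V_0 = 0$. Since there is no disturbance, $\nu = 0$, so on each inter-transmission interval the definition \eqref{eqn:de_evolve} reduces to $d_e(t) = \Infnorm{e^{A(t-t_k)}} d_e(t_k)$, and at each transmission \eqref{eqn:de_update} gives $d_e(t_i) = d_e(t_i^-)/2^{p_i}$. Iterating, and using the standard bound $\Infnorm{e^{A\tau}} \leq e^{\Infnorm{A}\tau}$, I get
\begin{align*}
  d_e(t_k^-) \leq e^{\Infnorm{A}(t_k - t_0)}\, \frac{d_e(t_0)}{2^{\sum_{i=1}^{k-1} p_i}} .
\end{align*}

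Next, because $V_0 = 0$, the desired performance is $V_d(t_k) = V_d(t_0) e^{-\beta(t_k - t_0)}$, so $\sqrt{V_d(t_k)} = \sqrt{V_d(t_0)}\, e^{-\beta(t_k-t_0)/2}$. Plugging the two displayed expressions into the definition of $\uline{p_k}$ in \eqref{eqn:pk_lbound} and using $\lceil x \rceil \leq x + 1$ yields
\begin{align*}
  \uline{p_k} &\leq \log_2\!\left( \frac{d_e(t_k^-)}{c\sqrt{V_d(t_k)}} \right) + 1 \\
  &\leq \Bigl( \Infnorm{A} + \tfrac{\beta}{2} \Bigr) \log_2(e)\,(t_k - t_0) \\
  &\quad + \log_2\!\left( \frac{d_e(t_0)}{c\sqrt{V_d(t_0)}} \right) - \sum_{i=1}^{k-1} p_i + 1 .
\end{align*}
Adding $\sum_{i=1}^{k-1} p_i$ to both sides and then multiplying by $n$ gives the stated bound.

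There is no real obstacle here; the argument is essentially a telescoping computation once the recursion for $d_e$ is unrolled. The only point requiring a touch of care is keeping track of the ceiling, which contributes the additive $+n$ term on the right-hand side after multiplication by $n$, and making sure the $\Infnorm{\cdot}$ rather than $\Enorm{\cdot}$ bound on $e^{A\tau}$ is invoked (matching the $\Infnorm{\cdot}$ used in the construction \eqref{eqn:de_evolve}), which is what produces $\Infnorm{A}$ in the coefficient of $(t_k - t_0)$.
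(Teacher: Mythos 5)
Your proposal is correct and follows essentially the same route as the paper's proof: unroll the $\nu=0$ recursion for $d_e$ to get $d_e(t_k^-) \leq e^{\Infnorm{A}(t_k-t_0)} d_e(t_0)/2^{\sum_{i=1}^{k-1}p_i}$, substitute into~\eqref{eqn:pk_lbound} together with $V_d(t_k)=V_d(t_0)e^{-\beta(t_k-t_0)}$, bound the ceiling, and rearrange. The only cosmetic difference is that the paper pulls the integer $\sum_{i=1}^{k-1}p_i$ out of the ceiling before bounding it, whereas you bound $\lceil x\rceil \leq x+1$ first; both yield the identical final estimate.
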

\begin{proof}
  Using~\eqref{eqn:de_def} (with $\nu = 0$ and $V_0 = 0$) recursively
  gives
  \begin{align*}
    d_e(t_k^-) &= \Infnorm{e^{A T_{k-1}}} d_e(t_{k-1}) =
    \frac{\Infnorm{e^{A T_{k-1}}} d_e(t_{k-1}^-)}{2^{p_{k-1}}}
    \\
    &= \prod_{i=1}^{k-1} \frac{\Infnorm{e^{A T_i}}}{2^{p_i}}
    \Infnorm{e^{A T_0}} d_e(t_0)
    \\
    &\leq \frac{ e^{\Infnorm{A} (t_k - t_0) } }{ \prod_{i=1}^{k-1} 2^{
        p_i } } d_e(t_0) ,
  \end{align*}
  for $k \in \integerspositive$. Substituting this bound
  in~\eqref{eqn:pk_lbound} (and multiplying by $n$ to give us the
  number of bits), we arrive at
  \begin{align*}
    & n \uline{p_k} \leq n \Bigg\lceil \log_2 \left( \frac{
        e^{\Infnorm{A}(t_k - t_0)} d_e(t_0)} { e^{\frac{- \beta}{2}(t_k
          - t_0)} c \sqrt{V_d(t_0)} } \right) \Bigg\rceil - n
    \sum_{i=1}^{k-1} p_i,
  \end{align*}
  where we have used $V_d(t_k) = V_d(t_0) e^{- \beta (t_k - t_0)}$.
  Upper bounding $\lceil . \rceil$ and rearranging the terms yields
  the result.
\end{proof}

\begin{remark}\longthmtitle{Observations about
    Corollary~\ref{cor:sufficient-bit-rate}}
  {\rm Corollary~\ref{cor:sufficient-bit-rate} is interesting for the
    following reasons:
  \begin{itemize}
  \item The upper bound on the sufficient number of bits to be
    transmitted up to time $t_k$, for any $k \in \integerspositive$,
    depends only on the length of the time interval $t_k - t_0$, the
    initial conditions $d_e(t_0)$ and $V_d(t_0)$ and the system
    parameters. Thus the sufficient data rate is uniformly bounded;
  \item If more bits than sufficient are transmitted in the past,
    ($p_i > \uline{p_i}$ for some $i < k$), then fewer bits are
    sufficient at $t_k$;
  \item The expression, albeit only being valid at the transmission
    times $\{t_k\}_{k \in \integersnonnegative}$, has a form similar
    to the lower bound~\eqref{eqn:Bc_t} on the number of bits
    transmitted over the time interval $[t_0,t]$ in
    Proposition~\ref{prop:necs_rate}. In fact, the occurrence of
    $\Infnorm{A}$ in Corollary~\ref{cor:sufficient-bit-rate} is a
    by-product of the use of the norm $\Infnorm{.}$ and hypercubes as
    our quantization domains. In comparison with~\eqref{eqn:Bc_t}, $n
    \Infnorm{A}$ plays the role of $\trace{A}$, and $d_e(t_0)^n$ is
    proportional to~$\volume{E(t_0)}$ and we see that in the scalar
    case ($n= 1$) the sufficient asymptotic data rate is the same as
    the necessary asymptotic data rate;
  \item Theorem~\ref{thm:inst_tx_fin_bit} does not provide a uniform
    bound on $\uline{p_k}$. However (at least in the absence of
    disturbance), since the data rate is uniformly bounded, one can
    deduce that for any $k \in \integerspositive$, if $t_k - t_{k-1}$
    is bounded, then so is $\uline{p_k}$. \oprocend
  \end{itemize} 
}
\end{remark}

\subsection{Event-Triggered Design with Uniform Bound on 
Communication Rate} \label{sec:instantaneous-bound-uniform}

In this section, we expand on our previous discussion to solve the
problem stated in Section~\ref{sec:prob_stat} with a uniform bound on
the number of bits per transmission. This is particularly relevant in
cases where the communication channel imposes a hard bound, say $n
\bar{p}$, on the number of bits that can be transmitted at each time.
Before getting into the technical details, we briefly lay out the
rationale behind our design. As a consequence of the hard limit on the
channel capacity, a transmission at a time~$t_k \in \realpositive$ can
be caused by either of the following two reasons:
\begin{enumerate}
\item[(Ti)] the system trajectory hits the limit of the required
  performance guarantee, i.e., $b(t_k)=1$, as
  in~\eqref{eqn:tk_recursive}, or
\item[(Tii)] even though $b(t_k)<1$, the number of bits required later
  to keep $b$ from exceeding $1$ would be larger than the ``channel
  capacity'' $n \bar{p}$.
\end{enumerate}
To design an appropriate trigger for (Tii), we make use of
Lemma~\ref{lem:Gamma1_prop}, which characterizes the time it takes $b$
to evolve from any value to $1$. This information allows us to
determine the ``minimum" number of bits to be transmitted so that $b$ takes
at least a certain pre-designed time to reach $1$.  Our trigger for
(Tii) would then be simply `transmit if this ``minimum" number of bits
reaches the maximum channel capacity'.

\subsubsection*{Trigger Design and Analysis}

The analysis of Section~\ref{sec:perform-ratio} sets the basis for
computing the ``minimum" number of bits that guarantee that the
performance specification is met for a certain pre-designed
time. Specifically, define the \emph{channel-trigger} function
\begin{equation}\label{eqn:h2_def}
  \triggerCh(t) \triangleq  \frac{ \epsilon(t) }{ \rhofun{T}{b(t)} }
  = \frac{d_e(t)}{c \sqrt{V_d(t)} \rhofun{T}{b(t)}}  ,
\end{equation}
where $T > 0$ is a fixed design parameter.
Lemma~\ref{lem:Gamma1_prop}(iii) implies that, if $\triggerCh(t_k)
\leq 1$, then $b(t) \leq 1$ for at least $t \in [t_k,
  t_k + \min \{ T, \Gamma_1(1,1) \} )$.
Building on this observation, our trigger for (Tii) is then transmit
if $\displaystyle {\triggerCh(t)}/{2^{\bar{p}}} = 1$, i.e., when `the
number of bits required to have the value of $\triggerCh$ smaller than
or equal to $ 1$ just after transmission' is no more than $n \bar{p}$,
the upper bound imposed by the channel.

The next result provides an upper bound on the function $\triggerCh$
and is useful later when establishing a uniform lower bound on the
inter-transmission times for our design.

\begin{lemma}\longthmtitle{Upper Bound on Channel-Trigger
    Function}\label{lem:bound-triggerCh} 
  Given $t_k \in \realpositive$ such that $b(t_k) \leq 1$, then
  \begin{align*}
    \triggerCh(\tau + t_k) \leq \triggerChbar( \tau, b(t_k),
    \epsilon(t_k), \epsilon(t_k ) ) ,
  \end{align*}
  for $\tau \ge 0$, where
  \begin{align}\label{eqn:hbar_def}
    & \triggerChbar(\tau, b_0, \epsilon_0, \psi_0) \notag
    \\
    &\triangleq \frac{\Infnorm{e^{A \tau}} e^{\frac{\beta}{2} \tau}
      \psi_0 }{ \rhofun{T}{\tilde{b}(\tau, b_0, \epsilon_0)} } +
    \frac{ \nu ( e^{\Enorm{A} \tau} - 1 ) }{ c \Enorm{A}
      \rhofun{T}{\tilde{b}(\tau, b_0, \epsilon_0)} \sqrt{ V_0 } } .
  \end{align}
\end{lemma}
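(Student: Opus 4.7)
The plan is to bound $\triggerCh(\tau+t_k) = \epsilon(\tau+t_k)/\rhofun{T}{b(\tau+t_k)}$ by attacking the numerator and denominator separately, and then combining them. This naturally produces the two terms of $\triggerChbar$, since the numerator $\epsilon$ decomposes into a homogeneous part proportional to $d_e(t_k)$ and a forced part driven by the disturbance bound~$\nu$.

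For the numerator, I would use the explicit evolution of $d_e$ from~\eqref{eqn:de_evolve} with $t=\tau+t_k$:
\begin{align*}
  \epsilon(\tau+t_k) = \frac{\Infnorm{e^{A\tau}} d_e(t_k)}{c\sqrt{V_d(\tau+t_k)}} + \frac{\nu (e^{\Enorm{A}\tau}-1)}{c\Enorm{A}\sqrt{V_d(\tau+t_k)}}.
\end{align*}
The key lower bounds on $V_d$ are the elementary identities $V_d(\tau+t_k) = e^{-\beta\tau}(V_d(t_k)-V_0) + V_0 \geq e^{-\beta\tau} V_d(t_k)$ and $V_d(\tau+t_k) \geq V_0$, which follow directly from~\eqref{eqn:Vd_t}. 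Applying the first bound to the homogeneous term yields $\Infnorm{e^{A\tau}} e^{\beta\tau/2} \epsilon(t_k)$, and applying the second bound to the forced term yields $\nu (e^{\Enorm{A}\tau}-1)/(c\Enorm{A}\sqrt{V_0})$. Taking $\psi_0 = \epsilon(t_k)$, this already matches the numerators of the two summands in~\eqref{eqn:hbar_def}.

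For the denominator, I would observe from~\eqref{eqn:rho_def} that $b_0 \mapsto \rhofun{T}{b_0}$ is an affine, strictly decreasing function of~$b_0$ (since $w+\theta > 0$ and $W > 0$), so that any upper bound on $b(\tau+t_k)$ translates into a lower bound on $\rhofun{T}{b(\tau+t_k)}$. Invoking Lemma~\ref{lem:bound-b} with the hypothesis $b(t_k) \leq 1$, we get $b(\tau+t_k) \leq \tilde{b}(\tau, b(t_k), \epsilon(t_k))$, and monotonicity of $\rhofun{T}{\cdot}$ then gives $\rhofun{T}{b(\tau+t_k)} \geq \rhofun{T}{\tilde{b}(\tau, b(t_k), \epsilon(t_k))}$. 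Combining this with the numerator bound above yields exactly $\triggerChbar(\tau, b(t_k), \epsilon(t_k), \epsilon(t_k))$.

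The only subtle point — and what I would flag as the main obstacle — is ensuring that $\rhofun{T}{\tilde{b}(\tau, b(t_k), \epsilon(t_k))}$ stays positive, so that the inequality obtained from monotonicity of $\rhofun{T}{\cdot}$ is not vacuous (or reversed in sign). Because $\rhofun{T}{\cdot} \geq 1$ precisely on $(-\infty, 1]$, it suffices to note that the statement is only intended for use in regimes where $\tilde{b}(\tau, b(t_k), \epsilon(t_k)) \leq 1$, consistent with the triggering design of Section~\ref{sec:instantaneous-bound-uniform} (which prevents $b$ from exceeding~$1$). Modulo this, every other step is a routine application of \eqref{eqn:de_def}, the definition of $V_d$, and Lemma~\ref{lem:bound-b}.
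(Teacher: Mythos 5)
Your proof is correct and follows essentially the same route as the paper: bound the numerator via the explicit evolution~\eqref{eqn:de_evolve} of $d_e$ together with the two lower bounds $V_d(\tau+t_k) \geq e^{-\beta\tau}V_d(t_k)$ and $V_d(\tau+t_k)\geq V_0$, and bound the denominator by combining Lemma~\ref{lem:bound-b} with the monotonicity of $\rhofun{T}{\cdot}$. The positivity caveat you flag for $\rhofun{T}{\tilde{b}(\tau,b_0,\epsilon_0)}$ is a genuine implicit assumption that the paper's own proof also leaves unstated, so raising it is a point in your favor rather than a gap.
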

\begin{proof}
  From its definition, we can bound $\triggerCh$
  using~\eqref{eqn:de_evolve}, the fact that $\rho_{T}$ is a
  decreasing function and Lemma~\ref{lem:bound-b} as,
  \begin{align*}
    \triggerCh(\tau + t_k) \leq \frac{\Infnorm{e^{A \tau}} d_e(t_k) +
      \frac{\nu}{\Enorm{A}} ( e^{\Enorm{A} \tau} - 1 )}{c
      \rhofun{T}{\tilde{b}(\tau, b(t_k), \epsilon(t_k)} \sqrt{ V_d(\tau+
        t_k) }}.
  \end{align*}
  The result now follows by further simplifying this expression
  expanding $V_d(\tau+ t_k) = V_d(t_k) e^{-\beta \tau} + V_0 ( 1 -
  e^{-\beta \tau} )$, observing that $V_0 \geq 0$ and $V_d(t) \geq
  V_0$ for all $t \geq t_0$, and using the definition of $\epsilon$.
\end{proof}

Given Lemma~\ref{lem:bound-triggerCh}, we define the function
\begin{align*}
  \Gamma_2(b_0, \epsilon_0, \psi_0) &\triangleq \min \{ \tau \geq 0 :
  \frac{ \triggerChbar(\tau, b_0, \epsilon_0, \psi_0) }{ 2^{\bar{p}} }
  = 1 \},
\end{align*}
which is a lower bound on the time it takes $\triggerCh(\tau + t_k)$
to reach $2^{\bar{p}}$ given $b(t_k) = b_0$ and $\epsilon(t_k) =
\epsilon_0$.  Note that the argument $\psi_0$ in the definitions of
$\triggerChbar$ and $\Gamma_2$ is redundant for our purposes here, but
will play an important role later when discussing the case of
non-instantaneous communication.

We are now ready to present the main result of this section.

\begin{theorem}\longthmtitle{Control under Bounded Channel
    Capacity}\label{thm:inst_tx_bounded_bit}
  Consider the system~\eqref{eqn:plant_dyn} under the feedback law $u
  = K \hat{x}$, with $t\mapsto \hat{x}(t)$ evolving according
  to~\eqref{eqn:x_hat} and the sequence $\{t_k\}_{k \in
    \integersnonnegative}$ determined recursively by
  \begin{multline}\label{eqn:tk_trig2}
    t_{k+1} = \min \{ t \geq t_k : b \geq 1, \ \dot{b}(t) \geq 0 \;
    \operatorname{OR} \; \frac{ \triggerCh(t) }{ 2^{\bar{p}} } \geq 1
    \} ,
  \end{multline}
  where $n \bar{p}$ is the upper bound on the number of bits that can
  be sent per transmission and $T > 0$ in the
  definition~\eqref{eqn:h2_def} of $\triggerCh$ is a design
  parameter. Assume the encoding scheme is such
  that~\eqref{eqn:de_def} is satisfied for all $t \ge t_0$. Further
  assume that $V(x(t_0)) \leq V_d(t_0)$, $\triggerCh(t_0) \leq
  2^{\bar{p}}$ and that~\eqref{eq:W}-\eqref{eqn:V0_condition}
  hold. Let $\uline{p_k}$ be given by
  \begin{equation}\label{eqn:pk_lbound_case2}
    \uline{p_k} \triangleq \Bigg\lceil \log_2 \left(
      \frac{d_e(t_k^-)}{c \rhofun{T}{b(t_k)} \sqrt{V_d(t_k)}} \right) 
    \Bigg\rceil,
  \end{equation}
  where recall $ c = \frac{W \sqrt{\lambda_m(P)}}{2 \sqrt{n}
    \Enorm{PBK}}$. Then, the following hold:
  \begin{enumerate}
  \item $\uline{p_1} \leq \bar{p}$. Further for each $k \in
    \integerspositive$, if $ p_k \in \integerspositive \intersect [
    \uline{p_k}, \bar{p} ]$, then $\uline{p_{k+1}} \leq \bar{p}$.
  \item the inter-transmission times $\{ T_k = t_{k+1} - t_k \}_{k \in
      \integerspositive}$ have a uniform positive lower bound,
  \item the origin is exponentially practically stable for the
    closed-loop system, with $V(x(t)) \leq V_d(t) = ( V_d(t_0) - V_0)
    e^{ -\beta (t - t_0)} + V_0$ for all $t \geq t_0$.
  \end{enumerate}
\end{theorem}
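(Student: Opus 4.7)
The plan is to establish the three claims sequentially, leveraging Lemmas~\ref{lem:bound-b}, \ref{lem:Gamma1_prop}, and \ref{lem:bound-triggerCh}. For claim~(i), I would first note that on each open inter-transmission interval $(t_k,t_{k+1})$ both $b$ and $\triggerCh$ evolve continuously, since neither $x$, nor $V_d$, nor $d_e$ jumps except at transmission times. The trigger~\eqref{eqn:tk_trig2} fires at $t_{k+1}$ only at the first instant at which either $b\ge 1$ with $\dot{b}\ge 0$, or $\triggerCh\ge 2^{\bar p}$, becomes true; together with the hypothesis $\triggerCh(t_0)\le 2^{\bar p}$, this forces $\triggerCh(t_{k+1}^{-})\le 2^{\bar p}$ for every $k\in\integersnonnegative$. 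Consequently $\underline{p_{k+1}}=\lceil\log_2\triggerCh(t_{k+1}^{-})\rceil\le\bar p$ by~\eqref{eqn:pk_lbound_case2}, covering both the base case $\underline{p_1}\le\bar p$ and the inductive step.

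For claim~(iii), I would prove by induction on $k$ that $b(t)\le 1$ on each interval $[t_k,t_{k+1})$. The base case is the hypothesis $b(t_0)\le 1$. Given $b(t_k)\le 1$, any admissible choice $p_k\in\integerspositive\intersect[\underline{p_k},\bar p]$ forces $d_e(t_k)=d_e(t_k^-)/2^{p_k}\le c\rhofun{T}{b(t_k)}\sqrt{V_d(t_k)}$, i.e., $\triggerCh(t_k)\le 1$, equivalently $\epsilon(t_k)\le\rhofun{T}{b(t_k)}$. Lemma~\ref{lem:Gamma1_prop}(iii) then implies that the upper bound $\tilde{b}(\cdot,b(t_k),\epsilon(t_k))$ furnished by Lemma~\ref{lem:bound-b} stays below $1$ for at least $\min\{\Gamma_1(1,1),T\}$ units of time, and the first clause of~\eqref{eqn:tk_trig2} prevents $b$ itself from strictly exceeding $1$ on $[t_k,t_{k+1})$. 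This closes the induction and yields $V(x(t))\le V_d(t)$ for all $t\ge t_0$, which combined with~\eqref{eqn:Vd_t} delivers the claimed exponential practical stability.

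For claim~(ii), the bookkeeping established in (iii) places $(b(t_k),\epsilon(t_k))$ in the fixed compact set $\{(b_0,\epsilon_0):b_0\in[0,1],\ 0\le\epsilon_0\le\rhofun{T}{0}\}$ independently of~$k$. Lemma~\ref{lem:Gamma1_prop}(iii) gives the first-clause lower bound $\Gamma_1(b(t_k),\epsilon(t_k))\ge\min\{\Gamma_1(1,1),T\}$, while Lemma~\ref{lem:bound-triggerCh} gives the channel-clause lower bound $\Gamma_2(b(t_k),\epsilon(t_k),\epsilon(t_k))$. Since $\triggerChbar(0,b_0,\epsilon_0,\epsilon_0)=\epsilon_0/\rhofun{T}{b_0}\le 1<2^{\bar p}$ (assuming $\bar p\ge 1$) and $\triggerChbar$ is jointly continuous in all its arguments, $\Gamma_2$ is strictly positive on the compact region and attains a positive infimum there. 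Taking the minimum of the two uniform lower bounds produces a uniform positive lower bound on the inter-transmission times $T_k$.

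The main obstacle is precisely the uniform positivity of $\Gamma_2$: unlike for $\Gamma_1$, no off-the-shelf estimate analogous to Lemma~\ref{lem:Gamma1_prop}(iii) is available, so one must invoke joint continuity of the closed-form $\triggerChbar$ together with compactness of the set of admissible initial pairs, while carefully verifying that the initial value $\triggerChbar(0,\cdot)$ is strictly below $2^{\bar p}$ to rule out the crossing time collapsing to zero along any subsequence of transmissions.
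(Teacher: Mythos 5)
Your treatment of claims (i) and (iii) matches the paper's proof: (i) is the same induction using $\triggerCh(t_{k+1}^-)\le 2^{\bar p}$ forced by the trigger, and (iii) is the same observation that $b\le 1$ is propagated across intervals via $\epsilon(t_k)\le\rhofun{T}{b(t_k)}$ and Lemma~\ref{lem:Gamma1_prop}(iii). Where you genuinely diverge is the uniform positive lower bound on the channel-clause crossing time. The paper does \emph{not} argue by compactness: it differentiates $\psi_0/\rhofun{T}{\tilde b(\tau,b_0,\epsilon_0)}$, applies the Comparison Lemma, and produces an explicit majorant $\phi(\tau,1)$ of $\triggerChbar$ that depends on $\tau$ alone and equals $1$ at $\tau=0$; uniform positivity of $\Gamma_2$ is then immediate, and the explicit $\phi$ is reused later (it defines $T^*$ and $T^\bullet$ in Section~\ref{sec:ET_bnd_dat_rate-non-inst}). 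Your compactness-plus-uniform-continuity route is shorter and non-constructive; it can be made to work, but two points need to be nailed down. First, the compact set you name, $\{b_0\in[0,1],\,0\le\epsilon_0\le\rhofun{T}{0}\}$, is too large: on it $\triggerChbar(0,b_0,\epsilon_0,\epsilon_0)=\epsilon_0/\rhofun{T}{b_0}$ can exceed $1$ (e.g., $b_0=1$, $\epsilon_0=\rhofun{T}{0}$), and $\rhofun{T}{0}<2^{\bar p}$ is not guaranteed; you must instead work on $K=\{(b_0,\epsilon_0):b_0\in[0,1],\,\epsilon_0\le\rhofun{T}{b_0}\}$, which is where the transmission-time pairs actually live and where the initial value is $\le 1$. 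Second, joint continuity of $\triggerChbar$ is not global: $\rhofun{T}{\cdot}$ is an affine decreasing function that vanishes at a finite argument, and $\tilde b(\tau,b_0,\epsilon_0)$ grows past that level, so the denominator of $\triggerChbar$ has a singularity in finite time. You need a tube-lemma step showing that, uniformly over $K$, there is $\tau^*>0$ with $\rhofun{T}{\tilde b(\tau,b_0,\epsilon_0)}$ bounded away from zero on $[0,\tau^*]\times K$ (this follows from the uniform bound on $\mathrm{d}\tilde b/\mathrm{d}\tau$ over that set) before invoking uniform continuity to separate $\triggerChbar$ from $2^{\bar p}$ near $\tau=0$. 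With those two repairs your argument is sound; what you give up relative to the paper is any computable estimate of the resulting lower bound on the inter-transmission times.
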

\begin{proof}
  Since $V(x(t_0)) \leq V_d(t_0)$ and $\triggerCh(t_0) \leq
  2^{\bar{p}}$, the trigger~\eqref{eqn:tk_trig2} implies that
  $\uline{p_1} \leq \bar{p}$. Similarly, if for each $k \in
  \integerspositive$, $p_k \in \integerspositive \intersect [
  \uline{p_k}, \bar{p} ]$, then~\eqref{eqn:tk_trig2} implies
  $\uline{p_{k+1}} \leq \bar{p}$, which proves~\emph{(i)}.

  To show~\emph{(ii)}, we study each of the two conditions that
  define~\eqref{eqn:tk_trig2}. Regarding the condition on the
  performance-ratio function, note that $\Gamma_1(b(t_k),
  \epsilon(t_k))$ is, by definition, a lower bound on the time it
  takes the condition to be enabled. Since~\eqref{eqn:tk_trig2}
  guarantees that $\triggerCh(t_k^-) \leq 2^{\bar{p}}$ and, as a
  result, $\triggerCh(t_k) \leq 1$ (with equality holding when $p_k =
  \uline{p_k}$), we have $\epsilon(t_k) \leq
  \rhofun{T}{b(t_k)}$. Therefore, Lemma~\ref{lem:Gamma1_prop}
  guarantees that $\Gamma_1(b(t_k), \epsilon(t_k)) \geq \min \{
  \Gamma_1(1, 1), T \} > 0$ for $k \in \integerspositive$.  Regarding
  the condition on the channel-trigger function
  in~\eqref{eqn:tk_trig2}, note that $\Gamma_2(b(t_k), \epsilon(t_k),
  \epsilon(t_k) )$ is, by definition, a lower bound on the time it
  takes the condition to be enabled.  We therefore focus on upper
  bounding the function $\triggerChbar$ that
  defines~$\Gamma_2$. First, notice that for $b_0\le 1$ and
  $\epsilon_0 \leq \rhofun{T}{b_0}$,~\eqref{eq:tildeb-ineq} implies
  that $\tilde{b}(\tau, b_0, \epsilon_0 ) \leq \tilde{b}(\tau, 1, 1 )$
  for all $\tau \in [ 0, \min \{ \Gamma_1(1, 1), T \} ]$.  The fact
  that $\rho_T$ is decreasing then implies that the second term in the
  definition~\eqref{eqn:hbar_def} of $\triggerChbar$ can be bounded
  by,
  \begin{align*}
    &\frac{ \nu ( e^{\Enorm{A} \tau} - 1 )/c }{\Enorm{A}
      \rhofun{T}{\tilde{b}( \tau, b_0, \epsilon_0)} \sqrt{ V_0 } }
    \leq \phi_2(\tau) \triangleq \frac{ \nu ( e^{\Enorm{A} \tau} - 1 )/c
    }{\Enorm{A} \rhofun{T}{ \tilde{b}( \tau, 1, 1) } \sqrt{ V_0 } }
    ,
  \end{align*}
  for $\tau \in [ 0, \min \{ \Gamma_1(1, 1), T \} ]$.  Next, we turn
  our attention to the first term in the
  definition~\eqref{eqn:hbar_def} of $\triggerChbar$.  Let $c_3$ be
  the negative of the coefficient of $b_0$ in the
  definition~\eqref{eqn:rho_def} of $\rhofun{T}{b_0}$. Observe that
  for $b_0 \geq 0$, $\epsilon_0 \geq 0$ and $\tau \in [ 0, \min \{
  \Gamma_1(1, 1), T \} ]$,
  \begin{align*}
    &\frac{\mathrm{d}}{\mathrm{d} \tau} \frac{ \psi_0 }{ \rhofun{T}{
        \tilde{b}(\tau, b_0, \epsilon_0) } }
    \\
    &= \frac{ \psi_0 c_3 }{ \rhofun{T}{ \tilde{b}(\tau, b_0,
        \epsilon_0) }^2 } [ - w \tilde{b} + W \epsilon_0 e^{\theta
      \tau} + c_1 + c_2 ( e^{\Enorm{A} \tau} - 1 ) ]
    \\
    &\leq \psi_0 c_3 [ W \epsilon_0 e^{\theta \tau} + c_1 + c_2 (
    e^{\Enorm{A} \tau} - 1 ) ] ,
  \end{align*}
  where we have used~\eqref{eqn:btilde_diff_eq} and the facts that
  $\tilde{b}(\tau, b_0, \epsilon_0 ) \le \tilde{b}(\tau, 1, 1) \le 1$
  for all $\tau \in [ 0, \min \{ \Gamma_1(1, 1), T \} ]$ and
  $\rhofun{T}{y} \geq 1$ for $y \in [0, 1]$. Then, the Comparison
  Lemma~\cite{HKK:02} implies that
  \begin{align*}
    &\frac{ \psi_0 }{ \rhofun{T}{ \tilde{b}(\tau, b_0, \epsilon_0) } }
    \leq \frac{ \psi_0 }{ \rhofun{T}{b_0} } \ +
    \\
    &\quad \psi_0 c_3 \bigg[ \frac{W \epsilon_0}{\theta} ( e^{\theta
      \tau} - 1) + \frac{c_2}{\Enorm{A}} ( e^{\Enorm{A} \tau} - 1 ) +
    (c_1 - c_2) \tau \bigg].
  \end{align*}
  Defining now $ \phi(\tau, \phi_0) \triangleq \Infnorm{ e^{A \tau} }
  e^{ \frac{\beta}{2}\tau } \phi_1(\tau, \phi_0) + \phi_2(\tau) $,
  with
  \begin{align*}
    &\phi_1( \tau, \phi_0 ) \triangleq \phi_0 + \phi_0 \rhofun{T}{0} 
    c_3 \bigg[ \frac{W \rhofun{T}{0} }{\theta} ( e^{\theta \tau} - 1) 
    \ +
    \\
    &\qquad\qquad\qquad\qquad \frac{c_2}{\Enorm{A}} ( e^{\Enorm{A} 
      \tau} - 1 ) + (c_1 - c_2) \tau \bigg].
  \end{align*}
  we deduce, for $\epsilon_0 \leq \rhofun{T}{b_0}$ and $\tau \in [ 0,
  \min \{ \Gamma_1(1, 1), T \} ]$,
  \begin{align}\label{eqn:upper-bound-triggerChbar}
    \triggerChbar(\tau, b_0, \epsilon_0, \psi_0) \leq \phi \big(\tau,
    \frac{ \psi_0 }{ \rhofun{T}{b_0} } \big) ,
  \end{align}
  where we have used $\rhofun{T}{ b_0 } \leq \rhofun{T}{0}$. Note 
  that since we are interested in lower bounding $\Gamma_2( 
  b(t_k), \epsilon(t_k), \epsilon(t_k) )$ with $\epsilon(t_k) \leq 
  \rhofun{T}{ b(t_k) }$, we can focus on the case $\psi_0 = 
  \epsilon_0 \leq \rhofun{T}{b_0}$, which leads to the bound
  \begin{align*}
  \triggerChbar(\tau, b_0, \epsilon_0, \psi_0) \leq \phi \big(\tau,
  1 \big) .
  \end{align*}
  Thus $\triggerChbar$ is bounded by a function that depends only on
  $\tau$ and is equal to $1$ at $\tau = 0$.  Hence, we deduce the
  existence of a uniform positive lower bound on the function
  $\Gamma_2(b_0, \epsilon_0, \psi_0)$ for $b_0 \in [0, 1]$ and $\psi_0
  = \epsilon_0 \leq \rhofun{T}{b_0}$. Thus $T_k = t_{k+1} - t_k \geq
  \min \{ T, \Gamma_1(b(t_k), \epsilon(t_k)) , \Gamma_2(b(t_k),
  \epsilon(t_k), \epsilon(t_k) \}$, for $k \in \integerspositive$ has
  a uniform positive lower bound, proving~\emph{(ii)}.  Claim
  \emph{(iii)} follows by noting that \emph{(i)} and \emph{(ii)} imply
  $b(t) \leq 1$,~$t \geq t_0$.
\end{proof}

The quantity $n \uline{p_k}$ in Theorem~\ref{thm:inst_tx_bounded_bit}
has now a slightly different interpretation than in
Theorem~\ref{thm:inst_tx_fin_bit}: it corresponds to the ``minimum"
number of bits sufficient to ensure that, after transmission, $b$
remains less than $1$ for at least the next $\min \{ T,
\Gamma_1(b(t_k), \epsilon(t_k))$ units of time.

\section{Event-Triggered Control with Bounded Bit Rates and
  Non-Instantaneous Transmission}\label{sec:ET_bnd_dat_rate-non-inst}

Here we design event-triggered laws for deciding the transmission
times and the number of bits used per transmission when communication
is not instantaneous. Such scenarios are common when the model
available for the communication channel specifies a capacity in terms
of bit rates.  In this case, we need to distinguish between the time
when the encoder/sensor transmits from the time when the
decoder/controller receives a complete packet of data. This
corresponds to the setup of Section~\ref{sec:prob_stat} in its full
generality.

\subsection{Information Consistency Between Encoder and
  Decoder}\label{sec:information-consistency}

Given the difference between transmission and communication times, the
first problem we tackle is making sure that the information (the state
estimate $\hat{x}$ and the upper bound $d_e$ on the encoding error
$x_e$) used by the encoder and the decoder is consistent.  The
mechanisms described here rely critically on the assumptions of
synchronized clocks and common knowledge of the communication time,
cf.  Section~\ref{sec:prob_stat}. According to the problem statement,
the encoder encodes its message at~$t_k$ and sends $n p_k$ bits which
are received completely by the decoder at~$r_k \geq t_k$.
Algorithms~\ref{algo:enc_update} and~\ref{algo:dec_update} describe,
respectively, how the encoder and the decoder update $\hat{x}$ and
$d_e$ synchronously at the time instants $r_k$.

\begin{algorithm}[htb]
  {
    \footnotesize \vspace*{1ex} 
    \textbf{At $t = t_0 = r_0$, the encoder initializes}
    \begin{algorithmic}[1]
    	\STATE  $\delta_0 \gets d_e(t_0)$
    	\mycomment{store initial bound on encoding error}
    \end{algorithmic}
    
    \vspace{5pt}
    \textbf{At $t \in \{ t_k \}_{k \in \integerspositive}$, the 
    encoder sets}
    \begin{algorithmic}[1]
    	\setcounter{ALC@line}{1}	  	
      \STATE $z_k \gets \hat{x}(t_k^-)$ 
      \mycomment{store encoder variable}
      \STATE  $z_{E, k} \gets q_{E,k}( x(t_k), z_k )$
      \\
      \mycomment{encode plant state with $p_k$ bits}
      \STATE  $\displaystyle \delta_k \gets {d_e(t_k^-)}/{2^{p_k}}$
      \mycomment{compute bound on encoding error}
    \end{algorithmic}
    
    \vspace{5pt}
    \textbf{At $t \in \{ r_k \}_{k \in \integerspositive}$, the 
    encoder sets}
    \begin{algorithmic}[1]
      \setcounter{ALC@line}{4}
      \STATE $z_{D,k} \gets q_{D,k}( z_{E,k} , z_k )$
      \mycomment{decode plant state at $t_k$}
      \STATE $\hat{x}(r_k) \gets \!e^{\bar{A} \Delta_k} z_k \!+\! 
      e^{A \Delta_k} ( z_{D,k} - z_k )$ 
      \\
      \mycomment{update controller state}
      \STATE $d_e(r_k) \gets \Infnorm{e^{A \Delta_k}} \delta_k + 
      \frac{\nu}{\Enorm{A}} [ e^{\Enorm{A} \Delta_k } - 1 ]$
      \\
      \mycomment{update bound on encoding error}
    \end{algorithmic}	   
  }
  \caption{\hspace*{-.5ex}: Update of encoder
    variables}\label{algo:enc_update}
\end{algorithm}
 
\begin{algorithm}[htb]
  {
    \footnotesize \vspace*{1ex}
    \textbf{At $t = t_0 = r_0$, the decoder initializes}
    \begin{algorithmic}[1]
    	\STATE  $\delta_0 \gets d_e(t_0)$
    	\mycomment{store initial bound on encoding error}
    \end{algorithmic}
    
    \vspace{5pt}
    \textbf{At $t \in \{ r_k \}_{k \in \integerspositive}$, the 
    decoder sets}
    \begin{algorithmic}[1]
    	\setcounter{ALC@line}{1}
      \STATE $z_k \gets e^{-\bar{A} \Delta_k} \hat{x}(r_k^-)$
      \mycomment{compute encoder state at $t_k$}
      \STATE $z_{E,k}$
      \mycomment{received from the encoder}
      \STATE $ \delta_k \gets \frac{1}{2^{p_k}} \big( \Infnorm{e^{A 
      ( t_k^- - t_{k-1} ) } } \delta_{k-1} + \frac{\nu}{\Enorm{A}} [ 
      	e^{\Enorm{A} ( t_k^- - t_{k-1} ) } - 1 ] \big)$ 
      \\
      \mycomment{compute bound on encoding error at $t_k$}
      \STATE $z_{D,k} \gets q_{D,k}( z_{E,k} , z_k )$
      \mycomment{decode plant state at $t_k$}
      \STATE $\hat{x}(r_k) \gets \!e^{\bar{A} \Delta_k} z_k \!+\! 
      e^{A \Delta_k} ( z_{D,k} - z_k )$ 
      \\
      \mycomment{update controller state}
      \STATE $d_e(r_k) \gets \Infnorm{e^{A \Delta_k}} \delta_k + 
      \frac{\nu}{\Enorm{A}} [ e^{\Enorm{A} \Delta_k } - 1 ]$
      \\
      \mycomment{update bound on encoding error}

    \end{algorithmic}
  }
  \caption{\hspace*{-.5ex}: Update of decoder
    variables}\label{algo:dec_update}
\end{algorithm}

It is interesting to note that, as described above, the algorithms are
also applicable in the case of instantaneous communication. The idea
of Step 6 in each algorithm is to propagate $z_{D,k}$ forward in time
so that it may be used from time $r_k$ onwards (in the case of
instantaneous communication, note that $\hat{x}(r_k) = z_{D,k}$). We
next establish that Algorithms~\ref{algo:enc_update}
and~\ref{algo:dec_update} provide consistent signals $t \mapsto
\hat{x}(t)$, $t \mapsto d_e(t)$ to the encoder and the decoder, with
its proof in the appendix.

\begin{lemma}\longthmtitle{Consistency of
    Algorithms~\ref{algo:enc_update}
    and~\ref{algo:dec_update}}\label{lem:consistent_algos}
  If initially the encoder and the decoder share identical values for
  $\hat{x}(t_0)$ and $d_e(t_0)$ then Algorithms~\ref{algo:enc_update}
  and~\ref{algo:dec_update} result in consistent $\hat{x}(t)$ and
  $d_e(t)$ signals for all $t \geq t_0$. Further, $t\mapsto
  \hat{x}(t)$ evolves according to~\eqref{eqn:x_hat} and
  $\Infnorm{x_e(t)} \leq d_e(t)$ with $d_e(t)$ defined for $t \in
  [r_k, r_{k+1})$ for $k \in \integersnonnegative$ as
  \begin{subequations}\label{eqn:coding_scheme}
    \begin{align}
      d_e(t) &\triangleq \Infnorm{e^{A(t-t_k)}} \delta_k + 
      \frac{\nu}{\Enorm{A}} [ e^{\Enorm{A} (t-t_k)} - 1 ] ,
      \label{eqn:de_evolve_rk}
      \\
      \delta_{k+1} &= \frac{1}{2^{p_{k+1}}} d_e(t_{k+1}^-). 
      \label{eqn:de_jump_rk}
    \end{align}
  \end{subequations}	\qed
\end{lemma}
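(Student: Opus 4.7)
The plan is to argue by induction on $k \in \integersnonnegative$ that the following invariants hold at each reception time $r_k$: (P1) the encoder and the decoder store identical values of $\delta_k$, $z_k$, $\hat{x}(r_k)$, and $d_e(r_k)$; (P2) on $[r_k, r_{k+1})$ the signals $\hat{x}(\cdot)$ and $d_e(\cdot)$ maintained by each party coincide, with $\hat{x}$ evolving according to~\eqref{eqn:xhat_evolve} and $d_e$ according to~\eqref{eqn:de_evolve_rk}; and (P3) $\Infnorm{x_e(t)} \leq d_e(t)$ on $[r_k, r_{k+1})$. The base case $k=0$ is immediate from the hypothesis that $\hat{x}(t_0)$ and $d_e(t_0)=\delta_0$ are shared at the start of the zoom-in stage, together with the standing premise $\Infnorm{x_e(t_0)} \leq d_e(t_0)$ inherited from the zoom-out phase.

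For the inductive step, the first task is to verify that the decoder recovers the encoder's reference point $z_{k+1} = \hat{x}(t_{k+1}^-)$. Between $r_k$ and $r_{k+1}$ both parties propagate $\hat{x}$ via $\dot{\hat{x}} = \bar{A}\hat{x}$ from the shared value $\hat{x}(r_k)$, so at the synchronized time $t_{k+1}$ each computes $e^{\bar{A}(t_{k+1}-r_k)} \hat{x}(r_k)$; the encoder stores this as $z_{k+1}$, while the decoder reproduces it at $r_{k+1}$ by back-propagating $\hat{x}(r_{k+1}^-)$ over the common known delay $\Delta_{k+1}$ in Step~2 of Algorithm~\ref{algo:dec_update}. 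An analogous argument handles $\delta_{k+1}$: the encoder's jump $\delta_{k+1} = d_e(t_{k+1}^-)/2^{p_{k+1}}$ matches the expression assembled in Step~3 because both sides use~\eqref{eqn:de_evolve_rk} evaluated at $t_{k+1}^-$ with the shared $\delta_k$ and elapsed time $t_{k+1}-t_k$. With $z_{k+1}$, $\delta_{k+1}$, $\Delta_{k+1}$, and the deterministic pair $(q_{E,k+1}, q_{D,k+1})$ agreeing on both sides, Steps~5--7 yield identical $z_{D,k+1}$, $\hat{x}(r_{k+1})$, and $d_e(r_{k+1})$, re-establishing (P1); propagating~\eqref{eqn:xhat_evolve} and~\eqref{eqn:de_evolve_rk} forward to $r_{k+2}$ then delivers (P2).

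To close (P3) at $r_{k+1}$, I would solve $\dot{x}_e = A x_e + v$ over $[t_{k+1}, r_{k+1}]$, using the fact that $\hat{x}$ does not jump at $t_{k+1}$ (both parties remain idle until $r_{k+1}$), and combine with the jump formula~\eqref{eqn:x_hat_jmp_nominal} to express $x_e(r_{k+1})$ as $e^{A \Delta_{k+1}}\bigl(x(t_{k+1}) - z_{D,k+1}\bigr)$ plus a forced term in~$v$. The hypercube encoding of Section~\ref{sec:coding-scheme} guarantees $\Infnorm{x(t_{k+1}) - z_{D,k+1}} \leq \delta_{k+1}$, so taking the infinity norm and applying~\eqref{eqn:disturb_bound} reproduces~\eqref{eqn:de_evolve_rk} at $t = r_{k+1}$. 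A standard comparison argument for $\dot{x}_e = A x_e + v$ against~\eqref{eqn:de_evolve_rk} then extends the bound to all of $[r_{k+1}, r_{k+2})$. I expect the main obstacle to be the bookkeeping in the inductive step — carefully distinguishing quantities at $t_k$ from those at $r_k$, and at $r_k^-$ from $r_k$, and invoking the synchronized-clock and common-$\Delta_k$ assumptions to certify that the decoder's back-propagation really reconstructs the encoder's stored $z_k$; once that is nailed down, the rest reduces to a routine matching of algorithm steps with~\eqref{eqn:coding_scheme} and~\eqref{eqn:x_hat_jmp_nominal}.
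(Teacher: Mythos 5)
Your proposal is correct and follows essentially the same route as the paper's proof: matching the encoder and decoder algorithm steps (in particular, that the decoder's back-propagation in Step~2 recovers $z_k = \hat{x}(t_k^-)$ and Step~4 reassembles $d_e(t_k^-)$), and then deriving the error bound by writing $x_e(r_k) = e^{A\Delta_k}\bigl(x(t_k) - z_{D,k}\bigr) + \int_{t_k}^{r_k} e^{A(r_k - s)} v(s)\,\mathrm{d}s$ from the jump map~\eqref{eqn:x_hat_jmp_nominal} and the variation-of-constants formula, exactly as the paper does. The explicit induction scaffolding you add is implicit in the paper's argument; no gap.
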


Note that although $d_e$ is updated by a jump at $\{ r_k \}_{k \in
  \integerspositive}$, the reference time in~\eqref{eqn:de_evolve_rk}
is still $t_k$ (because using the reference time $r_k$ instead would
result in a larger encoding error bound).

\subsection{Trigger Design and Analysis}

The basic underlying idea behind our event-triggered design in the
scenario of non-instantaneous communication is to anticipate ahead of
time the crossings of~$1$ by the performance-ratio function $b$ and the
channel-trigger function $\triggerCh$ after transmitting at most $n
\bar{p}$ number of bits.  Noting the update rule that gives $d_e(r_k)$
in Algorithms~\ref{algo:enc_update} and~\ref{algo:dec_update} and
following arguments analogous to those of
Lemma~\ref{lem:bound-triggerCh}, we see that
\begin{align*}
  & \triggerCh(r_k) \leq \triggerChbar \bigg( \Delta_k, b(t_k^-),
  \epsilon(t_k^-), \frac{ \epsilon(t_k^-) }{ 2^{p_k} } \bigg).
\end{align*}
Unlike in the case of instantaneous communication, we need to
distinguish between the third and the fourth argument
in~$\triggerChbar$ because the transmitted bits do not affect the
value of $\epsilon$ until $r_k$. If we can ensure that
$\triggerCh(r_k) \leq 1$, then the definition~\eqref{eqn:Gamma1_def}
of $\Gamma_1$ and Lemma~\ref{lem:Gamma1_prop} guarantee $b
\leq 1$ until $r_k + \min \{ \Gamma_1(1, 1), T \}$. To anticipate
$\triggerCh(r_k) \leq 1$, we define
\begin{align}\label{eqn:Gamma2_tilde_def}
  \tilde{\Gamma}_2(b_0, \epsilon_0, \psi_0) &\triangleq \min \{ \tau
  \geq 0 : \triggerChbar(\tau, b_0, \epsilon_0, \psi_0) = 1 \}.
\end{align}
From~\eqref{eqn:upper-bound-triggerChbar} we have that for $(
2^{\bar{p}} \psi_0 ) = \epsilon_0 \leq \rhofun{T}{b_0}$,
$\tilde{\Gamma}_2(b_0, \epsilon_0, \psi_0) \geq \min \{ \Gamma_1(1,
1), T, T^* \}$ with
\begin{equation*}
  T^* \triangleq \min \{ \tau \geq 0 : \phi \big( \tau, 1 / ( 
  2^{\bar{p}} ) \big) = 1 \}.
\end{equation*}
Given this discussion, we make the following assumption on the
function~$\Delta$ that describes the communication channel.
\begin{enumerate}[label={\textbf{(A)}},ref={A}]
\item For any $t \in \realnonnegative$, $\Delta(t,1) \geq 0$. Also, if
  $s_1 \leq s_2$, then $\Delta(t,s_1) \leq \Delta(t,s_2)$.  Given
  $\bar{p} \in \integerspositive$, there exists $T_M \in
  \realnonnegative$ with $ T_M < \min \{ \Gamma_1(1, 1), T, T^* \}$
  such that $\Delta(t,\bar{p}) \leq T_M$ for all $t \geq
  0$. \label{A:transmit_time} 
\end{enumerate}
Hence the event-triggering rule must anticipate at least $T_M$ units
of time ahead the crossing of $1$ by $b$ and anticipate
$\triggerCh(r_k) \geq 1$ even after having transmitted the maximum
number of bits, $n \bar{p}$, at $t_k$. In other words, we want to
ensure $\triggerCh(r_k) \leq 1$ so that $b < 1$ for at least all $t
\in [r_k, t_{k+1})$.  The fact $T_M < \min \{ \Gamma_1(1, 1), T, T^*
\}$ then ensures $t_{k+1} - r_k > 0$.

Our problem then reduces to checking the zero-crossing of the
functions $\Gamma_1 - T_M$, and $\tilde{\Gamma}_2 - T_M$. However,
computing the functions $\Gamma_1$ and $\tilde{\Gamma}_2$ repeatedly
as part of the event-triggering rule would impose an unnecessary
computational burden. For this reason, we seek a way to check the
conditions without having to explicitly compute $\Gamma_1$ and
$\tilde{\Gamma}_2$. The following result provides a solution for the
case of~$\Gamma_1$. We provide its proof in the appendix.

\begin{lemma} \longthmtitle{Algebraic Condition to Check if $b < 1$
    for the next $T^\circ$ units of time}\label{lem:Gamma1-T_sign} Let
  $T^\circ > 0$. For any $b_0 \in [0, 1]$, $\Gamma_1(b_0, \epsilon_0)
  > T^\circ$ if and only if $\tilde{b}(T^\circ, b_0, \epsilon_0) <
  1$. Further, the corresponding statement with the inequalities
  reversed and the one in which the inequalities are replaced by
  equality are true. \qed
\end{lemma}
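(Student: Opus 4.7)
My plan is to reduce everything to the linear ODE satisfied by $\tilde b$. Writing $g(\tau) \triangleq \tilde b(\tau, b_0, \epsilon_0) = f_1(\tau) e^{-w \tau}$ and differentiating, one obtains
\begin{align*}
  \dot g(\tau) + w \, g(\tau) = h(\tau), \qquad g(0) = b_0,
\end{align*}
with $h(\tau) \triangleq W \epsilon_0 e^{\theta \tau} + c_1 + c_2 ( e^{\Enorm{A} \tau} - 1 )$. Since $W, \epsilon_0, c_1, c_2 \geq 0$ and $\theta > 0$, $\Enorm{A} \geq 0$, the forcing $h$ is non-decreasing in $\tau$. This identity---the same one invoked implicitly in the proof of Theorem~\ref{thm:inst_tx_bounded_bit}---is essentially the only tool I would use.

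For the forward direction I would first establish the sub-claim that $g(\tau) < 1$ for every $\tau \in (0, \Gamma_1)$. When $b_0 < 1$ this is immediate: if some first $\tau^\star \in (0, \Gamma_1)$ had $g(\tau^\star) = 1$, then since $g$ arrives at $1$ from below one has $\dot g(\tau^\star) \geq 0$, and $\tau^\star$ already satisfies the defining conditions of $\Gamma_1$, a contradiction. When $b_0 = 1$ with $\Gamma_1 > 0$, necessarily $\dot g(0) < 0$ (otherwise $0$ itself would satisfy those conditions), so $g$ dips strictly below $1$ immediately and the same first-return argument applies. Evaluating the sub-claim at $\tau = T^\circ \in (0, \Gamma_1)$ yields $g(T^\circ) < 1$.

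For the reverse direction I would argue by contradiction, assuming $g(T^\circ) < 1$ and $\Gamma_1 \leq T^\circ$. At $\tau = \Gamma_1$ the definition of $\Gamma_1$ gives $g(\Gamma_1) = 1$ and $\dot g(\Gamma_1) \geq 0$, so the ODE forces $h(\Gamma_1) \geq w$; monotonicity of $h$ then gives $h(s) \geq w$ for every $s \in [\Gamma_1, T^\circ]$. Applying the integrating factor $e^{w\tau}$ to the ODE for $G(\tau) \triangleq g(\tau) - 1$ yields
\begin{align*}
  G(T^\circ) = e^{-w T^\circ} \int_{\Gamma_1}^{T^\circ} e^{w s} \bigl( h(s) - w \bigr) \, \mathrm{d} s \;\geq\; 0,
\end{align*}
contradicting $g(T^\circ) < 1$. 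The equality and reversed-inequality versions then follow from these same two ingredients---the sub-claim on $(0, \Gamma_1)$ and the integral representation of $G$ on $[\Gamma_1, T^\circ]$---together with the trichotomy between $\Gamma_1$ and $T^\circ$.

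The main obstacle I anticipate is the bookkeeping around the case $b_0 = 1$, where $g$ starts exactly at the threshold and the distinction between $\Gamma_1 = 0$ and $\Gamma_1 > 0$ hinges on the sign of $\dot g(0)$; one must be careful not to confuse the \emph{first time $g = 1$} with $\Gamma_1$, which is the first such time at which $\dot g \geq 0$. A secondary subtlety is the degenerate regime $W \epsilon_0 = c_2 = 0$, in which $h$ is the constant $c_1$ and $g$ could in principle stay identically $1$ on an interval past $\Gamma_1$; in every regime relevant to the paper's event-triggered design, $\theta > 0$ together with $\epsilon_0 > 0$ after a triggering event make $h$ strictly increasing and render all three equivalences clean.
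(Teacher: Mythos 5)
Your argument is correct and rests on the same mechanism as the paper's own proof: the forcing term $h(\tau)-w$ (equivalently $e^{-w\tau}(\dot{f}_1-\dot{f}_2)$) is nondecreasing and hence changes sign at most once, from negative to positive, so $\tilde{b}$ can cross $1$ upward at most once. The paper packages this as uniqueness of the root of $f_1=f_2$ via the unimodal shape of $f_1-f_2$, whereas you split it into a first-crossing argument on $(0,\Gamma_1)$ and a variation-of-constants bound on $[\Gamma_1,T^\circ]$ (your integral is exactly $\int(\dot{f}_1-\dot{f}_2)$), so this is a reorganization of the same proof rather than a genuinely different route; your flagged degenerate case is in fact vacuous since $c_1$ and $c_2$ are both proportional to $\nu$ and $w>0$.
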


Next, we make a similar observation about $\Gamma_2$. Again, we
provide the proof in the appendix.

\begin{lemma}\longthmtitle{Algebraic Condition to Check the Sign of
    $\tilde{\Gamma}_2 - T^\circ$}\label{lem:Gamma2-T_sign}
  Let $T^\circ > 0$. For any $b_0 \geq 0$ and $\epsilon_0 \geq 0$,
  $\tilde{\Gamma}_2(b_0, \epsilon_0, \psi_0) > T^\circ$ if and only if
  $\triggerChbar(T^\circ, b_0, \epsilon_0, \psi_0) < 1$.  Further, the
  corresponding statement with the inequalities reversed and the one
  in which the inequalities are replaced by equality are true. \qed
\end{lemma}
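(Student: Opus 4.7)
My plan is to parallel the proof of Lemma~\ref{lem:Gamma1-T_sign}, replacing the function $\tilde{b}$ by the composite function $\triggerChbar$. By the definition~\eqref{eqn:Gamma2_tilde_def}, $\tilde{\Gamma}_2$ is the first time at which $\triggerChbar$ attains the value $1$, so the three equivalences amount to the statement that this crossing is a genuine upward crossing and that the sign of $\triggerChbar-1$ is well behaved on either side of $\tilde{\Gamma}_2$.

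First I would record that $\tau \mapsto \triggerChbar(\tau, b_0, \epsilon_0, \psi_0)$ is continuous: $\tilde{b}(\tau, b_0, \epsilon_0)$ is smooth in $\tau$ via the explicit formula from Lemma~\ref{lem:bound-b}, $\rhofun{T}{\cdot}$ is continuous and bounded away from zero on the relevant range, and the factors $\Infnorm{e^{A\tau}}$, $e^{\beta\tau/2}$, and $e^{\Enorm{A}\tau}$ are continuous in $\tau$. Continuity alone handles the equality case: minimality in~\eqref{eqn:Gamma2_tilde_def} forces $\triggerChbar(\tilde{\Gamma}_2) = 1$, and conversely if $\triggerChbar(T^\circ) = 1$ then $T^\circ$ lies in the defining set so $\tilde{\Gamma}_2 \le T^\circ$, the matching lower bound coming from the absorption property below.

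The central step is an absorption argument: if $\triggerChbar(\tau^*, b_0, \epsilon_0, \psi_0) = 1$ for some $\tau^*\ge 0$, then $\triggerChbar(\tau, b_0, \epsilon_0, \psi_0) \ge 1$ for all $\tau \ge \tau^*$. The analogous property for $\tilde{b}$, used in Lemma~\ref{lem:Gamma1-T_sign}, follows from the ODE $\tilde{b}' = -w\tilde{b} + W\epsilon_0 e^{\theta \tau} + c_1 + c_2(e^{\Enorm{A}\tau}-1)$ together with the fact that the forcing is strictly increasing in $\tau$, so the drift at $\tilde{b} = 1$ can only become more positive. For $\triggerChbar = N(\tau)/\rhofun{T}{\tilde{b}(\tau)}$, I would differentiate in $\tau$ and show that at any crossing $\tau^*$ the derivative is strictly positive. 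The bookkeeping essentially reuses the estimate of $\mathrm{d}(\psi_0/\rhofun{T}{\tilde b})/\mathrm{d}\tau$ already carried out in the proof of Theorem~\ref{thm:inst_tx_bounded_bit}, combined with the strict growth of the exponential terms making up the numerator $N(\tau) = \Infnorm{e^{A\tau}} e^{\beta \tau/2}\psi_0 + \tfrac{\nu}{c\Enorm{A}\sqrt{V_0}}(e^{\Enorm{A}\tau}-1)$.

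With continuity and absorption in hand, the three equivalences follow. The strict-less case: if $\triggerChbar(T^\circ) < 1$ then absorption rules out any earlier crossing of $1$, so $\tilde{\Gamma}_2 > T^\circ$; conversely, $\tilde{\Gamma}_2 > T^\circ$ means $\triggerChbar$ has not yet reached $1$ on $[0, T^\circ]$, and continuity together with absorption (which prevents $\triggerChbar$ from lying above $1$ without a prior crossing) pins the sign of $\triggerChbar(T^\circ)$ to be strictly less than $1$. The strict-greater case then follows by symmetric reasoning. The main obstacle will be the derivative computation underpinning the absorption step: although the numerator of $\triggerChbar$ is manifestly increasing, the denominator $\rhofun{T}{\tilde{b}}$ can vary non-monotonically with $\tau$, so the positivity of $\mathrm{d}\triggerChbar/\mathrm{d}\tau$ at a crossing must be extracted by carefully combining the two contributions, exactly as was done in the derivative estimate in the proof of Theorem~\ref{thm:inst_tx_bounded_bit}.
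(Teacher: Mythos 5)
Your reduction of the lemma to a ``single upward crossing'' property of $\tau\mapsto\triggerChbar(\tau,b_0,\epsilon_0,\psi_0)$ is the same reduction the paper makes (it phrases it as uniqueness of the solution of $\triggerChbar=1$), and your continuity and equality-case bookkeeping are fine. The gap is in the mechanism you propose for the absorption step. You want to show that $\mathrm{d}\triggerChbar/\mathrm{d}\tau>0$ at every level-$1$ crossing, citing the derivative estimate from the proof of Theorem~\ref{thm:inst_tx_bounded_bit}. That estimate is an \emph{upper} bound on $\mathrm{d}(\psi_0/\rhofun{T}{\tilde{b}})/\mathrm{d}\tau$, obtained precisely by discarding the term $-w\tilde{b}$; it is used to upper-bound $\triggerChbar$ and cannot supply the \emph{lower} bound on the derivative that your argument needs. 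Worse, the claim itself fails on the stated parameter range: writing $\triggerChbar=N(\tau)/\rhofun{T}{\tilde{b}(\tau)}$ with $\rhofun{T}{y}=1+c_3(1-y)$, at a crossing $N=\rhofun{T}{\tilde{b}}$ and
\begin{equation*}
  \frac{\mathrm{d}\triggerChbar}{\mathrm{d}\tau}
  =\frac{1}{\rhofun{T}{\tilde{b}}}\Big(N'+c_3\,\frac{\mathrm{d}\tilde{b}}{\mathrm{d}\tau}\Big).
\end{equation*}
Taking, say, $\nu=0$, $\epsilon_0=0$, $b_0=\psi_0=1$ gives $\mathrm{d}\tilde{b}/\mathrm{d}\tau=-w\tilde{b}<0$, while $N'$ near $\tau=0$ is bounded independently of the design parameter $T$ and $c_3=(w+\theta)/\big(W(e^{(w+\theta)T}-1)\big)$ blows up as $T$ decreases; the derivative at the crossing $\tau=0$ is then negative. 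So positivity of $\mathrm{d}\triggerChbar/\mathrm{d}\tau$ at crossings is not the right engine for the proof.

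The paper instead pushes the monotonicity argument one level down, onto $\tilde{b}$: if $\triggerChbar=1$ had a second solution $\tau^*>\tilde{\Gamma}_2$, then, because the numerator $N$ is monotonically increasing and $\rhofun{T}{\cdot}$ is decreasing, $\triggerChbar$ and hence $\tilde{b}$ would have to attain a local maximum on $[\tilde{\Gamma}_2,\tau^*]$; this is ruled out by computing $\mathrm{d}^2\tilde{b}/\mathrm{d}\tau^2=-w\,\mathrm{d}\tilde{b}/\mathrm{d}\tau+W\epsilon_0\theta e^{\theta\tau}+c_2\Enorm{A}e^{\Enorm{A}\tau}$, which is positive at any critical point of $\tilde{b}$, so every critical point is a strict local minimum and $\tilde{b}$ has no local maximum. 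This is exactly the ``increasing forcing'' mechanism you correctly identified when discussing Lemma~\ref{lem:Gamma1-T_sign} but then abandoned when passing to $\triggerChbar$. To repair your proof, keep the local-extremum analysis on $\tilde{b}$ and use the numerator/denominator structure of $\triggerChbar$ only to transfer the conclusion, rather than trying to sign the derivative of $\triggerChbar$ itself at a crossing.
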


We are finally ready to present the main result of the section.

\begin{theorem}\longthmtitle{Bounded  Communication
    Rate with Non-Instantaneous
    Transmission}\label{thm:noninst_com_bounded_bit} 
  Consider the system~\eqref{eqn:plant_dyn} under the feedback law $u
  = K \hat{x}$, with $t\mapsto \hat{x}(t)$ evolving according
  to~\eqref{eqn:x_hat} and the sequence $\{t_k\}_{k \in
    \integersnonnegative}$ determined recursively by
  \begin{align}\label{eqn:tk_trig3}
    t_{k+1} = \min \{ t \geq r_k: \ &\tilde{b}(T_M, b(t), \epsilon(t))
    \geq 1 \ \operatorname{OR}
    \\
    & \triggerChbar( T_M, b(t), \epsilon(t), ( \epsilon(t) /
    2^{\bar{p}} ) ) \geq 1 \} , \notag
  \end{align}
  where $n \bar{p}$ is the upper bound on the number of bits that can
  be sent per transmission, $T > 0$ in the
  definition~\eqref{eqn:hbar_def} of $\triggerChbar$ is a design
  parameter, and $T_M$ is as given in
  Assumption~\eqref{A:transmit_time}.  Let $\{r_k\}_{k \in
    \integersnonnegative}$ be given as $r_0 = t_0$ and $r_k = t_k +
  \Delta_k$ for $k \in \integerspositive$.  Assume the encoding scheme
  is such that~\eqref{eqn:coding_scheme} is satisfied for all $t \ge t_0$.
  Further assume that $V(x(t_0)) \leq V_d(t_0)$, $\triggerChbar(T_M,
  b(t_0), \epsilon(t_0), ( \epsilon(t_0) / 2^{\bar{p}} ) ) \leq 1$ and
  that~\eqref{eq:W}-\eqref{eqn:V0_condition} hold.  Let $\uline{p_k}$
  be given by
  \begin{equation}\label{eqn:pk_lbound_case3}
    \uline{p_k} \! \triangleq \! \min \{ p \in \integerspositive : 
    \triggerChbar( T_M, b(t_k), \epsilon(t_k), \tfrac{\epsilon(t_k)}{2^p }
    ) \leq 1 \}.
  \end{equation}
  Then, the following hold:
  \begin{enumerate}
  \item $\uline{p_1} \leq \bar{p}$. Further for each $k \in
    \integerspositive$, if $ p_k \in \integerspositive \intersect [
    \uline{p_k}, \bar{p} ]$, then $\uline{p_{k+1}} \leq \bar{p}$.

  \item the inter-transmission times $\{ T_k = t_{k+1} - t_k \}_{k \in
      \integerspositive}$ and inter-reception times $\{ R_k \triangleq
    r_{k+1} - r_k \}_{k \in \integerspositive}$ have a uniform
    positive lower bound,

  \item the origin is exponentially practically stable for the
    closed-loop system, with $V(x(t)) \leq V_d(t) = ( V_d(t_0) - V_0)
    e^{ -\beta (t - t_0)} + V_0$ for all $t \geq t_0$.
  \end{enumerate}
\end{theorem}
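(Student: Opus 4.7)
The plan is to follow the same template as the proof of Theorem~\ref{thm:inst_tx_bounded_bit}, with an additional layer of care to accommodate the communication delay $\Delta_k \leq T_M$. The heart of the argument is an invariant carried by induction on~$k$: at every reception time $r_k$ we maintain both $b(r_k) \leq 1$ and $\triggerCh(r_k) \leq 1$ (equivalently $\epsilon(r_k) \leq \rhofun{T}{b(r_k)}$). The base case at $r_0 = t_0$ is exactly the hypothesis of the theorem. Claim~\emph{(i)} is then essentially immediate: at $t_{k+1}^-$ both constituent expressions of the trigger~\eqref{eqn:tk_trig3} are $\leq 1$, so by continuity the same holds at $t_{k+1}$; in particular $\triggerChbar(T_M, b(t_{k+1}), \epsilon(t_{k+1}), \epsilon(t_{k+1})/2^{\bar{p}}) \leq 1$, which via the definition~\eqref{eqn:pk_lbound_case3} of $\uline{p_{k+1}}$ exhibits $p=\bar{p}$ as a feasible choice, giving $\uline{p_{k+1}} \leq \bar{p}$; the base case $\uline{p_1} \leq \bar{p}$ uses the analogous hypothesis at~$t_0$.

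For claim~\emph{(iii)} I would show $b(t) \leq 1$ on each interval $[r_k, r_{k+1})$ by splitting it at $t_{k+1}$. On $[r_k, t_{k+1})$ the definition of $t_{k+1}$ forces $\tilde{b}(T_M, b(t), \epsilon(t)) < 1$; Lemma~\ref{lem:Gamma1-T_sign} then gives $\Gamma_1(b(t), \epsilon(t)) > T_M$, and applying the bound of Lemma~\ref{lem:bound-b} at each such~$t$ yields $b \leq 1$ on the sub-interval. On $[t_{k+1}, r_{k+1})$, where the controller state is still held from~$r_k$, I would evaluate the trigger at $t_{k+1}$: $\tilde{b}(T_M, b(t_{k+1}), \epsilon(t_{k+1})) \leq 1$, so $\Gamma_1(b(t_{k+1}), \epsilon(t_{k+1})) \geq T_M \geq \Delta_{k+1}$, and Lemma~\ref{lem:bound-b} yields $b(t) \leq 1$ up to $r_{k+1}^-$. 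Since neither $x$ nor $V_d$ jumps at $r_{k+1}$, the first half of the invariant, $b(r_{k+1}) \leq 1$, is restored. For the second half I would use the preliminary bound $\triggerCh(r_{k+1}) \leq \triggerChbar(\Delta_{k+1}, b(t_{k+1}), \epsilon(t_{k+1}), \epsilon(t_{k+1})/2^{p_{k+1}})$ from the beginning of Section~\ref{sec:ET_bnd_dat_rate-non-inst}, combined with the monotonicity of $\triggerChbar$ in its first argument (evident from~\eqref{eqn:hbar_def}), the inequality $\Delta_{k+1} \leq T_M$, and $p_{k+1} \geq \uline{p_{k+1}}$ via~\eqref{eqn:pk_lbound_case3}. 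With $b(t) \leq 1$ for all $t \geq t_0$, the exponential practical stability and the bound $V(x(t)) \leq V_d(t)$ follow directly.

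For claim~\emph{(ii)}, since $R_k \geq t_{k+1} - r_k$ and $T_k = \Delta_k + (t_{k+1} - r_k) \geq t_{k+1} - r_k$, it suffices to bound $t_{k+1} - r_k$ below uniformly in~$k$. At $r_k$ the invariant places $(b(r_k), \epsilon(r_k))$ in the compact admissible set $\{(b_0,\epsilon_0) : 0 \leq b_0 \leq 1,\; 0 \leq \epsilon_0 \leq \rhofun{T}{b_0}\}$. Using the monotonicity in Lemma~\ref{lem:Gamma1_prop}(ii)--(iii) together with Lemma~\ref{lem:Gamma1-T_sign} gives $\tilde{b}(T_M, b(r_k), \epsilon(r_k)) \leq \tilde{b}(T_M, 1, 1) < 1$ because $T_M < \Gamma_1(1,1)$; and the bound~\eqref{eqn:upper-bound-triggerChbar} derived in the proof of Theorem~\ref{thm:inst_tx_bounded_bit} gives $\triggerChbar(T_M, b(r_k), \epsilon(r_k), \epsilon(r_k)/2^{\bar{p}}) < 1$ because $T_M < T^*$. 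Thus at $r_k$ both trigger expressions are strictly below~$1$ by gaps that depend only on system parameters.

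The main obstacle I foresee is converting these strict inequalities at $\tau = 0$ into a genuinely uniform positive lower bound on $t_{k+1} - r_k$, rather than a merely existential one. My intended route is to propagate bounds along the flow on $[r_k, t_{k+1})$ by a Comparison-Lemma argument analogous to the one that produces~\eqref{eqn:upper-bound-triggerChbar}: this upper bounds both $\tilde{b}(T_M, b(t), \epsilon(t))$ and $\triggerChbar(T_M, b(t), \epsilon(t), \epsilon(t)/2^{\bar{p}})$ by continuous functions of $t - r_k$ alone that are strictly below~$1$ at $t - r_k = 0$. The time for either such upper-bound function to reach~$1$ is a system-parameter constant, furnishing the desired uniform lower bound and completing the proof.
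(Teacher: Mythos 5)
Your proposal follows essentially the same route as the paper's proof: the same inductive invariant at the reception times ($b(r_k)\le 1$ together with $\triggerCh(r_k)\le 1$, i.e.\ $\epsilon(r_k)\le\rhofun{T}{b(r_k)}$), the same reduction of claim \emph{(ii)} to a uniform lower bound on $t_{k+1}-r_k$, the same use of Lemmas~\ref{lem:Gamma1-T_sign} and~\ref{lem:Gamma2-T_sign} to convert the anticipatory triggers into statements about $\Gamma_1$ and $\tilde{\Gamma}_2$, and the same deferral to the bound~\eqref{eqn:upper-bound-triggerChbar} and the dwell-time argument of Theorem~\ref{thm:inst_tx_bounded_bit} for the final step.

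One step, however, would fail as justified. To pass from $\triggerChbar(T_M,\cdot)\le 1$ to $\triggerChbar(\Delta_{k+1},\cdot)\le 1$ you invoke ``monotonicity of $\triggerChbar$ in its first argument, evident from~\eqref{eqn:hbar_def}.'' This monotonicity is neither evident nor true in general: the denominator $\rhofun{T}{\tilde{b}(\tau,b_0,\epsilon_0)}$ involves $\tilde{b}$, which obeys~\eqref{eqn:btilde_diff_eq} and can initially decrease, so $\triggerChbar$ need not be nondecreasing in $\tau$. The correct tool is a double application of Lemma~\ref{lem:Gamma2-T_sign}, exactly as the paper does: $\triggerChbar(T_M,\cdot)\le 1$ yields $\tilde{\Gamma}_2\ge T_M\ge\Delta_{k+1}$, and the reverse implication of that lemma then yields $\triggerChbar(\Delta_{k+1},\cdot)\le 1$; the content of Lemma~\ref{lem:Gamma2-T_sign} is precisely that $\triggerChbar$ has a unique up-crossing of $1$, which substitutes for the monotonicity you assumed. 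A second, milder point: your pointwise application of Lemma~\ref{lem:Gamma1-T_sign} at every $t\in[r_k,t_{k+1})$ presupposes $b(t)\in[0,1]$, which is part of what is being proved; this should be packaged as a first-crossing/continuity argument, or applied only at $t_k$ and $r_k$ where the inductive invariant already supplies $b\le 1$, as in the paper. Neither repair changes the architecture of your argument.
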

\begin{proof}
  Since $V(x(t_0)) \leq V_d(t_0)$, i.e. $b(t_0) \leq 1$, and $
  \triggerChbar(T_M, b(t_0), \epsilon(t_0), ( \epsilon(t_0) /
  2^{\bar{p}} ) ) \leq 1$ the trigger~\eqref{eqn:tk_trig3} implies
  that $\uline{p_1} \leq \bar{p}$. Similarly, if for each $k \in
  \integerspositive$, $p_k \in \integerspositive \intersect [
  \uline{p_k}, \bar{p} ]$, then~\eqref{eqn:tk_trig3} implies
  $\uline{p_{k+1}} \leq \bar{p}$, which proves \emph{(i)}.
	
  Regarding \emph{(ii)}, note that Assumption~\eqref{A:transmit_time}
  implies that $r_k - t_k \geq 0$ for $k \in
  \integerspositive$. Therefore, it is enough to show that there
  exists a uniform lower bound on $t_{k+1} - r_k$. Notice 
  that~\eqref{eqn:pk_lbound_case3} implies that
  \begin{align*}
  \triggerChbar( T_M, b(t_k^-), \epsilon(t_k^-), (
  \epsilon(t_k^-) / 2^{ p_k } ) ) \leq 1,
  \end{align*}
  which in turn implies, as a consequence of the fact that $\Delta_k 
  \leq T_M$ and Lemma~\ref{lem:Gamma2-T_sign}, that 
  $\tilde{\Gamma}_2( b(t_k^-), \epsilon(t_k^-), ( \epsilon(t_k^-) / 
  2^{ p_k } ) ) - \Delta_k \geq 0$. Invoking 
  Lemma~\ref{lem:Gamma2-T_sign} once more, we see that
  \begin{align*}
    \triggerCh(r_k) &\leq \triggerChbar( \Delta_k, b(t_k^-),
    \epsilon(t_k^-), ( \epsilon(t_k^-) / 2^{p_k} ) ) \leq 1.
  \end{align*}
  In other words, $\epsilon(r_k) \leq \rhofun{T}{ b(r_k) }$. Now,
  let us pick $\tilde{T} \in (0, T)$ and notice that
  Lemma~\ref{lem:Gamma1_prop} guarantees that for all $\epsilon_0 \leq
  \rhofun{\tilde{T}}{ b_0 }$, $\Gamma_1(b_0, \epsilon_0) \geq \min \{
  \Gamma_1(1, 1), \tilde{T} \}$. Since $\tilde{T} \in (0, T)$, there
  exists a constant $\varpi \in (0, 1)$ such that $\epsilon(r_k) \leq
  \varpi \rhofun{ \tilde{T} }{ b(r_k) }$. Thus, again for all
  $\epsilon_0 \leq \rhofun{\tilde{T}}{ b_0 }$, we have that
  $\tilde{\Gamma}_2(b_0, \epsilon_0, \psi_0) \geq \min \{ \Gamma_1(1,
  1), \tilde{T}, T^\bullet \}$, with
  \begin{equation*}
    T^\bullet \triangleq \min \{ \tau \geq 0 : \phi \big( \tau, 
    \varpi / ( 2^{\bar{p}} ) \big) = 1 \}.
  \end{equation*}
  Since $T_M < T$ by Assumption~\eqref{A:transmit_time}, there exists
  a choice of $\tilde{T} \in (T_M , T)$ such that $T_M < \min \{
  \Gamma_1(1, 1), \tilde{T}, T^\bullet \}$.  Thus, by
  Lemma~\ref{lem:Gamma2-T_sign}, we have that for all $\epsilon_0 \leq
  \rhofun{\tilde{T}}{ b_0 }$, $\triggerChbar( T_M, b_0, \epsilon_0, (
  \epsilon_0 / 2^{\bar{p}} ) ) < 1$. As a consequence, for $k \in
  \integersnonnegative$, $t_{k+1} - r_k$ is uniformly lower bounded by
  the time it takes $\displaystyle \frac{ \epsilon(t) }{ \rhofun{
      \tilde{T} }{ b(t) } }$ to evolve from $\varpi$ to $1$, which in
  turn can be shown to have a uniform positive lower bound following
  arguments analogous to those in the proof of
  Theorem~\ref{thm:inst_tx_bounded_bit}.
 
  Regarding \emph{(iii)}, note that from the triggering
  rule~\eqref{eqn:tk_trig3}, we see that $\tilde{b}(T_M, b(t_k),
  \epsilon(t_k)) \leq 1$, which from Lemma~\ref{lem:Gamma1-T_sign}
  implies that $\Gamma_1(b(t_k), \epsilon(t_k)) \geq T_M$. In other
  words, $V(x(t)) \leq V_d(t)$ (i.e., $b(t) \leq 1$) for \textit{at
    least} all $t \in [t_k, r_k]$ for any $k \in
  \integersnonnegative$. Since $\triggerChbar(T_M, b(t_0),
  \epsilon(t_0), ( \epsilon(t_0) / 2^{\bar{p}} ) ) \leq 1$ it means
  that $\epsilon(r_0) \leq \rhofun{T}{ b(r_0) }$.  Further, we have
  already seen that for any $k \in \integerspositive$, $\epsilon(r_k)
  \leq \rhofun{T}{ b(r_k) }$. Therefore, for any $k \in
  \integersnonnegative$, $\Gamma_1(b(r_k), \epsilon(r_k)) \geq
  \Gamma_1(1,1) > 0$. This means $V(x(t)) \leq V_d(t)$ (i.e., $b(t)
  \leq 1$) for \textit{at least} all $t \in [r_k, t_{k+1}]$. Putting
  these two facts together with \emph{(ii)} concludes the proof.
\end{proof}

Despite its appearance, note that the event-triggering
rule~\eqref{eqn:tk_trig3} in Theorem~\ref{thm:noninst_com_bounded_bit}
is a generalization of the rule~\eqref{eqn:tk_trig2} in
Theorem~\ref{thm:inst_tx_bounded_bit}. In fact, when communication is
instantaneous, $T_M = 0$, and we have $\tilde{b}(T_M, b(t),
\epsilon(t)) = b(t)$ and $\triggerChbar( T_M, b(t), \epsilon(t), (
\epsilon(t) / 2^{\bar{p}} ) ) = \triggerCh(t) / ( 2^{\bar{p}} )$.

\begin{remark}\longthmtitle{Tuning the parameter $T$}
  {\rm The parameter $T$ in~\eqref{eqn:rho_def} presents a trade-off
    between maximum allowable communication delay $T_M$ and
    inter-transmission times through $\epsilon$ (in the sense of
    $\tilde{b}(T_M, b(t), \epsilon(t)) \leq 1$). The smaller the value
    of $T$, the greater the tolerable $\epsilon$ and the
    inter-transmission times are, at the cost of a potentially smaller
    $T_M$.} \oprocend
\end{remark}

We let $\bth = \Infnorm{A} + \frac{\beta}{2}$ in the sequel. The next
result upper bounds $\uline{p_k}$ in terms of the history of the
number of bits transmitted.

\begin{corollary}\longthmtitle{Upper Bound on $\uline{p_k}$ in Terms
    of the History of the Number of Bits
    Transmitted}\label{cor:sufficient-bit-rate-general}
  Under the assumptions of Theorem~\ref{thm:noninst_com_bounded_bit},
  the following holds for any $k \in \integerspositive$,
  \begin{align*}
    \uline{p_k} &\leq \log_2 \bigg( \frac{ e^{\bth T_M} }{
      \rhofun{T}{\tilde{b}(T_M, b(t_k^-), \epsilon(t_k^-)} -
      \alpha(T_M) } \bigg) + 1
    \\
    &\quad + \log_2 \bigg( \frac{ e^{\bth (t_k - t_0) } }{
      \prod_{j=1}^{k-1} 2^{ p_j } } \epsilon(t_0) + \sum_{i=0}^{k-1}
    \prod_{j=i+1}^{k-1} \frac{ e^{\bth T_j} }{2^{p_j}} \alpha(T_i)
    \bigg),
  \end{align*}
  with $\displaystyle \alpha(\tau) \triangleq \frac{ \nu (
    e^{\Enorm{A} \tau} - 1 ) }{ c \Enorm{A} \sqrt{ V_0 } }$.
\end{corollary}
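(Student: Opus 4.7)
\noindent\textbf{Proof proposal for Corollary \ref{cor:sufficient-bit-rate-general}.} The plan is to start from the defining inequality~\eqref{eqn:pk_lbound_case3} of $\uline{p_k}$, solve it explicitly for $p$ to obtain a closed-form upper bound in terms of $\epsilon(t_k)$, and then bound $\epsilon(t_k)$ by iterating the recursion for $d_e$ provided by Lemma~\ref{lem:consistent_algos}.

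First, I would plug the definition~\eqref{eqn:hbar_def} of $\triggerChbar$ into the condition $\triggerChbar(T_M, b(t_k), \epsilon(t_k), \epsilon(t_k)/2^p) \leq 1$ and isolate the term containing $2^p$. This yields, after rearranging,
\begin{equation*}
  2^p \,\geq\, \frac{\Infnorm{e^{AT_M}}\, e^{\beta T_M/2}\, \epsilon(t_k)}{\rhofun{T}{\tilde{b}(T_M, b(t_k^-), \epsilon(t_k^-))} - \alpha(T_M)} ,
\end{equation*}
using that $\epsilon(t_k)=\epsilon(t_k^-)$ and $b(t_k)=b(t_k^-)$ (since neither $x$ nor $\hat x$ jump at the transmission time $t_k$---the jump in $\hat x$ occurs at the reception time $r_k$). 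Taking $\log_2$, invoking $\lceil y\rceil\le y+1$ to upper bound the ceiling implicit in the definition of $\uline{p_k}$, and using the standard bound $\Infnorm{e^{AT_M}}\le e^{\Infnorm{A}T_M}$ together with $\bth=\Infnorm{A}+\beta/2$, I would obtain
\begin{equation*}
  \uline{p_k} \,\leq\, \log_2\!\bigg(\frac{e^{\bth T_M}}{\rhofun{T}{\tilde{b}(T_M, b(t_k^-), \epsilon(t_k^-))} - \alpha(T_M)}\bigg) + 1 + \log_2 \epsilon(t_k).
\end{equation*}

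Second, I would use Lemma~\ref{lem:consistent_algos}, specifically~\eqref{eqn:coding_scheme}, to unroll $d_e(t_k^-)$. Writing $D_k\triangleq d_e(t_k^-)$, the recursion $D_k=\frac{\Infnorm{e^{A T_{k-1}}}}{2^{p_{k-1}}}D_{k-1}+\frac{\nu}{\Enorm{A}}[e^{\Enorm{A}T_{k-1}}-1]$ (with $\delta_0=d_e(t_0)$) unrolls into a product plus a telescoping sum,
\begin{equation*}
  D_k = \prod_{i=1}^{k-1}\frac{\Infnorm{e^{AT_i}}}{2^{p_i}}\,\Infnorm{e^{AT_0}}\,d_e(t_0) + \sum_{i=0}^{k-1}\prod_{j=i+1}^{k-1}\frac{\Infnorm{e^{AT_j}}}{2^{p_j}}\,\frac{\nu[e^{\Enorm{A}T_i}-1]}{\Enorm{A}}.
\end{equation*}

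Third, I would divide through by $c\sqrt{V_d(t_k)}$ to express the bound in terms of $\epsilon(t_k)$. For the first term, I would multiply and divide by $\sqrt{V_d(t_0)}$ and use the elementary inequality $V_d(t_0)/V_d(t_k)\leq e^{\beta(t_k-t_0)}$ (which follows from $V_d(t)=(V_d(t_0)-V_0)e^{-\beta(t-t_0)}+V_0$ and $V_0\ge0$) to produce the factor $e^{\beta(t_k-t_0)/2}$. Combined with $\prod_{i=0}^{k-1}\Infnorm{e^{AT_i}}\le e^{\Infnorm{A}(t_k-t_0)}$, this gives the first summand $\frac{e^{\bth(t_k-t_0)}}{\prod_{j=1}^{k-1}2^{p_j}}\epsilon(t_0)$. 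For the second term, the bound $\sqrt{V_d(t_k)}\ge\sqrt{V_0}$ together with $\Infnorm{e^{AT_j}}\le e^{\Infnorm{A}T_j}\le e^{\bth T_j}$ converts each summand into $\prod_{j=i+1}^{k-1}\frac{e^{\bth T_j}}{2^{p_j}}\alpha(T_i)$, exactly matching the stated expression. Substituting this upper bound for $\epsilon(t_k)$ into the inequality from the first step and combining the two logarithms yields the claim.

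The steps are essentially routine algebra; the only mildly delicate point is the accounting for the $V_d$ factors so that the clean $\bth$ appears in the exponentials---using $V_d(t_0)/V_d(t_k)\le e^{\beta(t_k-t_0)}$ for the initial-condition term and $V_d(t_k)\ge V_0$ (thus absorbing into $\alpha$) for the disturbance terms. Since each $e^{\Infnorm{A}T_j}\le e^{\bth T_j}$, the final bookkeeping goes through uniformly, completing the proof.
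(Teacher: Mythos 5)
Your proposal follows essentially the same route as the paper's proof: first extract a closed-form upper bound on $\uline{p_k}$ in terms of $\epsilon(t_k^-)$ from the minimality in~\eqref{eqn:pk_lbound_case3}, then unroll the recursion~\eqref{eqn:coding_scheme} to bound $\epsilon(t_k^-)$ (the paper iterates $\epsilon$ directly rather than $d_e$, but your bookkeeping of the $V_d$ factors reproduces exactly the paper's inequality~\eqref{eqn:eps_bound}, including the use of $V_d(t_{k-1})/V_d(t_k)\le e^{\beta T_{k-1}}$ and $V_d\ge V_0$). The one step you pass over silently is the justification that the denominator $\rhofun{T}{\tilde{b}(T_M, b(t_k^-), \epsilon(t_k^-))} - \alpha(T_M)$ is strictly positive. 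Without this, the rearrangement that ``isolates $2^p$'' is not legitimate (dividing by a nonpositive quantity reverses or destroys the inequality) and the first logarithm in the claimed bound is not even defined. The paper devotes a dedicated argument to this: it first shows by contradiction---using the trigger rule~\eqref{eqn:tk_trig3} together with Lemmas~\ref{lem:Gamma1-T_sign} and~\ref{lem:Gamma2-T_sign}---that $\epsilon(t_k^-) \geq \rhofun{T}{b(t_k^-)} \geq 1$, and then uses the fact that the trigger and Theorem~\ref{thm:noninst_com_bounded_bit}(i) guarantee $\triggerChbar(T_M, b(t_k^-), \epsilon(t_k^-), \epsilon(t_k^-)/2^{\bar{p}}) \leq 1$ to conclude $\rhofun{T}{\tilde{b}(T_M, b(t_k^-), \epsilon(t_k^-))} - \alpha(T_M) \geq \Infnorm{e^{A T_M}} e^{(\beta/2) T_M}/2^{\bar{p}} > 0$. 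Adding this argument (or an equivalent one) completes your proof; everything else---the ceiling bound $\lceil y \rceil \le y+1$, the estimate $\Infnorm{e^{A T_M}} \le e^{\Infnorm{A} T_M}$, the telescoping of the $d_e$ recursion, and the absorption of the $V_d$ ratios into $e^{\bth(\cdot)}$ and $\alpha(\cdot)$---matches the paper's computation.
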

\begin{proof}
  Using~\eqref{eqn:eps_btild_def} and~\eqref{eqn:coding_scheme}
  recursively along with the fact that $V_d(t) \geq V_0$ for all $t
  \geq t_0$ gives for $k \in \integerspositive$
  \begin{align}\label{eqn:eps_bound}
    \epsilon(t_k^-) &\leq e^{\bth T_{k-1}} \frac{
      \epsilon(t_{k-1}^-) }{ 2^{p_{k-1}}} + \alpha(T_{k-1}) \notag
    \\
    &\leq \frac{ e^{\bth T_{k-1}} }{ 2^{p_{k-1}}} \Big[ \frac{
      e^{\bth T_{k-2}} }{ 2^{p_{k-2}}} \epsilon(t_{k-2}^-) +
    \alpha(T_{k-2}) \Big] + \alpha(T_{k-1}) \notag
    \\
    &\leq \frac{ e^{\bth (t_k - t_0) } }{ \prod_{j=1}^{k-1} 2^{ p_j
      } } \epsilon(t_0) + \sum_{i=0}^{k-1} \prod_{j=i+1}^{k-1} \frac{
      e^{\bth T_j} }{2^{p_j}} \alpha(T_i).
  \end{align}
  Next, observe that, for each $k \in \integerspositive$,
  $\epsilon(t_k^-) \geq \rhofun{T}{b(t_k^-)}$. If this were not the
  case, then $\Gamma_1( b(t_k^-), \epsilon(t_k^-) ) \geq \min \{
  \Gamma_1(1, 1), T \}$ by Lemma~\ref{lem:Gamma1_prop}, and on the
  other hand $\tilde{\Gamma}_2(b(t_k^-), \epsilon(t_k^-),
  \epsilon(t_k^-)/2^{\bar{p}}) \geq \min \{ \Gamma_1(1, 1), T, T^* \}
  > T_M$. These two conditions together would imply, by
  Lemmas~\ref{lem:Gamma1-T_sign} and~\ref{lem:Gamma2-T_sign}, that
  neither of the conditions in the trigger~\eqref{eqn:tk_trig3} is
  satisfied at $t = t_k^-$, which is a contradiction.

  Now, since Theorem~\ref{thm:noninst_com_bounded_bit} guarantees
  $b(t) \leq 1$ for all $t \geq t_0$ and since $\rhofun{T}{y} \geq 1$
  for all $y \in [0, 1]$, we have $\epsilon(t_k^-) \geq 1$.  Next, the
  trigger~\eqref{eqn:tk_trig3} and
  Theorem~\ref{thm:noninst_com_bounded_bit}(i) ensure that
  $\triggerChbar( T_M, b(t_k^-), \epsilon(t_k^-), ( \epsilon(t_k^-) /
  2^{\bar{p}} ) ) \leq 1$, i.e.,
  \begin{equation*}
    \Infnorm{e^{A T_M}} e^{(\beta/2) T_M} \frac{ \epsilon(t_k^-) }{ 2^{\bar{p}} } + 
    \alpha(T_M) \leq \rhofun{T}{\tilde{b}(T_M, b(t_k^-), 
      \epsilon(t_k^-)}.
  \end{equation*}
  Rearranging the terms and using the fact $\epsilon(t_k^-) \geq 1$,
  we have
  \begin{equation*}
    \rhofun{T}{\tilde{b}(T_M, b(t_k^-), \epsilon(t_k^-)} - 
    \alpha(T_M) \geq \frac{ \Infnorm{e^{A T_M}} e^{(\beta/2) T_M} }{
      2^{\bar{p}} } > 0. 
  \end{equation*}
  Now~\eqref{eqn:pk_lbound_case3} and the fact $e^{\Infnorm{A}T_M}
  \geq \Infnorm{e^{AT_M}}$ imply that
  \begin{equation*}
    e^{\bth T_M} \frac{ \epsilon(t_k^-) }{ 2^{( \uline{p_k} - 1 )} 
    } + \alpha(T_M) \geq \rhofun{T}{\tilde{b}(T_M, b(t_k^-), 
      \epsilon(t_k^-)} ,
  \end{equation*}
  which in turn gives
  \begin{equation*}
    2^{\uline{p_k}} \leq \frac{ 2 e^{\bth T_M} \epsilon(t_k^-) }{ 
      \rhofun{T}{\tilde{b}(T_M, b(t_k^-), \epsilon(t_k^-)} - \alpha(T_M) 
    }.
  \end{equation*}
  In other words,
  \begin{equation*}
    \uline{p_k} \leq \log_2( \epsilon(t_k^-) ) + \log_2 \bigg( \frac{ 2
      e^{\bth T_M} }{ \rhofun{T}{\tilde{b}(T_M, 
        b(t_k^-), \epsilon(t_k^-)} - \alpha(T_M) } \bigg).
  \end{equation*}
  Substituting~\eqref{eqn:eps_bound} (and multiplying by $n$ to give
  us the number of bits) yields the result.
\end{proof}

Although this result does not explicitly give a data rate as in
Corollary~\ref{cor:sufficient-bit-rate}, it provides an implicit
characterization of it. This becomes more clear in the absence of
disturbances.

\begin{corollary}\longthmtitle{Upper Bound on Sufficient Data Rate in
    the Absence of
    Disturbances}\label{cor:sufficient_bit_rate_zero_dist}
  Under the assumptions of Theorem~\ref{thm:noninst_com_bounded_bit}
  and no disturbance, the following holds for any $k \in
  \integerspositive$,
    \begin{align*}
      n \Big( \uline{p_k} + \sum_{i=1}^{k-1} p_i \Big) & \leq n \bigg[
      \log_2 \bigg( \frac{ e^{\bth T_M} }{ \rhofun{T}{\tilde{b}(T_M,
          b(t_k^-), \epsilon(t_k^-)} } \bigg)
      \\
      & \quad + 1 + \bth \log_2 ( e ) (t_k - t_0) + \log_2(
      \epsilon(t_0) ) \bigg].
    \end{align*}
\end{corollary}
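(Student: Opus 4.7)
The plan is to specialize Corollary~\ref{cor:sufficient-bit-rate-general} to the case $\nu = 0$ and then reorganize the logarithmic terms. Setting the disturbance bound $\nu$ to zero makes $\alpha(\tau) = 0$ for every $\tau \ge 0$, since $\alpha(\tau) = \nu ( e^{\Enorm{A} \tau} - 1 )/( c \Enorm{A} \sqrt{V_0} )$.

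With $\alpha \equiv 0$, the denominator in the first logarithm in Corollary~\ref{cor:sufficient-bit-rate-general} simplifies to $\rhofun{T}{\tilde{b}(T_M, b(t_k^-), \epsilon(t_k^-)}$, and every summand in $\sum_{i=0}^{k-1} \prod_{j=i+1}^{k-1} ( e^{\bth T_j}/2^{p_j} ) \alpha(T_i)$ drops out. The only remaining term inside the second logarithm is $e^{\bth(t_k - t_0)} \epsilon(t_0) / \prod_{j=1}^{k-1} 2^{p_j}$. Hence
\begin{align*}
  \uline{p_k} &\leq \log_2 \bigg( \frac{ e^{\bth T_M} }{ \rhofun{T}{\tilde{b}(T_M, b(t_k^-), \epsilon(t_k^-)} } \bigg) + 1 \\
  &\quad + \log_2 \bigg( \frac{ e^{\bth (t_k - t_0) } \epsilon(t_0) }{ \prod_{j=1}^{k-1} 2^{ p_j } } \bigg).
\end{align*}

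Next I would split the second logarithm using $\log_2(xy/z) = \log_2 x + \log_2 y - \log_2 z$, which yields
\begin{equation*}
  \log_2 \bigg( \frac{ e^{\bth (t_k - t_0) } \epsilon(t_0) }{ \prod_{j=1}^{k-1} 2^{ p_j } } \bigg) = \bth\log_2(e)(t_k - t_0) + \log_2(\epsilon(t_0)) - \sum_{j=1}^{k-1} p_j.
\end{equation*}
Moving the sum $\sum_{j=1}^{k-1} p_j$ to the left-hand side and multiplying both sides by $n$ produces exactly the stated inequality. There is no real obstacle here; the work is purely algebraic bookkeeping of the logarithms, and the key conceptual step is simply the observation that $\nu = 0$ eliminates all the $\alpha$-dependent contributions from the general bound.
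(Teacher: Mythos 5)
Your proposal is correct and follows essentially the same route as the paper: specialize Corollary~\ref{cor:sufficient-bit-rate-general} with $\alpha \equiv 0$ and rearrange the logarithms. The only nuance worth noting is that when $V_0 = 0$ the formula for $\alpha$ is formally $0/0$, so (as the paper does) one should justify $\alpha \equiv 0$ by observing that the disturbance term in~\eqref{eqn:de_evolve_rk} itself vanishes when $\nu = 0$, rather than by plugging $\nu = 0$ into the expression for~$\alpha$.
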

\begin{proof}
  In the no disturbance case, $\nu = 0$ and the second term
  of~\eqref{eqn:de_evolve_rk} vanishes, which justifies $\alpha(\tau)
  \equiv 0$ in Corollary~\ref{cor:sufficient-bit-rate-general} even in
  the case $V_0 = 0$. As a result, we have
  \begin{align*}
    \uline{p_k} &\leq \log_2 \bigg( \frac{ e^{\bth T_M} }{ 
      \rhofun{T}{\tilde{b}(T_M, b(t_k^-), \epsilon(t_k^-)} } \bigg) + 1
    \\
    &\quad + \log_2 \bigg( \frac{ e^{\bth (t_k - t_0) } }{ 
      \prod_{j=1}^{k-1} 2^{ p_j } } \epsilon(t_0) \bigg).
  \end{align*}
  Multiplying by $n$ and rearranging the terms yields the sufficient
  data rate in the statement.
\end{proof}

Note that the effect of non-instantaneous communication, through
$T_M$, in Corollary~\ref{cor:sufficient_bit_rate_zero_dist} only has a
transient effect on the sufficient data rate. If $T_M = 0$, the first
term is non-positive (recall $\rho_T \geq 1$) and we recover the
result of Corollary~\ref{cor:sufficient-bit-rate}.

\section{Simulations}\label{sec:sim}

We illustrate our results in simulation for three scenarios:
instantaneous communication with no disturbance and non-instantaneous
communication with and without
disturbance. 
Consider the system on $\real^2$ given
by~\eqref{eqn:plant_dyn} with
\begin{equation*}
  A = 
  \begin{bmatrix}
    1 & -2
    \\
    1 & 4
  \end{bmatrix}, \ 
  B = 
  \begin{bmatrix}
    0
    \\
    1
  \end{bmatrix}, \ K =
  \begin{bmatrix}
    2 & -8
  \end{bmatrix} .
\end{equation*}
The plant matrix $A$ has eigenvalues at $2$ and $3$, while the control
gain matrix $K$ places the eigenvalues of the matrix $\bar{A} = A+BK$
at $-1$ and $-2$. We select the matrix $Q=\identity{2}$, for which the
solution to the Lyapunov equation~\eqref{eq:Lyap-eq} is
\begin{align*}
  P =
  \begin{bmatrix}
    2.2500 & -0.9167
    \\
    -0.9167 & 0.5833
  \end{bmatrix}
  .
\end{align*}
The desired control performance is specified by
\begin{align*}
  V_d(t_0) = 1.1 V(x(t_0)), \quad \beta = 0.8
  \frac{\lambda_m(Q)}{\lambda_M(P)} ,
\end{align*}
and $V_0$ chosen according to~\eqref{eqn:V0_condition} in each
scenario. We set $a = 1.2$ in~\eqref{eq:W}, so that $W > 0$, and
assume, without loss of generality, $t_0 = 0$. We choose the design
parameter $T = 0.5 \times \Gamma_1(1, 1)$. The initial condition is 
$x(t_0) = ( 6, -4)$, and the encoder and decoder use the information
\begin{align*}
  \hat{x}(t_0) = ( 0, 0) , \quad d_e(t_0) = 2 \Infnorm{x(t_0) -
    \hat{x}(t_0)} .
\end{align*}
Finally, unless specified otherwise, the number of bits transmitted at
each transmission time is $n \uline{p_k}$, the ``minimum" number of bits
as prescribed by~\eqref{eqn:pk_lbound_case2}
and~\eqref{eqn:pk_lbound_case3}, respectively.

\textit{Instantaneous communication and no disturbance:} we let $\nu =
V_0 = 0$, for which we obtain $\Gamma_1(1,1) = 0.5699$. We present
simulations for two cases, $\bar{p} = 12$ and $\bar{p} = 20$, where $n
\bar{p} = 2\bar{p}$ is the uniform upper bound on the number of bits
per transmission imposed by the communication channel.
\begin{figure}[!htpb]
  \centering
  \subfigure[\label{fig:pbar_12_V}]{\includegraphics[width=0.24\textwidth,height=1.3
    in]{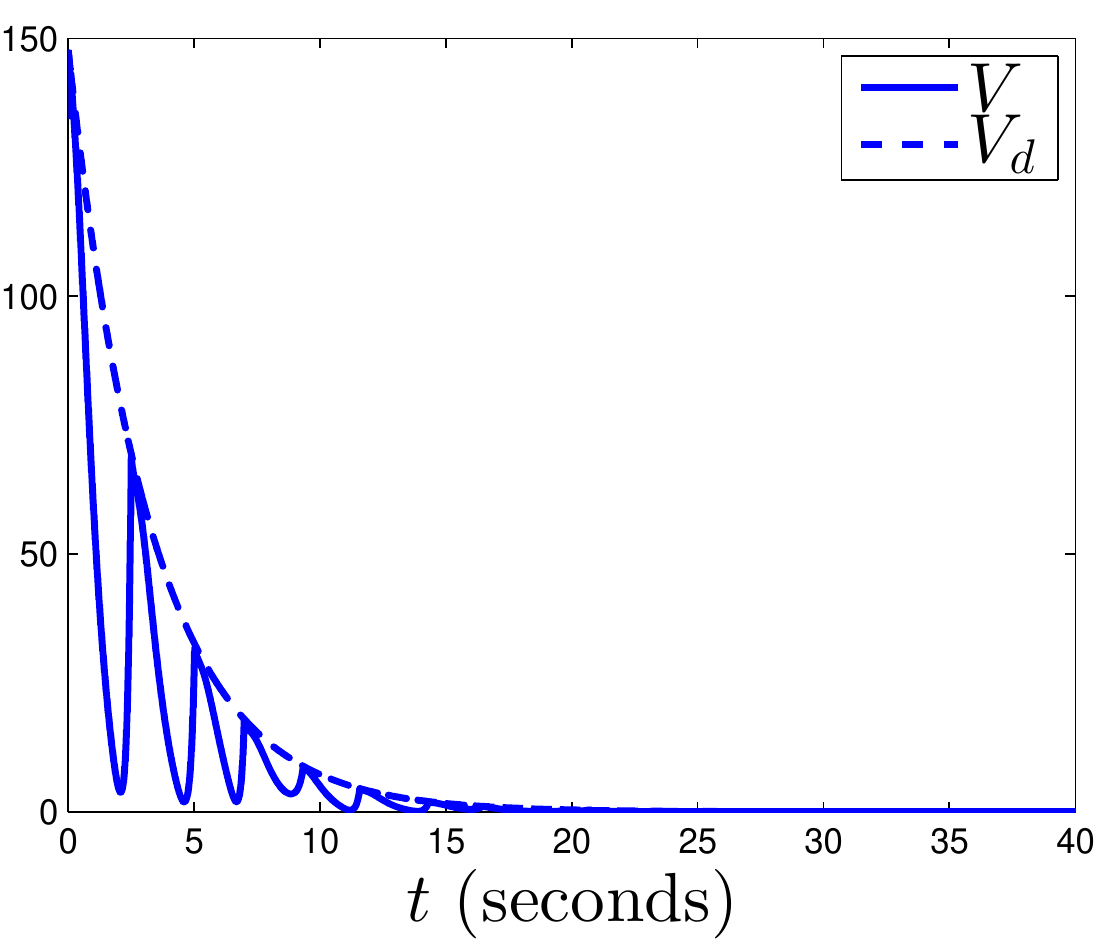}}
  \subfigure[\label{fig:pbar_20_V}]{\includegraphics[width=0.24\textwidth,height=1.3
    in]{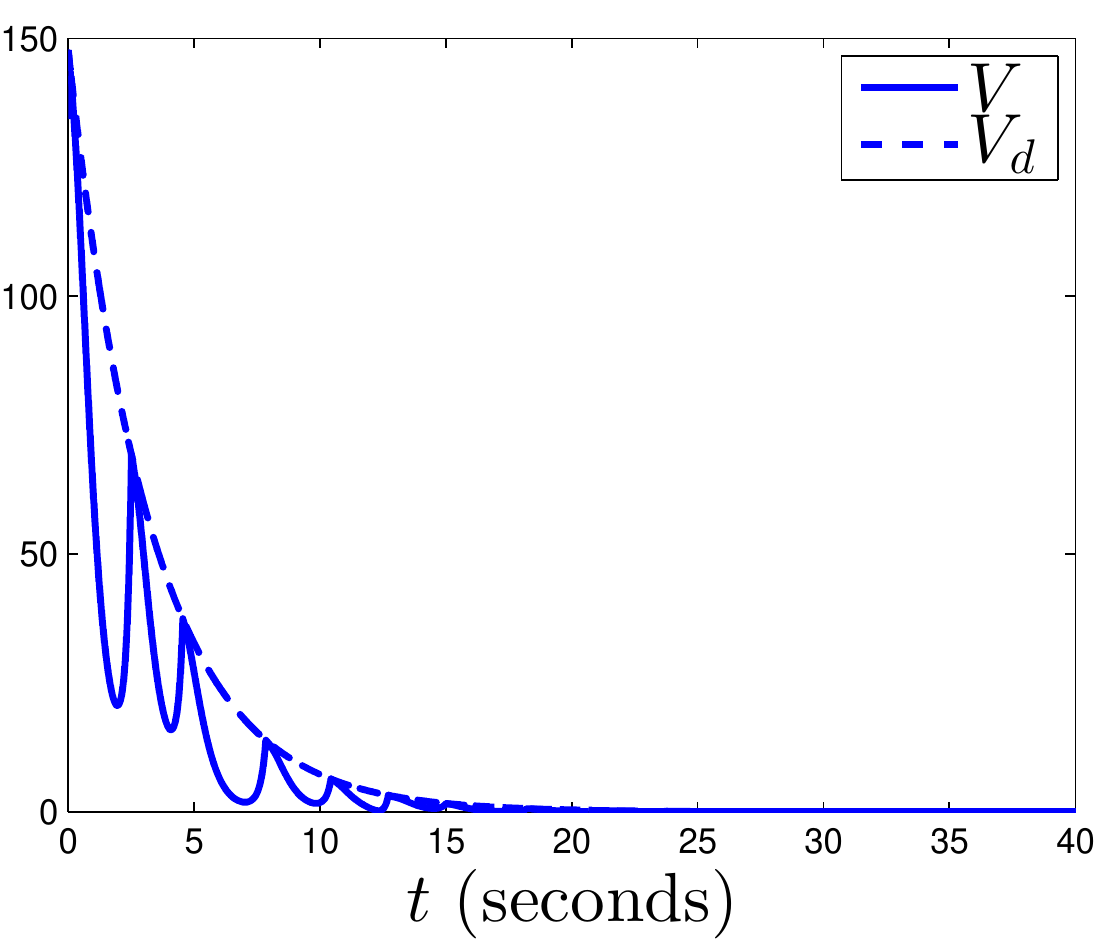}}
  \vspace*{-1ex}
  \caption{Instantaneous communication and no disturbance: evolution
    of $V_d$ and $V$ under the event-triggered
    design~\eqref{eqn:tk_trig2} for (a) $\bar{p} = 12$ and (b)
    $\bar{p} = 20$.}\label{fig:V_Vd}
\end{figure}

Figure~\ref{fig:V_Vd} shows the evolution of $V$ and $V_d$ in both
cases. As established in Theorem~\ref{thm:inst_tx_bounded_bit}, the
desired convergence rate is guaranteed in each case. In the case of
$\bar{p} = 12$, it turns out that $\uline{p_k} = \bar{p}$ for each $k
\in \integerspositive$. On the other hand, in the case when $\bar{p} =
20$, the performance of $V$ with respect to $V_d$ plays a more
relevant role in determining the transmission times
in~\eqref{eqn:tk_trig2}. In fact, in the presented simulation, $
\uline{p_k} < \bar{p}$ on all transmissions, as depicted in
Figure~\ref{fig:bit_profile}. Figure~\ref{fig:bit_rate_compare} shows
the interpolated plot of the total number of bits transmitted for both
cases, $\bar{p} = 12$ and $\bar{p} = 20$ along with sufficient
(Corollary~\ref{cor:sufficient-bit-rate}) and necessary
(Proposition~\ref{prop:necs_rate}) data rate.
\begin{figure}[!htpb]
  \centering
  \subfigure[\label{fig:bit_profile}]{\includegraphics[width=0.23\textwidth,height=1.2
    in]{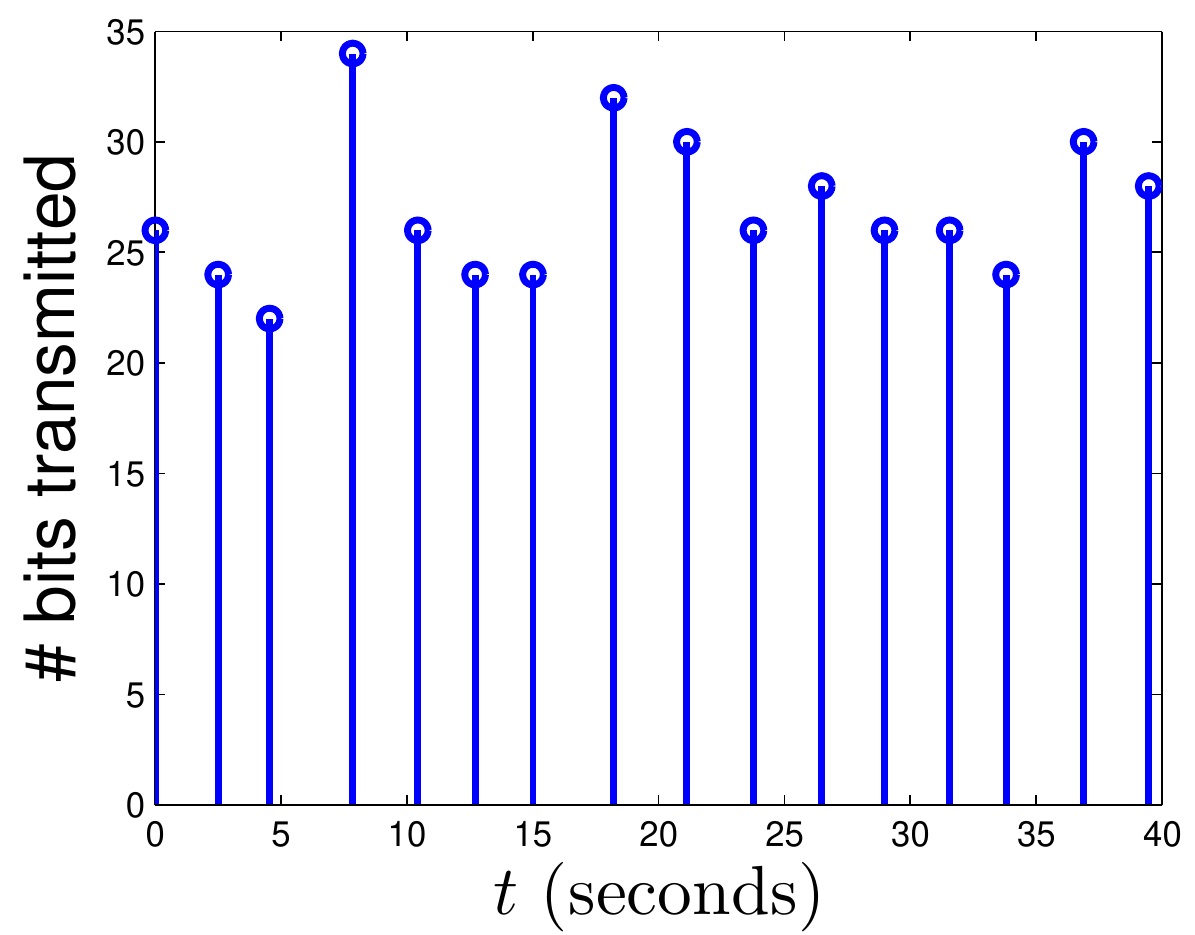}}
  \subfigure[\label{fig:bit_rate_compare}]{\includegraphics[width=0.23\textwidth,height=1.2
    in]{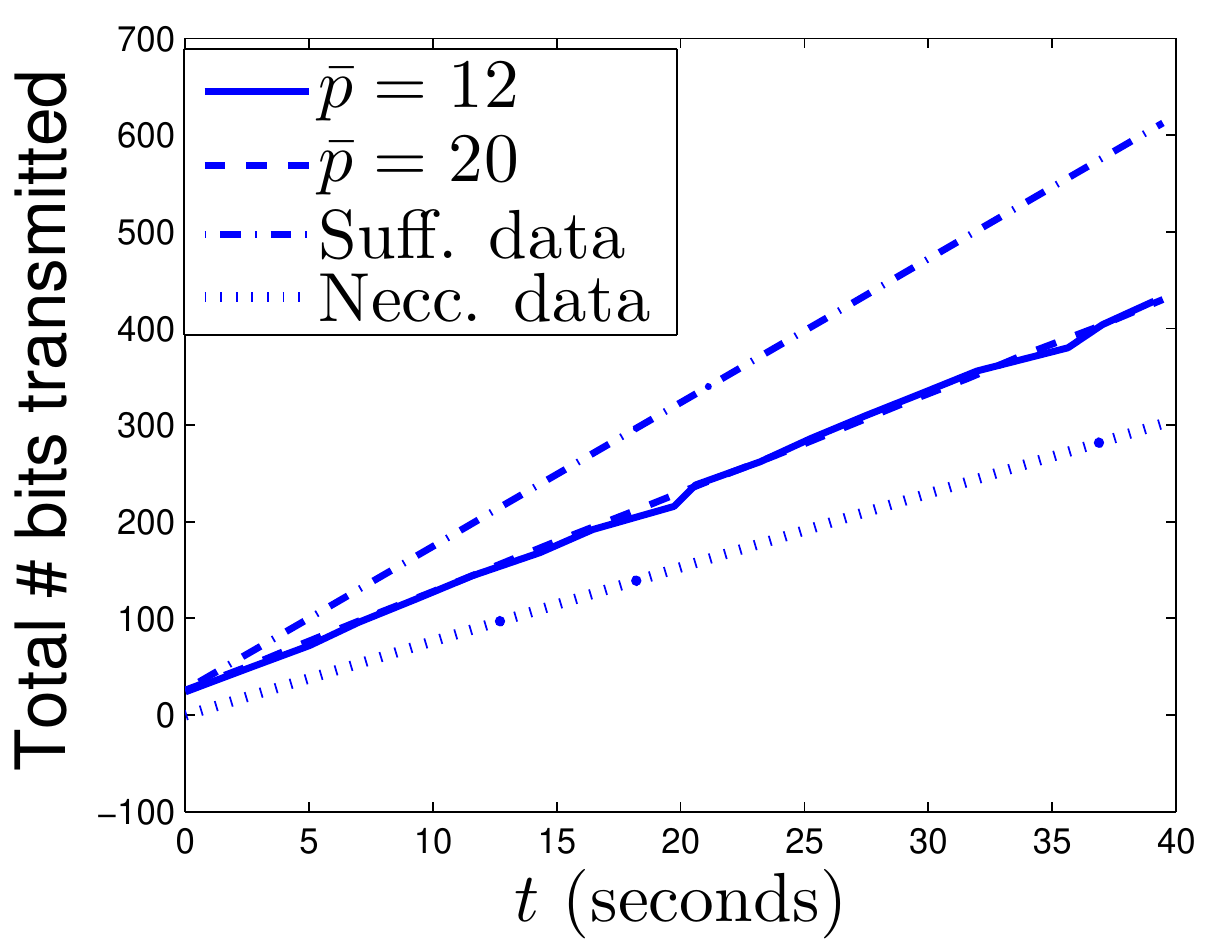}}
  \vspace*{-1ex}
  \caption{For the event-triggered implementations in
    Figure~\ref{fig:V_Vd}, (a) shows the number of bits on each
    transmission when $\bar{p} = 20$ and (b) shows the interpolated
    plot of the total number of bits transmitted when $\bar{p} = 12,
    20$; along with sufficient
    (Corollary~\ref{cor:sufficient_bit_rate_zero_dist}) and necessary
    (Proposition~\ref{prop:necs_rate}) data.} \label{fig:bit_evolve}
\end{figure}
Although in reality the total number of bits transmitted as a function
of time is piecewise constant, the interpolated plots enable a more
insightful comparison. In the case $\bar{p} = 20$, after having
transmitted more bits initially than for $\bar{p} = 12$, the gap in
the cumulative bit counts diminishes eventually. Finally, during the
time interval $[0,40]$, the number of transmissions, average
inter-transmission time, and minimum inter-transmission time are $18$,
$2.2995$ and $0.8585$ (case $\bar{p} = 12$) and $16$, $2.63$ and
$2.048$ (case $\bar{p} = 20$), respectively.

\textit{Non-instantaneous communication and non-zero disturbance:} we
let $\nu = 0.01$ and, following~\eqref{eqn:V0_condition} with $\sigma
= 0.9$, we set $V_0 = 5.3942$, for which we obtain $\Gamma_1(1,1) =
0.0347$. The actual disturbance signal employed in the simulation is
\begin{equation*}
  v_1(t) = \nu \sin(0.5 t), \quad v_2(t) = \nu \cos(0.5 t).
\end{equation*}
We present a simulation for the case $\bar{p} = 20$. We choose $T_M =
0.8 \times \min \{ \Gamma_1(1, 1), T, T^* \} = 1.1 \times 10^{-3}$ and
the communication time $\Delta_k = r_k - t_k=T_M$ for all $k \in
\integerspositive$ (consequently, note that $R_k = T_k$ for $k \in
\integersnonnegative$).
\begin{figure}[!htpb]
  \centering \subfigure[\label{fig:non_inst_pbar_20_V}]
  {\includegraphics[width=0.24\textwidth,height=1.3in]
    {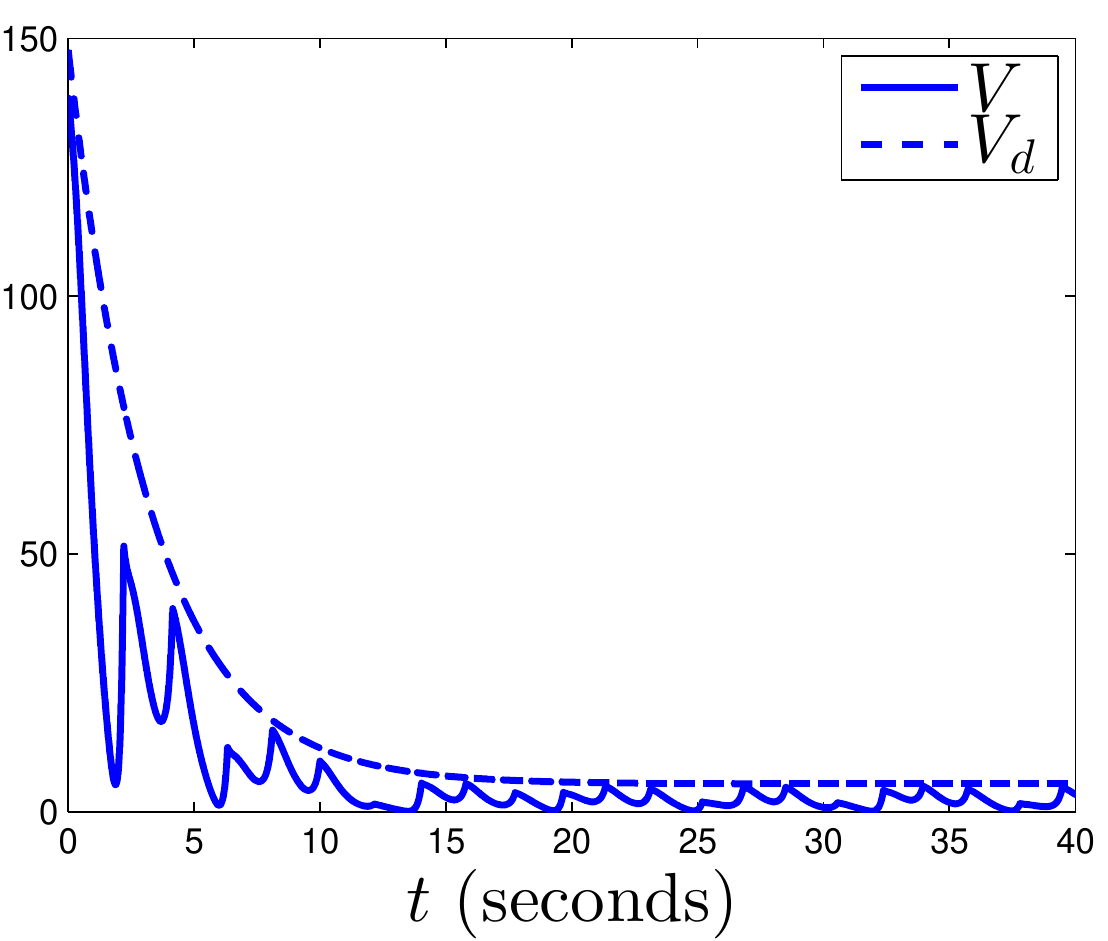}}
  \subfigure[\label{fig:non_inst_pbar_20_inter-tx_times}]
  {\includegraphics[width=0.24\textwidth,height=1.3in]
    {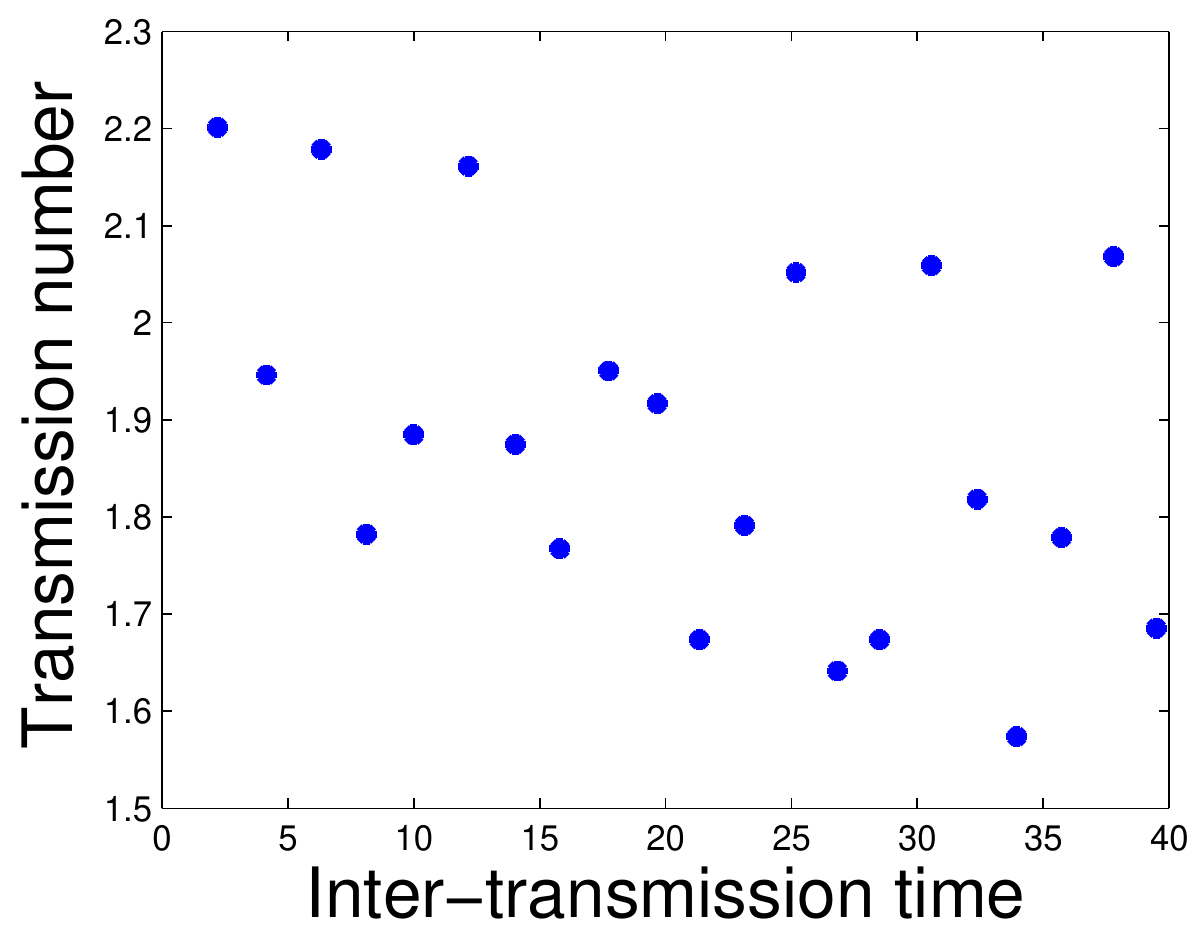}}
  \vspace*{-1ex}
\caption{Non-instantaneous communication and non-zero disturbance: (a)
  shows the evolution of $V_d$ and $V$ and (b) shows the
  inter-transmission times under the event-triggered
  design~\eqref{eqn:tk_trig3} for $\bar{p} =
  20$.}\label{fig:non_inst_pbar_20}
\end{figure}
Figure~\ref{fig:non_inst_pbar_20_V} shows the evolution of $V$ and
$V_d$, which is in accordance with
Theorem~\ref{thm:noninst_com_bounded_bit}, while
Figure~\ref{fig:non_inst_pbar_20_inter-tx_times} shows the
inter-transmission times.
\begin{figure}[!htpb]
  \centering
  \subfigure[\label{fig:non_inst_bit_profile}]
  {\includegraphics[width=0.23\textwidth,height=1.2in] 
  {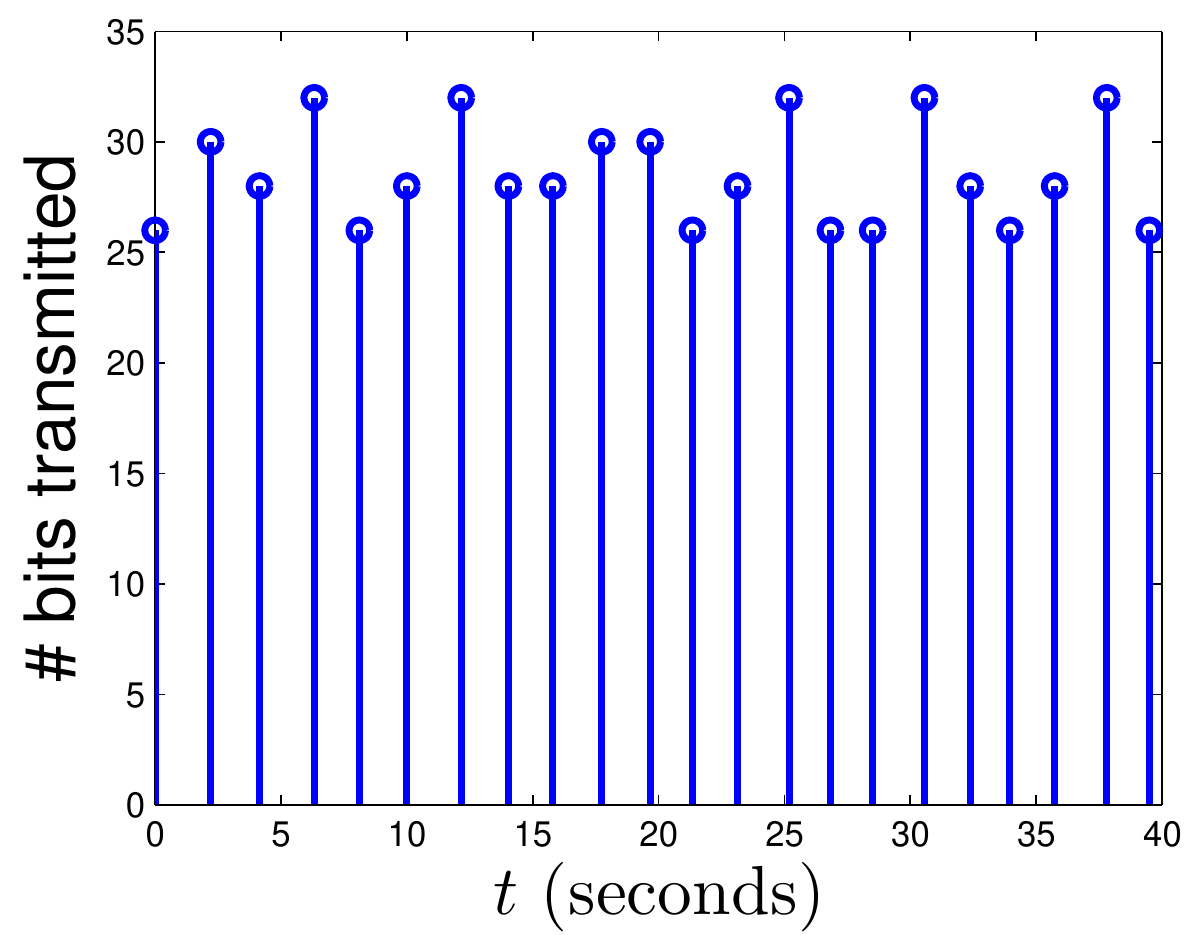}}
  \subfigure[\label{fig:non_inst_pbar_20_cum_bit}] 
  {\includegraphics[width=0.23\textwidth,height=1.2in] 
  {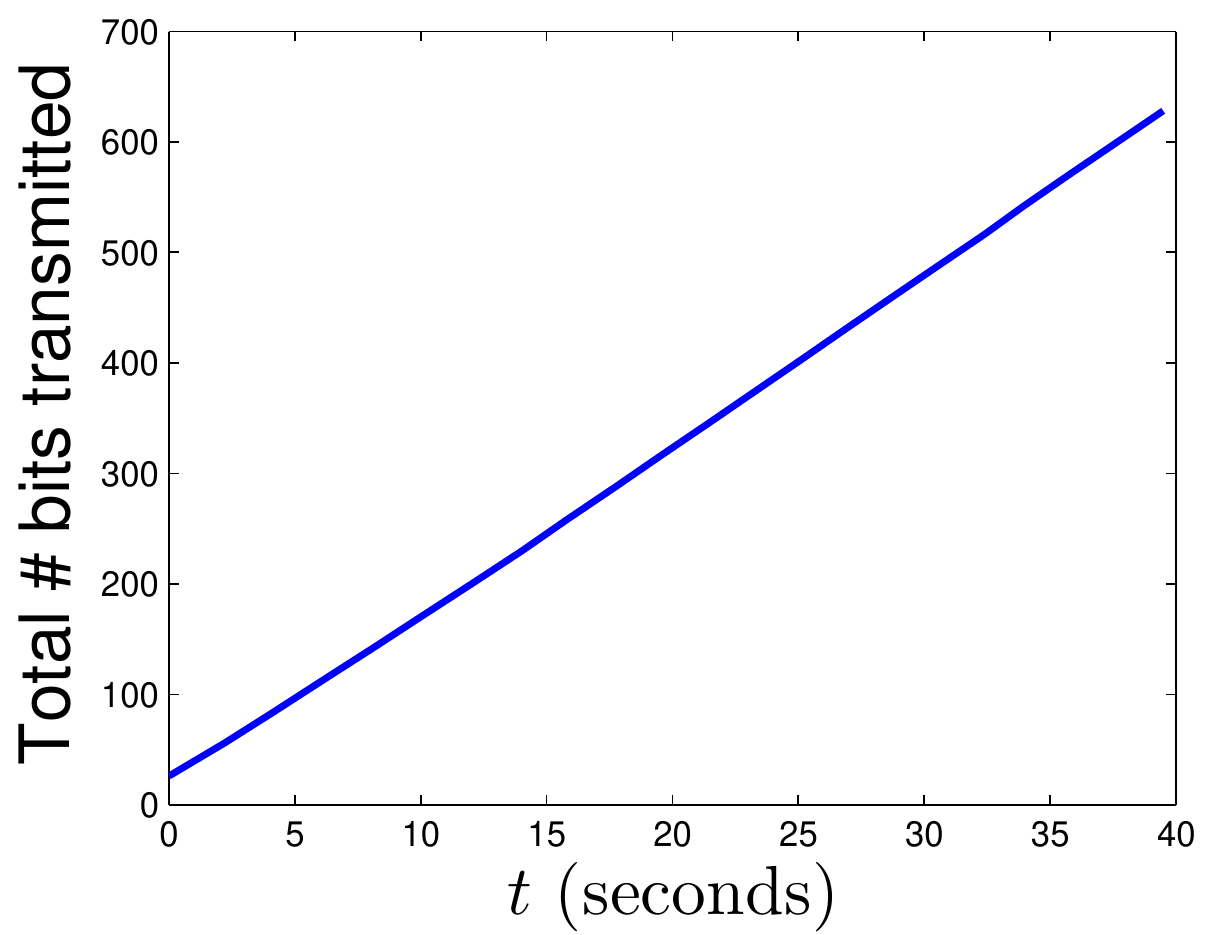}}
  \vspace*{-1ex}
  \caption{For the event-triggered implementation shown in
    Figure~\ref{fig:non_inst_pbar_20_V} with $\bar{p} = 20$, (a) shows
    the number of bits on each transmission and (b) shows the
    interpolated plot of the total number of bits
    transmitted.}\label{fig:non_inst_bit_evolve}
\end{figure}
Figure~\ref{fig:non_inst_bit_evolve} displays the evolution of the
number of bits transmitted. During the time interval $[0,40]$, the
number of transmissions is $22$, with average and minimum
inter-transmission intervals of $1.8182$ and $1.5739$, respectively.

\textit{Non-instantaneous communication and no disturbance:} we let
$\nu = V_0 = 0$, $\bar{p} = 20$ and $T_M = 1.1 \times 10^{-3}$. The
values of $\Gamma_1(1,1)$ and $T$ are as in the case of instantaneous
communication with no disturbance. We choose the communication time as
$\Delta_k = r_k - t_k=T_M$ for all $k \in \integerspositive$. To
illustrate Corollary~\ref{cor:sufficient_bit_rate_zero_dist}, we
compare the results of two simulations: in ``Sim1'' we choose $p_k =
\uline{p_k}$ for all $k \in \integerspositive$ while in ``Sim2'' we
choose $p_k = \bar{p}$ for $k \in \{1,2,3,4\}$ and $p_k = \uline{p_k}$
for all $k \in [5, \infty) \cap \integerspositive$.
Figure~\ref{fig:sim2_bit_profile} shows the number of bits on each
transmission for ``Sim2'' while
Figure~\ref{fig:bit_rate_compare_non_inst} compares the interpolated
total number of bits transmitted in ``Sim1'' and ``Sim2''. Notice that
until $5^{\text{th}}$ transmission time of ``Sim2'', the cumulative
bit count for ``Sim2'' exceeds that of ``Sim1'' but the gap is
immediately closed at that time and thereafter remains slightly lower
than that of ``Sim1''. This demonstrates the ability of the
event-trigger design to transmit fewer bits if more bits than
prescribed were transmitted in the past. We also see that the data
rate, as interpreted in
Corollary~\ref{cor:sufficient_bit_rate_zero_dist}, remains
approximately fixed irrespective of the past history of transmitted
bit count as long as the constraints of
Theorem~\ref{thm:noninst_com_bounded_bit} are respected. We have not
observed a similar behavior in the scenario with disturbance.
\begin{figure}[!htpb]
  \centering
  \subfigure[\label{fig:sim2_bit_profile}]
  {\includegraphics[width=0.23\textwidth,height=1.2in] 
    {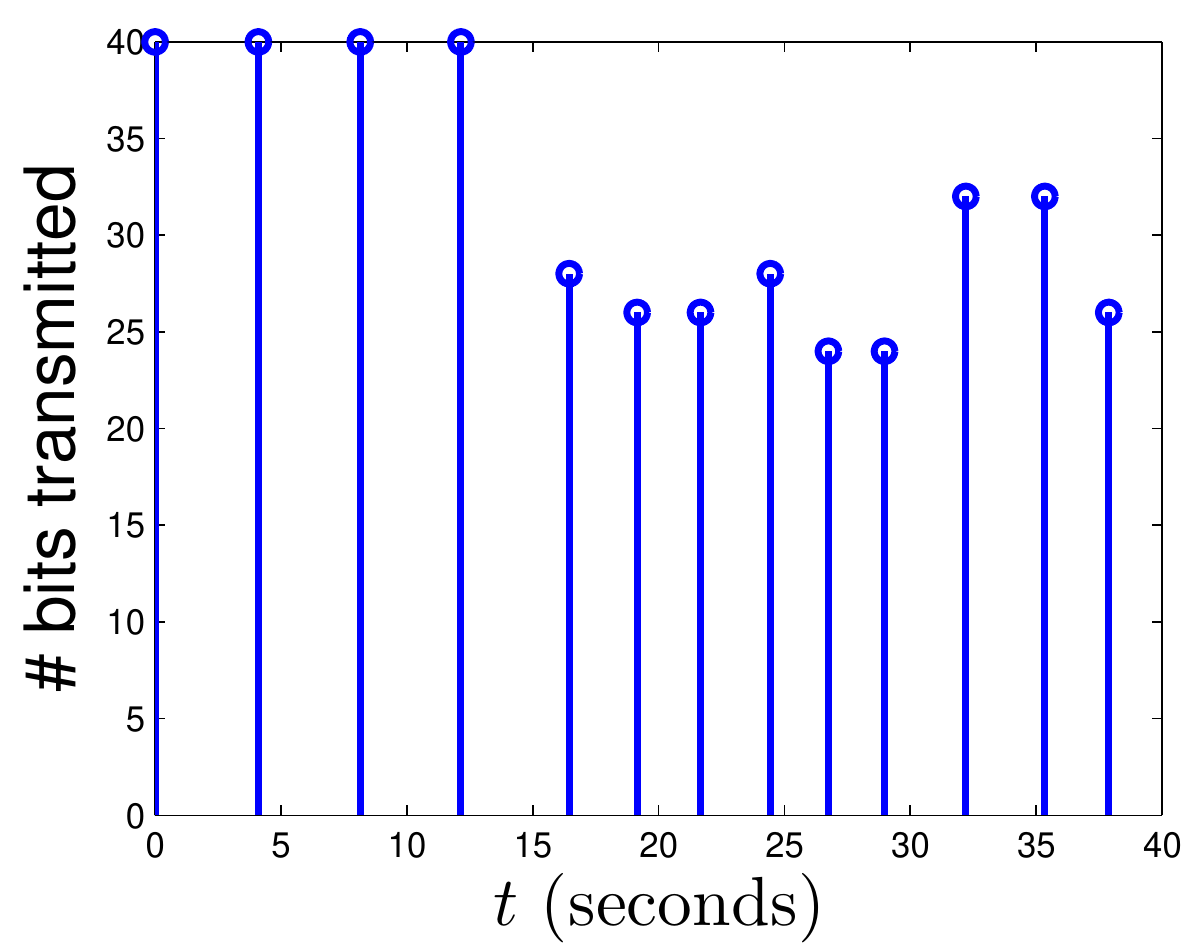}}
  \subfigure[\label{fig:bit_rate_compare_non_inst}] 
  {\includegraphics[width=0.23\textwidth,height=1.2in] 
    {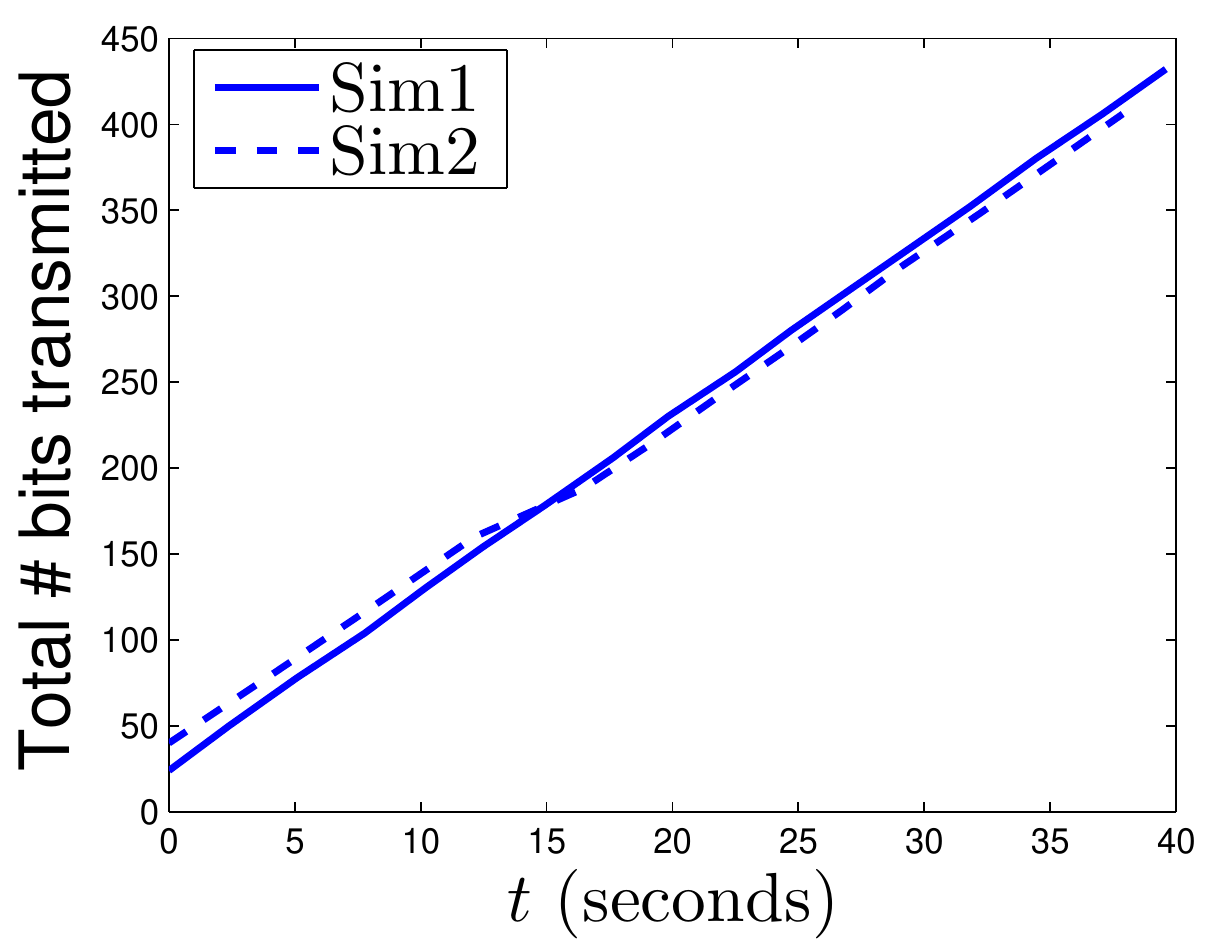}}
  \vspace*{-1ex}
  \caption{For non-instantaneous communication without disturbance and
    $\bar{p} = 20$, (a) shows the number of bits on each transmission
    for ``Sim2'' (b) shows a comparison of the interpolated total
    number of bits transmitted in ``Sim1,2''.}\label{fig:bit_rate_non_inst}
\end{figure}

\vspace*{-2ex}
\section{Conclusions}\label{sec:conc}

We have studied the problem of exponential practical stabilization of
linear-time invariant systems, in the presence of disturbance, and
under bounded communication bit rates.  Our event-triggered design
opportunistically determines the times for communication as well as
the numbers of bits to be transmitted at each time.  Given a uniform
bound on the norm of the disturbance and a prescribed rate of
convergence, the control strategy proposed here asymptotically
confines the plant to a compact set, guarantees a uniform positive
lower bound on inter-transmission and inter-reception communication
times, and ensures that the number of bits transmitted at each
transmission is uniformly upper bounded. These guarantees are valid
for instantaneous transmissions with finite precision data as well as
for non-instantaneous transmissions with bounded communication rate.
The combination of elements from event-triggered control and
information theory has also enabled us to guarantee an arbitrarily
prescribed convergence rate (something not typically ensured in the
information-theoretic approach) and characterize necessary and
sufficient conditions on the number of bits required for stabilization
under opportunistic transmissions (an issue mostly overlooked in
event-triggered control).

Among the limitations of our work are the assumption of known
communication delays and the related requirement of synchronized
updates by the encoder and the decoder for maintaining a synchronized
quantization domain. Additionally, the coding scheme we have used,
though simple, is conservative and introduces a gap between the
necessary and sufficient data rates.  Future work will also explore
better characterization of data rates under disturbances, the
characterization of the gain in performance of dynamic controllers
over static ones, the extension of the results to stochastic
time-varying communication channels, and, more generally, the
understanding of the trade-offs between system performance and
timeliness and size of transmissions.

\section*{Acknowledgments}
The authors would like to thank Professor Massimo Franceschetti for
several discussions and the reviewers for various suggestions that
improved the presentation.
This research was supported by NSF Award CNS-1329619.

\appendix

\begin{proof}[Proof of Lemma~\ref{lem:bound-b}]
  The proof is based on direct calculations and on the Comparison
  Lemma~\cite{HKK:02}. We start by noting that $w > 0$ follows
  from~\eqref{eq:W}. From \eqref{eqn:plant_dyn_enc_err}, the Lie
  derivative of $V$ along the flow of the closed-loop dynamics is
  \begin{align}\label{eqn:Vdot}
    \dot{V}(t) &= - x^T(t) Q x(t) - 2 x^T(t) PBK x_e(t) + 2 x^T(t) P
    v(t) \notag
    \\
    &\leq - \frac{\lambda_m(Q)}{\lambda_M(P)} V(x(t)) + 2
    \frac{\sqrt{V(x(t))}}{\sqrt{\lambda_m(P)}} \Enorm{PBK}
    \Enorm{x_e(t)} + \notag
    \\
    & \qquad 2 \frac{\sqrt{V(x(t))}}{\sqrt{\lambda_m(P)}} \Enorm{P}
    \nu
  \end{align}
  where we have used the fact that $P$ satisfies~\eqref{eq:Lyap-eq} as
  well as~\eqref{eq:quadratic-form-ineq}
  and~\eqref{eqn:disturb_bound}. Similarly to the derivation
  of~\eqref{eqn:de_evolve}, we have for $t \in [t_k, t_{k+1})$,
  \begin{align*}
    \Enorm{x_e(t)} &= \Enorm{e^{A(t-t_k)}} \Enorm{x_e(t_k)} +
    \frac{\nu}{\Enorm{A}} [ e^{\Enorm{A} (t-t_k)} - 1 ]
    \\
    & \leq \sqrt{n} e^{\Enorm{A} (t-t_k)} c \sqrt{V_d(t_k)}
    \epsilon(t_k)
    \\
    &\qquad + \frac{\nu}{\Enorm{A}} [ e^{\Enorm{A} (t-t_k)} - 1 ] ,
  \end{align*}
  where we have used $ d_e(t_k) = c \epsilon(t_k)
  \sqrt{V_d(t_k)}$. Substituting this expression in~\eqref{eqn:Vdot},
  we have for $t \in [t_k, t_{k+1})$
  \begin{align*}
    \dot{V}(t) &\leq - \frac{\lambda_m(Q)}{\lambda_M(P)} V(x(t)) + 2
    \frac{\sqrt{V(x(t))}}{\sqrt{\lambda_m(P)}} \Enorm{P} \nu \notag
    \\
    & \qquad + W \sqrt{V(x(t))} e^{\Enorm{A}(t - t_k)} \sqrt{V_d(t_k)}
    \epsilon(t_k)
    \\
    & \qquad + \frac{W}{c} \sqrt{V(x(t))} \frac{\nu}{\Enorm{A}} (
    e^{\Enorm{A} (t-t_k)} - 1 ) .
  \end{align*}
  From the definition~\eqref{eqn:bt_def} of $b$, we compute
  \begin{align*}
    \dot{b} &= \frac{\dot{V} V_d - V \dot{V_d}}{V_d^2} = \frac{
      \dot{V} }{V_d} + \beta b \frac{(V_d - V_0)}{V_d} \leq \frac{
      \dot{V} }{V_d} + \beta b ,
  \end{align*}
  where the inequality follows from the fact that $V_d$ is always
  positive and greater than $V_0$.  Substituting in this equation the
  upper bound for $\dot{V}$ obtained above, we get
  \begin{align*}
    \dot{b} &\leq - w b + \frac{2 \Enorm{P}}{\sqrt{\lambda_m(P)}}
    \frac{\nu \sqrt{b}}{\sqrt{V_d}} + W \epsilon(t_k) e^{\theta \tau}
    \sqrt{b} +
    \\
    &\qquad \frac{W}{c \Enorm{A}} \frac{\nu \sqrt{b}}{\sqrt{V_d}} (
    e^{\Enorm{A} \tau} - 1 ),
  \end{align*}
  where $t = \tau + t_k$. We can further simplify this by noting that
  our region of interest is when the value of $b$ belongs to $[0, 1]$,
  in which $\sqrt{b} \leq 1$, and that $V_d(t) \geq V_0$ for all time
  $t \geq t_0$. Thus,
  \begin{align*}
    \dot{b} &\leq - w b + W \epsilon(t_k) e^{\theta \tau} + c_1 + c_2
    ( e^{\Enorm{A} \tau} - 1 ) .
  \end{align*}
  Thus, letting
  \begin{equation}\label{eqn:btilde_diff_eq}
    \frac{\mathrm{d} \tilde{b}}{\mathrm{d}  \tau} \triangleq - w 
    \tilde{b} + W \epsilon(t_k) e^{\theta \tau} + c_1 + c_2 ( 
    e^{\Enorm{A} \tau} - 1 ) ,
  \end{equation}
  the result follows from the Comparison Lemma.
\end{proof}

\begin{proof}[Proof of Lemma~\ref{lem:Gamma1_prop}]
  To show \textit{(i)}, note that $\tilde{b}(0,1,1)=1$ and
  \begin{equation*}
    \frac{\mathrm{d} \tilde{b}}{\mathrm{d}  \tau} (0, 1, 1) = - w + W 
    + c_1 .
  \end{equation*}
  Using~\eqref{eqn:V0_condition}, we deduce that this value is
  strictly negative, and therefore $\Gamma_1(1, 1) > 0$.
  \textit{(ii)} follows from the fact that $\tilde{b}$ is an
  increasing function of its second and third arguments.  To
  show~\textit{(iii)}, observe that
  \begin{align}\label{eq:tildeb-ineq}
    &\tilde{b}(\tau, b_0, \epsilon_0) - \tilde{b}(\tau, 1, 1) \notag
    \\
    &= e^{-w \tau} \bigg[ ( b_0 - 1) + \frac{W ( \epsilon_0 - 1) }{ w
      + \theta } ( e^{(w + \theta) \tau} - 1 ) \bigg] \notag
    \\
    & \leq e^{-w \tau} \bigg[ ( b_0 - 1) + \frac{ 1 - b_0 }{ e^{
        (w+\theta) T } - 1 } ( e^{(w + \theta) \tau} - 1 ) \bigg].
  \end{align}
  Since $b_0 \leq 1$, we see that for all $\tau \in [ 0 , \min \{
  \Gamma_1(1, 1), T \} ]$, $\tilde{b}(\tau, b_0, \epsilon_0) \leq
  \tilde{b}(\tau, 1, 1) \leq 1$, from which the claim follows.
\end{proof}

\begin{proof}[Proof of Lemma~\ref{lem:consistent_algos}]
  It is sufficient to show that the encoder and the decoder have the
  same signals after running their respective algorithms at $\{ r_k
  \}_{k \in \integerspositive}$. Thus, we will show the equivalence of
  the corresponding steps of the two algorithms. The encoder and
  decoder steps will be prefixed by `E' and `D' respectively. Steps E1
  and D1 are identical initialization of the variable $\delta_0$.
  Step D2 is simply running~\eqref{eqn:x_hat} backwards in time to
  obtain $\hat{x}(t_k^-)$. In D3, $z_E$ is simply the message received
  from the encoder that is encoded in E3. In D4, notice that the terms
  within the parenthesis add up to $d_e(t_k^-)$.  Steps D5 through D7
  are exactly identical to steps E5 through E7, respectively with
  identical data. As a consequence, $\hat{x}(t)$ and $d_e(t)$ values
  at the encoder and decoder are synchronized for all time $t \geq
  t_0$. Further, from Steps 6 of the algorithms it is easy to see that
  $t\mapsto \hat{x}(t)$ evolves according to~\eqref{eqn:x_hat}. It is
  also easy to see that $d_e(t)$ definition
  in~\eqref{eqn:coding_scheme} is consistent with its jump updates in
  the algorithms. It remains to be shown that $\Infnorm{x_e(t)} \leq
  d_e(t)$ for all $t \geq t_0$.
  
  First, observe that as a consequence of the fact that $x(t) = 
  \hat{x}(t) + x_e(t)$, \eqref{eqn:xhat_evolve} and 
  \eqref{eqn:enc_err_dyn} we have that
  \begin{equation*}
    x(t) = e^{\bar{A} (t - t_k)} \hat{x}(t_k^-) + e^{A (t-t_k)} 
    x_e(t_k^-) + \int_{t_k}^t e^{A(t-s)} v(s) \mathrm{d}s.
  \end{equation*}
  Specifically, letting $z_k = \hat{x}(t_k^-)$ as in Step 2 of the
  algorithms, consider the solution $y(.)$ that starts at $z_{D,k}$ at
  $t_k$ and under zero disturbance, i.e.,
  \begin{equation*}
    y(t) = e^{\bar{A}(t-t_k)} z_k + e^{A(t-t_k)} (z_{D,k} - z_k )
  \end{equation*}
  and specifically from Step 6 of the algorithms, we have 
  $\hat{x}(r_k) = y(r_k)$. Further, given that $\Infnorm{ x(t_k^-) - 
    z_{D,k} } \leq \delta_k$, then we have
  \begin{align*}
    x_e(r_k) &= x(r_k) - y(r_k)
    \\
    &= e^{A \Delta_k} (x(t_k^-) - z_{D,k}) + \int_{t_k}^{r_k} 
    e^{A(r_k-s)} v(s) \mathrm{d}s,
  \end{align*}
  which implies that
  \begin{align*}
    \Infnorm{x_e(r_k)} \leq \Infnorm{e^{A \Delta_k}} \delta_k + 
    \frac{\nu}{\Enorm{A}} [ e^{\Enorm{A} \Delta_k } - 1 ] = d_e(r_k)
  \end{align*}
  which is exactly the quantity in Steps E7 and D7. For $t \in [r_k, 
  r_{k+1})$ for $k \in \integersnonnegative$ clearly 
  $\Infnorm{x_e(t)} \leq d_e(t)$, which completes the proof.
\end{proof}

\begin{proof}[Proof of Lemma~\ref{lem:Gamma1-T_sign}]
  Given~\eqref{eqn:Gamma1_def} and considering $b_0$ and $\epsilon_0$
  as parameters, it is sufficient to show that the equation
  $\tilde{b}(\tau, b_0, \epsilon_0) = 1$ has at most one solution in
  the interval $(0,\infty)$. Recall the functions $f_1$ and $f_2$ in
  the definition of $\tilde{b}$ of
  Lemma~\ref{lem:bound-b}. Considering $b_0$ and $\epsilon_0$ as
  parameters, note that the solutions of the equation $\tilde{b}(
  \tau, b_0, \epsilon_0 ) = 1$ are exactly those of $f_1(\tau, b_0,
  \epsilon_0) = f_2(\tau)$, while $\tilde{b}( \tau, b_0, \epsilon_0 )
  < 1$ iff $f_1(\tau, b_0, \epsilon_0) < f_2(\tau)$.
	
  Since $w > 0$, $f_2$ is monotonically increasing. Next, note that
  $\theta = \Enorm{A} + \beta/2 > 0$. Thus, $f_1$ contains the
  dominant exponent and hence there is a $\tau_1(\epsilon_0) \geq 0$
  such that $\dot{f}_1(\tau, b_0, \epsilon_0) > \dot{f}_2(\tau)$ for
  all $\tau > \tau_1(\epsilon_0)$ and $\dot{f}_1(\tau, b_0,
  \epsilon_0) < \dot{f}_2(\tau)$ for all $\tau < \tau_1(\epsilon_0)$.
  Thus, for each $b_0 \in [0,1)$, there exists a unique solution for
  $\tilde{b}(\tau, b_0, \epsilon_0) = 1$. For $b_0 = 1$ and
  $\tau_1(\epsilon_0) > 0$ there exists a unique solution to the
  problem. For $b_0 = 1$ and $\tau_1(\epsilon_0) \leq 0$ there exists
  no solution and $f_1(\tau, b_0, \epsilon_0) > f_2(\tau)$ for all
  $\tau > 0$. In each scenario the claim of the lemma follows
  directly.
\end{proof}

\begin{proof}[Proof of Lemma~\ref{lem:Gamma2-T_sign}]
  Considering $b_0$ and $\epsilon_0$ as parameters, it is sufficient
  to show that $\triggerChbar(\tau, b_0, \epsilon_0, \psi_0) = 1$ has
  a unique solution. We show the uniqueness through a contradiction
  argument. Suppose there exists a $\tau^* > \tilde{\Gamma}_2(b_0,
  \epsilon_0, \psi_0)$ such that $\triggerChbar(\tau^*, b_0,
  \epsilon_0, \psi_0) = 1$. Since $\triggerChbar$ is a continuous
  function, it must then have a local maximum in the time interval
  $[\tilde{\Gamma}_2(b_0, \epsilon_0), \tau^*]$. Notice
  from~\eqref{eqn:hbar_def} that the numerator of $\triggerChbar$ is a
  monotonously increasing function of time $\tau$. Next, since
  $\tilde{b} \mapsto \rhofun{T}{ \tilde{b} }$ is a decreasing function
  it follows that $\tilde{b}(., b_0, \epsilon_0)$ must have a local
  maximum in the time interval $[\tilde{\Gamma}_2(b_0, \epsilon_0),
  \tau^*]$. Thus, considering $b_0$ and $\epsilon_0$ as parameters,
  notice that
  \begin{equation*}
    \frac{\mathrm{d} \tilde{b}}{\mathrm{d} \tau} = - w \tilde{b} + W 
    \epsilon_0 e^{\theta \tau} + c_1 + c_2 ( e^{\Enorm{A} \tau} - 1),
  \end{equation*}
  while the second derivative is
  \begin{equation*}
    \frac{\mathrm{d}^2 \tilde{b}}{\mathrm{d} \tau^2} = - w 
    \frac{\mathrm{d} \tilde{b}}{\mathrm{d} \tau} + W \epsilon_0 
    \theta e^{\theta \tau} + c_2 \Enorm{A} e^{\Enorm{A} \tau}.
  \end{equation*}
  Then notice that the second derivative at any critical point of
  $\tilde{b}$ is positive since the first term vanishes at a critical
  point of $\tilde{b}$, the second term is positive for any $\tau$
  because $\theta > 0$ and $c_2 \Enorm{A} \geq 0$ by definition. Thus
  $\tilde{b}$ as a function of $\tau$ has no local maximum. Thus, this
  contradiction proves the result.
\end{proof}


\begin{IEEEbiography}[{\includegraphics[width=1in,
    height=1.25in,clip,keepaspectratio]
    {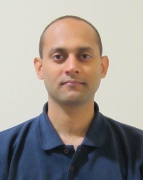}}]{Pavankumar Tallapragada}
  received the B.E. degree in Instrumentation Engineering from SGGS
  Institute of Engineering $\&$ Technology, Nanded, India in 2005,
  M.Sc. (Engg.) degree in Instrumentation from the Indian Institute of
  Science, Bangalore, India in 2007 and the Ph.D. degree in Mechanical
  Engineering from the University of Maryland, College Park in
  2013. He is currently a Postdoctoral Scholar in the Department of
  Mechanical and Aerospace Engineering at the University of
  California, San Diego. His research interests include
  event-triggered control, networked control systems, distributed
  control and transportation and traffic systems.
\end{IEEEbiography}

\begin{IEEEbiography}[{\includegraphics[width=1in,
  height=1.25in,clip,keepaspectratio]{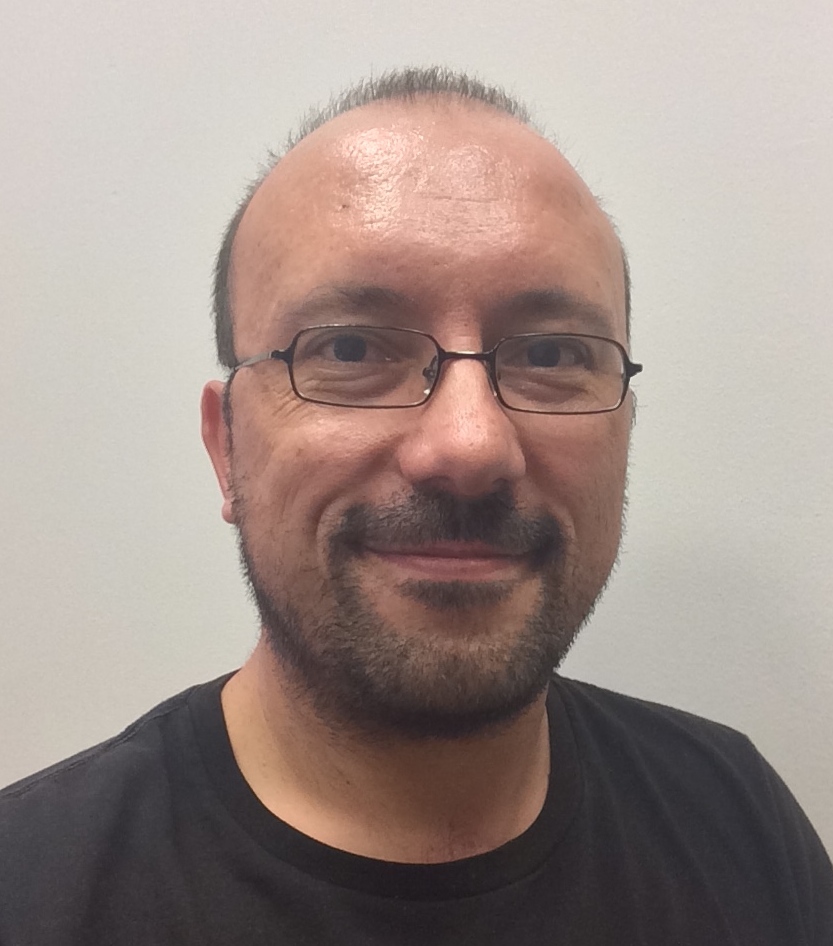}}]{Jorge
    Cort\'es}
  received the Licenciatura degree in mathematics from Universidad de
  Zaragoza, Zaragoza, Spain, in 1997, and the Ph.D. degree in
  engineering mathematics from Universidad Carlos III de Madrid,
  Madrid, Spain, in 2001.  He held post-doctoral positions with the
  University of Twente, Twente, The Netherlands, and the University of
  Illinois at Urbana-Champaign, Urbana, IL, USA. He was an Assistant
  Professor with the Department of Applied Mathematics and Statistics,
  University of California, Santa Cruz, CA, USA, from 2004 to 2007. He
  is currently a Professor in the Department of Mechanical and
  Aerospace Engineering, University of California, San Diego, CA,
  USA. He is the author of Geometric, Control and Numerical Aspects of
  Nonholonomic Systems (Springer-Verlag, 2002) and co-author (together
  with F. Bullo and S. Mart\'inez) of Distributed Control of Robotic
  Networks (Princeton University Press, 2009). He is an IEEE Fellow
  and an IEEE Control Systems Society Distinguished Lecturer.  His
  current research interests include distributed control, networked
  games, power networks, distributed optimization, spatial estimation,
  and geometric mechanics.
\end{IEEEbiography}

\end{document}